\title{Generalized Load Balancing and Clustering Problems with Norm Minimization}
\author{Shichuan Deng\\
IIIS, Tsinghua University, China\\
\href{mailto:dsc15@mails.tsinghua.edu.cn}{\texttt{dsc15@mails.tsinghua.edu.cn}}}
\date{}
\newcommand{\calA}{\mathcal{A}}
\newcommand{\calH}{\mathcal{H}}
\newcommand{\calC}{\mathcal{C}}
\newcommand{\calF}{\mathcal{F}}
\newcommand{\calE}{\mathcal{E}}
\newcommand{\calD}{\mathcal{D}}
\newcommand{\calQ}{\mathcal{Q}}
\newcommand{\calW}{\mathcal{W}}
\newcommand{\calL}{\mathcal{L}}
\newcommand{\calU}{\mathcal{U}}
\newcommand{\calP}{\mathcal{P}}
\newcommand{\calI}{\mathcal{I}}
\newcommand{\calJ}{\mathcal{J}}
\newcommand{\calT}{\mathcal{T}}
\newcommand{\calM}{\mathcal{M}}
\newcommand{\up}[2]{{#1}^{(#2)}}
\newcommand{\R}{\mathbb{R}}
\newcommand{\rma}{r_0}
\newcommand{\ww}{\widetilde{w}}
\newcommand{\wt}{\mathsf{wt}}
\newcommand{\E}{\mathbb{E}}
\newcommand{\opt}{\mathsf{OPT}}
\newcommand{\etal}{\textit{et~al.}\xspace}
\newcommand{\dmax}{d_\mathsf{max}}
\newcommand{\spn}{\mathrm{span}}
\newcommand{\primal}[2]{\mathsf{Pl}_{#1}(#2)}
\newcommand{\dual}[2]{\mathsf{Dl}_{#1}(#2)}
\newcommand{\da}{{\downarrow}}
\newcommand{\queue}{\mathsf{queue}}
\newcommand{\almu}{{\alpha,\mu}}
\newcommand{\floor}[1]{\left\lfloor #1 \right\rfloor}
\newcommand{\ceil}[1]{\left\lceil #1 \right\rceil}
\newcommand{\topl}{\textrm{Top}_\ell}
\newcommand{\topq}[2]{\textrm{Top}_{#1}^{(#2)}}
\newcommand{\nextp}[1]{\mathsf{next}(#1)}
\newcommand{\prevp}[1]{\mathsf{prev}(#1)}
\newcommand{\counted}{\mathsf{ct}}
\newcommand{\que}{\mathsf{queue}}
\newcommand{\pos}{\mathsf{POS}}
\newcommand{\rbt}{R,B,T^\star_\ell}
\newcommand{\rbcalt}{R,B,\calT}
\newcommand{\pmax}{p_{\mathsf{max}}}
\newcommand{\pavq}{p_{\mathsf{av}}^{(q)}}
\newcommand{\algb}{\textsf{ALG-Bundle}\xspace}
\newcommand{\mnms}{\textsf{Max-Norm Makespan}\xspace}
\newcommand{\fmnms}{\textsf{Fair Max-Norm Makespan}\xspace}
\newcommand{\mnk}{\textsf{Max-Norm $k$Center}\xspace}
\newcommand{\mnknap}{\textsf{Max-Norm KnapCenter}\xspace}
\newcommand{\mnmat}{\textsf{Max-Norm MatCenter}\xspace}
\newcommand{\fmnk}{\textsf{Fair Max-Norm $k$Center}\xspace}
\newcommand{\mon}{maximum ordered norm\xspace}
\newcommand{\smn}{symmetric monotone norm\xspace}
\newtheorem{theorem}{Theorem}
\newtheorem{lemma}[theorem]{Lemma}
\newtheorem{corollary}[theorem]{Corollary}
\newtheorem{claim}[theorem]{Claim}
\newtheorem{observation}[theorem]{Observation}
\theoremstyle{definition}
\newtheorem{definition}[theorem]{Definition}
\theoremstyle{remark}
\newtheorem*{remark*}{Remark}
\newtheorem*{note*}{Note}
\begin{document}

\maketitle
\begin{abstract}
    In many fundamental combinatorial optimization problems, a feasible solution induces some real \emph{cost vectors} as an intermediate result, and the optimization objective is a certain function of the vectors. For example, in the problem of \emph{makespan minimization on unrelated parallel machines}, a feasible job assignment induces a vector containing the sizes of assigned jobs for each machine, and the goal is to minimize the $L_\infty$ norm of $L_1$ norms of the vectors. Another example is \emph{fault-tolerant $k$-center}, where each client is connected to multiple open facilities, thus having a vector of distances to these facilities, and the goal is to minimize the $L_\infty$ norm of $L_\infty$ norms of these vectors.

    In this paper, we study the \emph{maximum of norm} problem. Given an arbitrary \smn $f$, the objective is defined as the \emph{maximum} ($L_\infty$ norm) \emph{of $f$-norm} values of the induced cost vectors. This versatile formulation captures a wide variety of problems, including makespan minimization, fault-tolerant $k$-center and many others. We give concrete results for load balancing on unrelated parallel machines and clustering problems, including constant-factor approximation algorithms when $f$ belongs with a certain rich family of norms, and $O(\log n)$-approximations when $f$ is general and satisfies some mild assumptions.
    
    We also consider the aforementioned problems in a generalized fairness setting. As a concrete example, the insight is to prevent a scheduling algorithm from assigning \emph{too many} jobs consistently on any machine in a job-recurring scenario, and causing the machine's controller to fail. Our algorithm needs to stochastically output a feasible solution minimizing the objective function, and satisfy the given marginal fairness constraints.
\end{abstract}

\section{Introduction}

In many fundamental optimization problems, the objective is a function of several intermediate cost vectors induced by a feasible solution. One prominent example is load balancing on unrelated parallel machines, where a solution is an assignment of jobs to machines that induces a vector of job sizes for each machine. Another example is the $k$-clustering problem, where one needs to open at most $k$ facilities and connect each client to one or several facilities, resulting in a vector of assignment costs for clients. Depending on the formulation of the problem instance, various approaches of aggregating the vectors are used for evaluating the quality of the solution. In makespan minimization, the objective is the $L_\infty$ norm of $L_1$ norms of machine cost vectors. Many approximations are known~\cite{chakrabarty2015restricted,lenstra1990approximation,shmoys1993approximation,svensson2012santa}, with the best approximation ratio of 2 notoriously difficult to beat. This problem is also studied extensively for the more general ``$L_q$ of $L_1$'' cases, with the first 2-approximation given by Azar and Epstein in~\cite{azar2005convex}, and following improvements in~\cite{kumar2009unified,makarychev18solving}. Chakrabarty and Swamy~\cite{chakrabarty2019approximation,chakrabarty2019simpler} consider arbitrary {\smn}s of $L_1$ norms for load balancing and obtain constant approximations under mild assumptions on the input norm. Similarly, in clustering problems, $k$-center evaluates the solution by the $L_\infty$ norm of costs and $k$-median uses $L_1$ norm. These are two of the most studied clustering problems, and have a long line of research~\cite{arya2004local,byrka2017improved,charikar2002constant,charikar2012dependent,hochbaum1985best,hochbaum1986unified,jain2001approximation,li2013approximating}. Recently, ordered $k$-median has attracted a lot of interests as a generalization that unites $k$-center and $k$-median, and several constant approximations are developed~\cite{aouad2019ordered,byrka2018constant,chakrabarty2018interpolating}, with the current best ratio ($5+\epsilon)$ achieved by Chakrabarty and Swamy in~\cite{chakrabarty2019approximation}. In that paper, they also study $k$-clustering using general {\smn}s as the objective function, and obtain a constant approximation under mild assumptions of the input norm.

Motivated by the recent developments of more generalized functions in optimization problems (e.g.~\cite{chakrabarty2018generalized,chakrabarty2019approximation,ibrahimpur2020approximation}), we initiate the study of a natural and far-reaching general objective, capable of capturing many other well-studied problems. We consider the maximum of norm problem, where for a given \smn $f$, the goal is to find a feasible solution to the underlying combinatorial problem and minimize the maximum of $f$-norm values when evaluated on the induced cost vectors.\footnote{For simplicity, we always assume the input dimension of $f$ is larger and pad the input vector with 0s.} There are several cases of $f$ that are of particular interest, including $L_q$ norms where $q=\{1,2,\infty\}$, $\topl$ norms with $\ell\in\mathbb{Z}_+$ that sum up the $\ell$ largest entries in a vector, as well as the ordered norm $f(\vec{v})=w^T\vec{v}^\da$, where $w$ is a non-negative non-increasing vector and $\vec{v}^\da$ is the non-increasingly sorted version of $\vec{v}$. We note that $L_q$ norms and $\topl$ norms can be joint together under the same umbrella as $\topq{\ell}{q}$ norms, which is the $L_q$ norm of the $\ell$ entries in the vector that have the largest absolute values. The maximum of a finite number of ordered norms, known as a \emph{\mon}, is also of great significance in the family of {\smn}s. We refer to~\cite{chakrabarty2019approximation} and the references therein for a helpful discussion of such norms, as well as Chapter IV of the book~\cite{bhatia2013matrix} by Bhatia.

In our first concrete setting, we propose the following formal definitions for load balancing and clustering problems.\footnote{The knapsack and matroid variants of clustering are defined similarly except that we replace the cardinality constraint $|S|\leq k$ with a matroid constraint and a knapsack constraint. In a matroid constraint, we require $S\in\calI$ for $\calM=(\calF,\calI)$ being the given matroid; in a knapsack constraint, every $i\in\calF$ has a weight $\wt_i\geq0$ and we require $\wt(S)\leq W$.} Throughout this paper, we use $\calJ,[M]$ to denote the set of jobs and machines, respectively, also $\calC,\calF$ to denote the set of clients and candidate facility locations, with $d$ being the metric on $\calC\cup\calF$.

\begin{definition}(\mnms) Given jobs $\calJ$, machines $[M]$, and a \smn $f:\mathbb{R}_{\geq0}^{|\calJ|}\rightarrow\mathbb{R}_{\geq0}$, suppose the running time of job $j$ on machine $i$ is $p(i,j)\geq0$. The goal is to find a job assignment $\sigma$ that minimizes the objective
$\max_{i\in[M]}f(\vec{p}(i,\sigma))$, where $\vec{p}(i,\sigma):=\{p(i,j):\sigma(j)=i\}$.
\end{definition}

\begin{definition}(\mnk) Given a metric space $(\calC\cup\calF,d)$, integers $k,m\in\mathbb{Z}_+$, $l_j\leq r_j$ for $j\in\calC$ and a \smn $f:\mathbb{R}_{\geq0}^k\rightarrow\mathbb{R}_{\geq0}$, the goal is to identify a subset $S\subseteq \calF,|S|\leq k$, assign a subset of open facilities $S_j\subseteq S$ to every $j\in\calC$ satisfying $|S_j|\in[l_j,r_j],\sum_{j\in\calC}|S_j|\geq m$, and minimize the objective
$\max_{j\in\calC}f(\vec{d}(j,S_j))$, where $\vec{d}(j,S_j):=\{d(j,i):i\in S_j\}$.
\end{definition}

\begin{remark*}
We note that our definition of \mnk attains the utmost flexibility and generality. It captures the robust $k$-center problem by setting every $l_j=0,r_j=1$, as well as fault-tolerant $k$-center by letting $l_j=r_j,m=0$ and $f$ be the $L_\infty$ norm. One may also set $l_j=0,r_j=k$, rendering the individual constraints totally inert and letting the instance be controlled mainly by $k$ and $m$. We also argue that the current coverage constraint $m$ better suits the fault-tolerant and versatile nature of such multi-connection problems.
\end{remark*}

As a second result, we also consider the problems above in a fairness setting. In real world applications, the scenario of recurring scheduling and clustering tasks is quite plausible, and obtaining fairness in the long term is of great concern. For example, in load balancing problems, a deterministic algorithm may stress a single machine repeatedly with a large number of jobs and consequently reduce the reliability of certain parts, regardless of the actual makespan; in clustering problems, clients that are placed relatively poorly may always end up getting the least possible number of $l_j$ connections, which reduces the tolerance with errors and harms the user experience. We refer to~\cite{anegg2020technique,harris2019lottery} and the references therein for more motivating discussions.

Harris~\etal consider fair robust $k$-center in~\cite{harris2019lottery}, where a stochastic solution to robust $k$-center is required, which also ensures that the marginal probability of covering any $j$ is at least $p_j\in[0,1]$. They give a 2-pseudo-approximation, and also constant multi-criteria approximations for the matroid and knapsack variants. Anegg~\etal\cite{anegg2020technique} study the fair colorful $k$-center problem, where the clients are labeled with various colors representing different demographics. The stochastic clustering solution needs to satisfy a hard lower bound of covering $m_l$ clients with color $l$, as well as cover any client $j$ with probability at least $p_j\in[0,1]$. They give a true 4-approximation for this problem, improving the previous pseudo-approximation for fair robust $k$-center.

In our study, we generalize the notion of fairness. In \emph{fair maximum-of-norm makespan minimization} (abbreviated \fmnms), each machine is associated with a real number $e_i\geq0$, and we need to stochastically assign jobs so that the expected number of jobs on machine $i$ is at most $e_i$. In \emph{fair maximum-of-norm $k$-center} (abbreviated \fmnk), each client $j$ submits a real value $e_j\in[l_j,r_j]$, and we need to stochastically decide which facilities to open and how to establish connections for all clients, such that the expected number of connections for $j$ is at least $e_j$. We note that the coverage constraint $m$ is replaced by $\{e_j:j\in\calC\}$ in this problem, considering that both are quantitative constraints on the number of client-facility connections in the solution, and there is a technical difficulty in our approach when incorporating both.

\begin{definition}(\fmnms) Given jobs $\calJ$, machines $[M]$, real values $e_i\geq0$ for $i\in[M]$ and a \smn $f:\mathbb{R}_{\geq0}^{|\calJ|}\rightarrow\mathbb{R}_{\geq0}$, suppose the running time of job $j$ on machine $i$ is $p(i,j)\geq0$. The goal is to find the minimum value $\opt\geq0$, such that there exists a distribution $\calD$ on all possible assignments, satisfying $\forall i\in[M],\,\E_{\sigma\sim\calD}[|\sigma^{-1}(i)|]\leq e_i$ and $\Pr_{\sigma\sim\calD}[f(\vec{p}(i,\sigma))\leq\opt]=1$.
\end{definition}

\begin{definition}(\fmnk) Given a metric space $(\calC\cup\calF,d)$,  integers $k\in\mathbb{Z}_+$, $l_j\leq r_j$, real values $e_j\in[l_j,r_j]$ for $j\in\calC$ and a \smn $f:\mathbb{R}_{\geq0}^k\rightarrow\mathbb{R}_{\geq0}$, the goal is to find the minimum value $\opt\geq0$, such that there exists a distribution $\calD$ on subsets of $\calF$ and an efficient probabilistic algorithm for assigning $\{S_j\subseteq S:j\in\calC\}$ given $S\sim\calD$, such that: (1) $\forall j\in\calC$, $\Pr[|S_j|\in[l_j,r_j]]=1$ (coverage constraint); (2) $\forall j,\,\E[|S_j|]\geq e_j$ (fairness constraint); (3) $\forall j,\,\Pr[f(\vec{d}(j,S_j))\leq\opt]=1$ (cost constraint); (4) $\Pr[|S|\leq k]=1$ (cardinality constraint).
\end{definition}


\subsection{Our Results}

Our first algorithmic result contains approximation algorithms for \mnms and \mnk with varying approximation factors. We also adapt our algorithm to the more general clustering criteria with matroid and knapsack constraints, as can be found in~\cref{app:matcenter:ordered} and~\cref{app:knapcenter:ordered}.

\begin{theorem}\label{theorem:load:main}
    There exists a polynomial time algorithm for \mnms, such that
    \begin{enumerate}
        \item\label{theorem:load:main1} If $f$ is a \emph{$\topq{\ell}{q}$} norm, the approximation factor is $(4^{1/q}+\epsilon)$ in running time $poly(n)/\epsilon$;
        \item\label{theorem:load:main2} If $f$ is a \mon, the approximation factor is $O(\log n)$.
    \end{enumerate}
\end{theorem}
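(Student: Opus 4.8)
The plan is to guess $\opt$ by binary search and, for each candidate threshold $\tau$, to reduce the per‑machine requirement $f(\vec p(i,\sigma))\le\tau$ to a handful of linear ``load'' constraints, solve the resulting assignment LP (a relative of the Lenstra--Shmoys--Tardos / generalized‑assignment LP), and round it so that each induced cost vector grows only by a controlled amount. After normalizing $f$ so that $f(\vec e_1)=1$ (where $\vec e_1$ is the first standard basis vector) we have $f(\vec v)\ge\|\vec v\|_\infty$, so in every LP below we may impose the filtering $x_{ij}=0$ whenever $p(i,j)>\tau$ without destroying feasibility against an optimal assignment $\sigma^\star$; by the same token a single job on a machine is never the bottleneck.

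\smallskip
\noindent\textbf{Part 1 (the $\topq{\ell}{q}$ case).} Work with the $q$‑th powers $v_j:=p(i,j)^q$, so that $\topq{\ell}{q}(\vec p(i,\sigma))^q=\topt{\ell}\bigl(\{v_j:\sigma(j)=i\}\bigr)$, and recall the identity $\topt{\ell}(\vec v)=\min_{\theta\ge0}\bigl(\ell\theta+\sum_j(v_j-\theta)^+\bigr)$, which linearizes the top‑$\ell$ sum once $\theta$ is fixed. The crucial point is that the \emph{uniform} threshold $\theta=\tau^q/\ell$ is, up to constants, as good as the per‑machine optimum under $\sigma^\star$: a machine of $\sigma^\star$ holds at most $\ell$ jobs with $p(i,j)^q>\tau^q/\ell$, and those ``large'' jobs form a subset of its top $\ell$, hence have total $q$‑th power at most $\topq{\ell}{q}(\vec p(i,\sigma^\star))^q\le\tau^q$, so its overflow $\sum_j(p(i,j)^q-\tau^q/\ell)^+$ is at most $\tau^q$. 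Thus $\sigma^\star$ is feasible for the LP consisting of $x_{ij}\ge0$, $\sum_i x_{ij}=1$, the filtering above, and the single constraint $\sum_j x_{ij}\,(p(i,j)^q-\tau^q/\ell)^+\le\tau^q$ per machine. I solve this LP and round it by the standard bipartite/generalized‑assignment rounding, which assigns every job and increases each machine's overflow by at most the overflow‑size of one extra job, i.e.\ by at most $\tau^q$; together with the reserved term $\ell\cdot(\tau^q/\ell)=\tau^q$ this gives $\topq{\ell}{q}(\vec p(i,\sigma))^q\le 4\tau^q$, i.e.\ the claimed $4^{1/q}$ ratio, the $\epsilon$ being the precision of the search for $\tau$ (which for $q\ge 2$ need not be a ratio of input quantities).

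\smallskip
\noindent\textbf{Part 2 ($f$ a \mon).} Write $f=\max_{s\le t} g_s$ with $g_s(\vec v)=w_s^\top\vec v^\da$. Using $w^\top\vec v^\da=\sum_{\ell\ge1}(w_\ell-w_{\ell+1})\topt{\ell}(\vec v)$ (with $w_{n+1}:=0$; a nonnegative combination since $w$ is non‑increasing) and the elementary estimate $\topt{2\ell}(\vec v)\le 2\,\topt{\ell}(\vec v)$, I round every index $\ell$ down to the nearest power of two; this replaces each $g_s$, up to a factor $2$, by $\sum_{\ell'}C^{(s)}_{\ell'}\topt{\ell'}(\vec v)$ over the $O(\log n)$ dyadic scales $\ell'$. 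Then $g_s(\vec p(i,\sigma^\star))\le\tau$ for all $s$ forces $\topt{\ell'}(\vec p(i,\sigma^\star))\le\tau'_{\ell'}:=\min_{s:\,C^{(s)}_{\ell'}>0}\tau/C^{(s)}_{\ell'}$ at every scale, so the LP that, for each dyadic scale $\ell'$, carries the filtering $x_{ij}=0$ when $p(i,j)>\tau'_{\ell'}$ together with the overflow constraint $\sum_j x_{ij}\,(p(i,j)-\tau'_{\ell'}/\ell')^+\le\tau'_{\ell'}$ (alongside $\sum_i x_{ij}=1$) is feasible for $\sigma^\star$. After rounding so that $\topt{\ell'}(\vec p(i,\sigma))\le O(\tau'_{\ell'})$ at every scale simultaneously, each $g_s(\vec p(i,\sigma))\le O\bigl(\sum_{\ell'}C^{(s)}_{\ell'}\tau'_{\ell'}\bigr)\le O\bigl(\sum_{\ell'}\tau\bigr)=O(\tau\log n)$, whence $f(\vec p(i,\sigma))=\max_s g_s(\vec p(i,\sigma))=O(\log n)\,\tau$; binary search over $\tau$ finishes.

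\smallskip
\noindent\textbf{Main obstacle.} The delicate step is the rounding in Part 2: the LP now has $O(\log n)$ overflow (and filtering) constraints per machine, and rounding them one at a time could blow up the loss to $O(\log^2 n)$. The plan is to round all scales at once by an iterated‑LP/extreme‑point argument that exploits the per‑scale filtering---a job placed on a machine never exceeds that scale's target, so dropping a tight constraint costs only $O(1)$ times its target---so that the rounded assignment stays within $O(1)$ of the target at \emph{every} scale, and the single $O(\log n)$ factor is exactly the number of dyadic scales in the union bound for $g_s$. A secondary thing to verify is that the dyadic coarsening of the $\topt{\ell}$‑decomposition and the uniform‑threshold linearization keep $\sigma^\star$ feasible, which the estimates above are arranged to guarantee.
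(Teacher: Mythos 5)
Your proposal follows essentially the same route as the paper for both parts: guess a scale, reduce the nonlinear norm constraint to linear per-machine ``overflow'' constraints via the identity $\topt{\ell}(\vec v)=\min_{\theta\ge 0}\bigl(\ell\theta+\sum_j(v_j-\theta)^+\bigr)$, solve an assignment LP, and round by the generalized-assignment (Shmoys--Tardos) technique. Your Part~1 analysis is in fact slightly tighter than stated: with LP overflow $\le\tau^q$, rounding overshoot $\le\tau^q$ (one extra job with $p(i,j)^q\le\tau^q$), and the reserved term $\ell\cdot(\tau^q/\ell)=\tau^q$, the total is $3\tau^q$, not $4\tau^q$; the paper's own accounting of $2R^q+B^q+\ell T^{\star q}_\ell$ is similar in spirit and also lands at $4^{1/q}$, so the claim is safe either way.

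The step you flag as the main obstacle---the simultaneous rounding across $O(\log n)$ scales in Part~2---is resolved more simply than you anticipate, and this is precisely the structural fact the paper's proof rests on. No iterated-LP argument is needed. The Shmoys--Tardos rounding sorts each machine's fractional copies by $p(i,j)$ and assigns at most one job per copy; the resulting shifting bound (the maximum job size on copy $t$ is at most the average on the adjacent copy) is a statement purely about this ordering, and therefore applies simultaneously to every nonnegative monotone function of $p(i,j)$, in particular to each of the per-scale overflows $(p(i,j)-\theta_{\ell'})^+$. Hence a single application of the bipartite-matching rounding already gives $\topt{\ell'}(\vec p(i,\hat\sigma))=O(\tau'_{\ell'})$ for every dyadic scale $\ell'$ at once: the one ``extra job'' of overshoot per machine serves all $O(\log n)$ scales, and nothing compounds to $O(\log^2 n)$. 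The paper packages this as a single weighted per-machine constraint together with a gap lemma bounding $\sum_{\ell\in\pos}(\ww_\ell-\ww_{\nextp{\ell}})\ell\,T_\ell$ by $O(\log n)$ times the (sparsified) optimum, plus a matching lower-bound example showing $\Omega(\log n)$ is unavoidable for this sparsification; your per-scale decoupling $\sum_{\ell'}C^{(s)}_{\ell'}\tau'_{\ell'}\le O(\log n)\,\tau$ is the same $O(\log n)$ in different clothing. Once you replace the tentative iterated-LP sketch with this observation, your outline is complete and matches the paper's argument.
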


\begin{theorem}\label{theorem:center:main}
    There exists a polynomial time algorithm for \mnk, such that
    \begin{enumerate}
        \item\label{theorem:center:main1} If $f$ is a \emph{$\topq{\ell}{q}$} norm, the approximation factor is $(3\cdot 4^{1/q}+\epsilon)$ in running time $poly(n)/\epsilon$;
        \item\label{theorem:center:main2} If $f$ is a \mon, the approximation factor is $O(\log n)$.
    \end{enumerate}
\end{theorem}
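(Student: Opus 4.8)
The plan is to reduce \mnk to a clustering problem with a simpler, "flattened" objective that we already know how to solve via standard $k$-center/LP-rounding machinery, losing only a constant factor in the reduction. Concretely, I would first guess the optimal value $\opt$ (up to $(1+\epsilon)$ by geometric search over the polynomially many relevant distance values, costing the $\poly(n)/\epsilon$ factor). Given the guess, the key structural observation is that for a $\topq{\ell}{q}$ norm, the constraint $f(\vec d(j,S_j))\le\opt$ can be enforced by a collection of \emph{threshold-radius} requirements: client $j$ must receive at least $r_j$ (really, $|S_j|\in[l_j,r_j]$) facilities, and among them, for each $t$ the $t$-th closest must lie within some radius $\rho_t$ determined by $\opt$ and the norm weights. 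Since $f$ is $\topq{\ell}{q}$, only the $\ell$ farthest of the assigned facilities matter, and the worst case is balanced — giving each of the $\ell$ relevant slots radius roughly $\opt/\ell^{1/q}$ — so up to a constant one can collapse this to a single radius $\rho=\Theta(\opt)$ within which $j$ needs $\min(r_j,\cdot)$ facilities and an outer radius within which it needs $l_j$. This is where the $4^{1/q}$ enters on the load-balancing side and where the extra factor of $3$ appears here, from the usual metric blow-up when a client is reassigned to a facility opened near its fractional neighbor.

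Next I would set up the natural LP relaxation: variables $y_i$ for opening facility $i$ and $x_{ij}$ for connecting $j$ to $i$, with $\sum_i x_{ij}\in[l_j,r_j]$, $x_{ij}\le y_i$, $\sum_i y_i\le k$, $\sum_j\sum_i x_{ij}\ge m$, and $x_{ij}=0$ whenever $d(i,j)>\rho$ (the guessed threshold). The rounding follows the fault-tolerant / robust $k$-center template: cluster the clients greedily by distance (a maximal "independent set" of clients pairwise far apart, each acting as a cluster center), move fractional mass of each client onto its cluster center, and then round the resulting consolidated LP — which on the cluster centers is an assignment/transportation-type polytope — to an integral solution opening at most $k$ facilities and meeting the coverage bound $m$ and the per-client cardinality bounds $[l_j,r_j]$. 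The $l_j$ lower bounds and the global $m$ constraint together make this a bipartite-degree-constrained rounding; I expect to invoke an iterative-rounding or flow-integrality argument so that all the $[l_j,r_j]$ and $m$ constraints are preserved exactly and only $k$ is (if necessary) relaxed by an additive constant that can be absorbed, or preserved exactly by being slightly more careful. Each client then inherits facilities either directly or through its cluster center, incurring at most $3\rho$ in distance, hence $f$-cost at most $3\cdot 4^{1/q}\opt$ after accounting for the norm flattening.

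For part (2), when $f$ is a \mon, there is no single clean threshold radius because the norm is a maximum of many ordered norms with arbitrary weight vectors. Here I would use the standard reduction of an ordered norm to a logarithmic number of $\topl$ (i.e., $\topq{\ell}{1}$) norms by bucketing the weight multipliers into powers of two — losing an $O(\log n)$ factor — so that handling the \mon reduces to simultaneously satisfying $O(\log n)$ $\topl$-type radius constraints per client. Running the thresholded rounding above with the union of these constraints, and charging the approximation loss to the coarsest bucket, yields the $O(\log n)$ bound. The main obstacle I anticipate is the rounding step in the presence of \emph{both} the lower bounds $l_j$ and the global coverage constraint $m$ together with the hard cardinality bound $k$: generic iterative rounding of such a polytope can violate one constraint by an additive unit, and I will need to argue (via the structure of the cluster-center assignment polytope, or by merging the $m$-constraint into the objective through Lagrangian/guessing which clients are "fully served") that the violation can be confined to a slack we already have, so that the final solution is genuinely feasible. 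The metric-reassignment and norm-flattening parts are, by contrast, routine once the threshold radii are correctly defined.
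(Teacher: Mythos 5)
Your high-level intuition correctly identifies the right parameters (a guess of the optimum, a threshold at roughly the $\ell$-th largest distance, a $3$-hop metric blow-up), but the step where you ``collapse this to a single radius $\rho=\Theta(\opt)$'' is a genuine gap, not a routine detail. If every connection is only required to lie within radius $\Theta(\opt)$, then the $\topq{\ell}{q}$-norm of the resulting cost vector can be as large as $\Theta(\ell^{1/q}\cdot\opt)$, which is unbounded in $\ell$; if instead you insist on $\rho\approx\opt/\ell^{1/q}$, the LP is infeasible because the optimal solution itself may have a connection as long as $R$, and $R$ can be much larger than $\opt/\ell^{1/q}$ (take $\ell=2,q=1$ with distances $(10,0.01,0.01)$). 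The norm cannot be captured by per-slot radius constraints alone because vectors such as $(5,5,0)$ and $(7,1,1)$ have comparable Top-$2$ norms but are not comparable coordinate-wise. The paper resolves this with a two-threshold LP that keeps, alongside the count constraint $\sum_{d(i,j)\geq T_\ell^\star}x_{ij}\leq\ell$, the \emph{linear $L_q^q$-mass constraint} $\sum_{d(i,j)\geq T_\ell^\star}x_{ij}\,d(i,j)^q\leq B^q$; this second constraint is exactly what your radius reduction discards, and it is what makes the $4^{1/q}$ factor (rather than $\ell^{1/q}$) appear in the analysis.

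The rounding step is also under-specified in a way that matters. Consolidating fractional mass onto cluster centers and then invoking generic bipartite rounding does not cleanly give each client $\floor{u_j}$ \emph{distinct} open facilities, and you yourself flag that iterative rounding of the polytope with $[l_j,r_j]$, $m$, and $k$ constraints risks an additive violation. The paper's bundle construction (\algb) is specifically engineered so that this does not happen: it partitions the split facility mass into disjoint \emph{full} and \emph{partial} bundles with ``profitable factors'' $n_U$, and the auxiliary LP it produces is the intersection of two laminar families, hence integral with no constraint violation (\cref{lemma:aux:k}). This is not a cosmetic difference; it is the mechanism by which both $m$ and $k$ are satisfied exactly. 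Finally, for the max-ordered case, the $O(\log n)$ factor in the paper does not come from bucketing the weights; the sparsified norm $\tilde f$ is a $2$-approximation of $f$. The $O(\log n)$ loss comes from summing, over the $O(\log n)$ positions $\ell\in\pos$, the term $(\ww_\ell-\ww_{\nextp{\ell}})\ell T_\ell$ in the conic decomposition, each of which can only be bounded by $O(\widetilde\opt)$ individually (\cref{lemma:center-gap}). Attributing the loss to a weight-bucketing step gives the right order but misplaces the source, which obscures why the bound appears tight (see the remark after \cref{lemma:load-gap}).
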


In the second part, we extend our technique to the fairness setting, and show that the corresponding problems admit algorithms with similar approximation results.

\begin{theorem}\label{theorem:fair:load:topl}
	There exists a polynomial time algorithm for \fmnms that
    \begin{enumerate}
        \item\label{theorem:fair:load:topl1} If $f$ is a \emph{$\topq{\ell}{q}$} norm, the approximation factor is $(4^{1/q}+\epsilon)$ in running time $poly(n)/\epsilon$;
        \item\label{theorem:fair:load:topl2} If $f$ is a \mon, the approximation factor is $O(\log n)$.
    \end{enumerate}
\end{theorem}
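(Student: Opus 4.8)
The plan is to reuse the algorithm behind \cref{theorem:load:main} essentially verbatim, observing that the marginal fairness requirements $\E_{\sigma\sim\calD}[|\sigma^{-1}(i)|]\le e_i$ are \emph{linear} in the assignment variables and therefore slot directly into the assignment LP; the only genuine change is to replace the deterministic rounding step by a randomized, marginal-preserving one. For Part~\ref{theorem:fair:load:topl1}, first I would guess $\opt$ by a geometric search over a $(1+\epsilon)$-spaced grid of candidate values, which accounts both for the $\poly(n)/\epsilon$ running time and for the extra $(1+\epsilon)$ in the factor. For a fixed guess $T$, write the usual generalized-assignment-type LP in variables $x_{ij}\ge 0$: set $x_{ij}=0$ whenever $p(i,j)>T$ (a single such job already blows any $\topq{\ell}{q}$ bound $T$), require $\sum_i x_{ij}=1$ for every job, and add two families of per-machine constraints — the \emph{fairness} constraints $\sum_j x_{ij}\le e_i$, and the \emph{relaxed cost} constraints encoding $f(\vec p(i,\sigma))\le T$ exactly as in the non-fair algorithm (for a $\topq{\ell}{q}$ norm, split jobs at the threshold $\tau=T/\ell^{1/q}$, cap the fractional number of ``big'' jobs with $p(i,j)>\tau$ routed to $i$, and cap the $q$-th power mass of the ``small'' jobs routed to $i$). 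If the LP is infeasible, reject $T$; otherwise round.

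The rounding is where the fair version diverges from \cref{theorem:load:main}. Instead of moving to a vertex of the assignment polytope, I would round $x$ with a randomized dependent-rounding procedure on the bipartite job--machine incidence structure — in the spirit of the lottery/dependent-rounding techniques used for fair clustering in~\cite{harris2019lottery,anegg2020technique}, adapted to the scheduling polytope — guaranteeing that (i) each marginal is preserved exactly, $\Prob[\sigma(j)=i]=x_{ij}$, hence $\E[|\sigma^{-1}(i)|]=\sum_j x_{ij}\le e_i$, and (ii) with probability $1$ every machine ends up with at most one ``big'' job beyond its fractional big-mass and a sub-collection of its fractional small-job support. Property (ii) is precisely what the deterministic analysis behind \cref{theorem:load:main} uses to certify that the realized $\topq{\ell}{q}$ norm on each machine is at most $(4^{1/q}+\epsilon)T$, so the same estimate holds pointwise over the sample space. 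The sampled $\sigma$, together with the sampler, is the distribution $\calD$ required by \fmnms.

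For Part~\ref{theorem:fair:load:topl2} I would reduce to Part~\ref{theorem:fair:load:topl1}. A \mon is sandwiched between $f$ and $O(\log n)\cdot f$ by a norm $g$ that is a non-negative combination of $O(\log n)$ $\topl$-type norms (the standard dyadic bucketing of the weight vectors); since this reduction only rewrites the objective and never perturbs the assignment polytope, it leaves the fairness constraints $\sum_j x_{ij}\le e_i$ untouched. I would then impose, in a single LP, the $q=1$ relaxed-cost constraint for each of the $O(\log n)$ buckets on each machine, round once with the same marginal-preserving scheme, and pay the extra $O(\log n)$ factor only in the per-machine cost, not in the fairness marginals.

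The hard part will be the rounding in the second paragraph: I need one random process that \emph{simultaneously} preserves the machine-degree marginals (for fairness) and enforces the per-machine norm bound with probability $1$, including the ``at most one extra big job'' phenomenon responsible for the $4^{1/q}$ factor. Deterministic vertex rounding delivers the latter but destroys the former, whereas independent rounding does the opposite and can stack big jobs on a single machine. The resolution I anticipate is a two-phase rounding — a randomized bipartite $b$-matching (or rotation/cycle-cancelling) step for the big jobs that keeps marginals while never exceeding the rounded-up big-mass per machine, followed by a marginal-preserving assignment of the residual fractional small-job solution — and checking that the two phases compose without breaking $\sum_i x_{ij}=1$ or inflating the norm past $(4^{1/q}+\epsilon)T$ is the main technical burden, mirroring what the proof of \cref{theorem:load:main} already does in the deterministic setting.
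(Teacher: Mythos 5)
Your proposal is correct in outline, but it takes a genuinely different route than the paper. The paper handles all four fair variants (load balancing and clustering, Top$^{(q)}_\ell$ and max-ordered) through the round-and-cut framework of Anegg~\etal~\cite{anegg2020technique}: it never attempts to round a single LP solution in a marginal-preserving way. Instead it writes a configuration LP \ref{lp:load-primal} over the set $\calF_l(B)$ of all integral assignments feasible for \emph{some} guess $(R,T^\star_\ell)$, passes to the scaled dual polytope \ref{lp:fair:load:q}, and runs the ellipsoid method against it. The separation oracle (\cref{lemma:fair:load:core}) invokes Shmoys--Tardos on a small modification of the makespan LP --- augmented by the cut $\sum_i\alpha_i\sum_j x_{ij}\le\mu-\eta$ --- and returns a single \emph{deterministic} assignment $\sigma\in\calE_l(4^{1/q}B)$ violating the current $(\alpha,\mu)$. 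Emptiness of $\calQ_l(4^{1/q}B)$ is then certified; the polynomially many assignments $\calH$ collected along the way are fed back into \ref{lp:load-primal} restricted to $\calH$, whose optimal $\lambda$ is the output distribution. So the fairness marginals are never enforced on a single rounded solution; they emerge from the LP over the collected $\sigma$'s.

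Your route --- add the linear constraints $\sum_j x_{ij}\le e_i$ to the assignment LP, construct the Shmoys--Tardos auxiliary bipartite graph with $n_i=\lceil\sum_j x_{ij}\rceil$ copies of each machine, and sample an integral matching from a Birkhoff--von~Neumann decomposition of the fractional matching --- is a valid alternative for the load-balancing case, and arguably more elementary. The feasibility of your single LP at $T=\opt$ follows from taking $\bar x_{ij}=\Pr_{\sigma\sim\calD^\star}[\sigma(j)=i]$, which satisfies every constraint by linearity. Two points of the sketch are worth tightening. First, the two-phase big/small rounding you flag as ``the main technical burden'' is unnecessary: a single BvN decomposition of the full auxiliary matching (all jobs at once, in the Shmoys--Tardos copy graph) already does everything you need. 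Every integral matching in the support of the decomposition has at most one job per copy, and the proof of \cref{claim:load-topl-inline} in the paper uses nothing beyond this plus the non-decreasing ordering of copies; hence the $(4^{1/q}+\epsilon)$ bound holds pointwise over the decomposition, while the matching marginals give $\Pr[\sigma(j)=i]=x_{ij}$ and therefore $\E[|\sigma^{-1}(i)|]\le e_i$. Second, your phrasing ``at most one big job beyond its fractional big-mass'' is not quite the invariant the paper's analysis uses; what is actually needed is the copy structure, and you should argue via the $\pmax$/$\pavq$ comparison between consecutive copies rather than a count of big jobs. One likely reason the paper does not take your shortcut is uniformity: for \fmnk the facility-opening variables are coupled by the $k$-cardinality constraint and there is no clean marginal-preserving rounding to fall back on, so the round-and-cut scaffolding is essential there, and the paper reuses it for load balancing too. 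If you write up the load-balancing case independently, your approach is a real simplification, but it does not extend to the clustering theorems.
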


\begin{theorem}\label{theorem:fair:center:topl}
    There exists a polynomial time algorithm for \fmnk that
    \begin{enumerate}
        \item\label{theorem:fair:center:topl1} If $f$ is a \emph{$\topq{\ell}{q}$} norm, the approximation factor is $(3\cdot 4^{1/q}+\epsilon)$ in running time $poly(n)/\epsilon$;
        \item\label{theorem:fair:center:topl2} If $f$ is a \mon, the approximation factor is $O(\log n)$.
    \end{enumerate}
\end{theorem}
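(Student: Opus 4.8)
The plan is to reuse the LP-rounding framework behind \cref{theorem:center:main} for \mnk and graft onto it a fairness-preserving dependent-rounding step in the spirit of Harris~\etal and Anegg~\etal. As in the non-fair case, we first guess $\opt$: for a $\topq{\ell}{q}$ norm it suffices to try the $O(n^{2})$ distinct interpoint distances scaled by powers of $(1+\epsilon)$, which yields the $+\epsilon$ slack and the $poly(n)/\epsilon$ running time; for a \mon we additionally invoke the Chakrabarty--Swamy-style reduction that approximates the norm, up to an $O(\log n)$ factor, by a collection of $O(\log n)$ weighted $\topl$-type objectives, and handle each with the $q=1$ machinery. Fixing a guess of $\opt$, we write an LP relaxation $\LP(\opt)$ with facility-opening variables $y_{i}\in[0,1]$ and connection variables $x_{ij}\in[0,1]$, imposing $\sum_{i}y_{i}\leq k$, $x_{ij}\leq y_{i}$, the coverage interval $l_{j}\leq\sum_{i}x_{ij}\leq r_{j}$, and --- the one genuinely new linear constraint for this setting --- the fairness requirement $\sum_{i}x_{ij}\geq e_{j}$. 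The norm bound is encoded, exactly as in the proof of \cref{theorem:center:main}, by allowing $x_{ij}>0$ only for pairs that are affordable under a budget decomposition of $f$ at scale $\opt$. Since every added constraint is linear and the marginals of any feasible distribution $\calD$ for the true optimum satisfy all of them, $\LP(\opt)$ is feasible whenever the guessed value is at least the optimum.

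The rounding proceeds in two phases. First, we consolidate the fractional solution by the same clustering step used for \mnk: pick client representatives greedily, move the fractional connection mass of nearby clients onto their representative, and thereby reduce the support of $x$ to an acyclic structure. Triangle inequality inflates connection distances by a factor of $3$, and the budget decomposition of the $\topq{\ell}{q}$ norm absorbs the remaining loss within $4^{1/q}$, giving the claimed $3\cdot 4^{1/q}$ before the $\epsilon$. Second, on the consolidated fractional solution we run a dependent-rounding procedure (round-or-cut, or iterated pipage) over the polytope that combines the global partition-matroid cardinality constraint on the $y_{i}$ with the per-client assignment structure; the output is a random integral $S$ with $|S|\leq k$ and $|S_{j}|\in[l_{j},r_{j}]$ with probability $1$, while each client's expected number of connections --- hence its fairness bound $e_{j}$ --- is preserved. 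Concatenating the two phases, and taking the worst level in the \mon reduction, gives the two stated factors; the argument adapts verbatim to the matroid and knapsack coverage variants.

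The main obstacle is the second rounding phase. We need a single randomized procedure that simultaneously (i) returns a \emph{true} solution --- $|S|\leq k$ with probability $1$, not merely in expectation --- (ii) keeps every client's number of connections inside the hard interval $[l_{j},r_{j}]$, and (iii) hits the prescribed fairness marginals $\E[|S_{j}|]\geq e_{j}$, all without breaking the cost budget set up in phase one. Reconciling a hard matroid-type cardinality constraint with prescribed marginals is precisely the difficulty that forced Harris~\etal into a pseudo-approximation and that Anegg~\etal overcame only for the colorful single-connection case; extending it to maximum-of-norm costs and to variable coverage intervals $[l_{j},r_{j}]$ is where essentially all of the work concentrates. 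A secondary but real point is checking that the budget decomposition of $f$ used in phase one stays compatible with the distance blow-up incurred in phase two, so that the final per-client vector still has $f$-value at most $(3\cdot 4^{1/q}+\epsilon)\,\opt$.
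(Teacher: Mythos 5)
Your proposal diverges from the paper at exactly the point you yourself flag as the ``main obstacle,'' and that obstacle is not resolved. You propose to add the fairness requirement $\sum_i x_{ij}\geq e_j$ as a constraint to a single LP relaxation and then perform a dependent/pipage-style rounding of that one fractional solution, hoping it preserves all marginals while also satisfying the hard constraints $|S|\leq k$ and $|S_j|\in[l_j,r_j]$ with probability one. But the paper never attempts this: it does not put the fairness constraint into the per-solution LP at all. Instead it formulates a configuration LP $\primal{c}{B}$ over distributions $\lambda_S$ supported on a family $\calF_c(B)$ of integral solutions, where the constraint $\sum_S\lambda_S\cdot\counted_B(j,S)\geq e_j$ lives. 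It then dualizes, forms the polytope $\calQ_c(B)$ in the dual variables $(\alpha,\mu)$, and runs the ellipsoid method with a separation oracle (\cref{lemma:fair:center:core}). The separation oracle itself is where the bundling routine \algb and the auxiliary integral LP are deployed, each time producing a single feasible set $S\in\calE_c(3\cdot 4^{1/q}B)$ that cuts off the current $(\alpha,\mu)$, or else certifying $(\alpha,\mu)\in\calQ_c(B)$. The final distribution is recovered post hoc by solving a small LP over the polynomially many sets $\calH$ collected during the ellipsoid run; no single fractional solution is ever rounded into a distribution.

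This matters because the step you rely on --- one randomized rounding procedure that simultaneously respects a hard cardinality bound, hard per-client intervals $[l_j,r_j]$, and prescribed per-client marginals $e_j$ --- is precisely what the round-and-cut framework is designed to \emph{avoid}, and no such rounding is known in this generality (you cite Harris~\etal being pushed into pseudo-approximation for exactly this reason). Mentioning ``round-or-cut'' in passing as interchangeable with iterated pipage misses that round-and-cut operates at the level of LP duality over distributions, not at the level of rounding a fractional assignment: the fairness constraint is enforced by the structure of $\primal{c}{B}$ and revealed as feasible only after the ellipsoid method terminates, not by any marginal-preserving property of a rounding subroutine. A second, smaller omission: the paper's separation oracle must handle \emph{all} guesses $(R,T^\star_\ell)$ consistent with $B$ simultaneously (via the union defining $\calF_c(B)$ and \cref{observation:fair:center}); fixing a single guessed LP before rounding, as you do, would not give a valid separation routine for $\calQ_c(B)$. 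The outer guessing and the $3\cdot 4^{1/q}$ per-solution cost analysis you sketch do match the paper, but the mechanism that actually delivers the fair distribution is missing.
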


\begin{remark*}
According to the characterizing Theorem 5.4 for {\smn}s by Chakrabarty and Swamy~\cite{chakrabarty2019approximation}, when the input norm $f$ (and the underlying problem) satisfies some mild assumptions, it can be well-approximated by a \mon. This provides a foothold for extending our $O(\log n)$-approximations to more general norms. We do not include the implied results here to avoid further complications, and refer to~\cite{chakrabarty2019approximation} for a more detailed discussion on the connection between {\mon}s and general {\smn}s.
\end{remark*}

\subsection{Outline of Techniques and Contributions}

We introduce several of our ingredients in proposing and solving the novel problems. The objective of a \smn is highly non-linear, even for the simple $\topl$ norm, therefore we adapt the sparsification frameworks that are used in solving ordered weighted optimization problems in~\cite{aouad2019ordered,byrka2018constant,chakrabarty2018interpolating,chakrabarty2019approximation}. However, unlike many previous successful algorithms in approximating the norm of a single vector, it seems that the current sparsification method is inadequate for approximating the optimum in our problem, where the objective is the maximum of norms of many vectors. This leaves an interesting $O(\log n)$-factor in the more general case, and perhaps reveals the intriguing nature of the problem.

Round-and-cut methods are widely used in many problem settings~\cite{anegg2020technique,chakrabarty2018generalized,li2016approximating,li2017uniform}.
For the fairness problems we study, we employ the primal-dual programs proposed in~\cite{anegg2020technique} for tackling the fairness constraint, which in turn uses the ellipsoid algorithm in the round-and-cut framework to find an approximate solution. To help the sparsification and round-and-cut methods work together, we exploit the structure of sparsification and construct a stronger separating hyperplane oracle for the ellipsoid algorithm.

In the clustering algorithms, we use a subroutine \algb (see~\cref{algorithm:bundle}) for partitioning the LP solution into so-called ``bundles''. Each \emph{full bundle} contains some fractional facilities that sum up to 1 total ``facility mass''. Moreover, we allow \emph{partial bundles} that have total facility mass less than 1. By defining a ``profitable factor'' $n_U\in\mathbb{Z}_+$ for a partial bundle $U$, if we choose not to open any facility in $U$, we do not have extra connections via $U$; if we choose to open one inside $U$, we gain $n_U$ additional connections. The auxiliary LP then carefully chooses some of the partial bundles and opens exactly one facility in each of them. A simpler version of the subroutine is used in~\cite{hajiaghayi2016constant,yan2015lp}, where only full bundles are allowed. This post-processing of the LP solution is also inspired by the Shmoys-Tardos algorithm for the generalized assignment problem~\cite{shmoys1993approximation}, where a machine copy may also be partial and contain ``job mass'' less than 1.

\subsection{Other Related Work}

In load balancing problems, much of the efforts are aimed at minimizing the norm of machine loads, while the load of each machine is fixed as the sum of job sizes that are assigned to it. We refer to the book by Williamson and Shmoys~\cite{williamson2011design} for a comprehensive overview. For the $L_\infty$ norm case, there are 2-approximations~\cite{lenstra1990approximation,shmoys1993approximation} that are notoriously hard to beat. Awerbuch~\etal\cite{awerbuch1995load} give a $\Theta(q)$-approximation for $L_q$ norms on unrelated machines. In a breakthrough result, Azar and Epstein~\cite{azar2005convex} use convex programming to show a 2-approximation for any fixed $L_q$ norm, which is subsequently improved by Kumar~\etal\cite{kumar2009unified} and Makarychev and Sviridenko~\cite{makarychev18solving}. In other ground-breaking results by Chakrabarty and Swamy~\cite{chakrabarty2019approximation,chakrabarty2019simpler}, they consider general {\smn}s of the load vector, and provide constant approximations when the norm $f$ admits a constant-factor ball-optimization oracle.

$k$-center is NP-hard to approximate for any factor smaller than 2~\cite{hsu1979easy}, and admits tight approximations~\cite{gonzalez1985clustering,hochbaum1985best}. As a closely related result, in the \emph{fault-tolerant $k$-center} problem, every $j\in X$ needs to be assigned at least $r_j$ distinct open facilities, with the cost defined as the longest connection it has, and tight constant approximations~\cite{chaudhuri1998p,khuller2000fault} are developed. In robust $k$-center, one only needs to cover $m$ clients. Charikar~\etal\cite{charikar2001algorithms} provide a 3-approximation for this problem, which is improved to a best-possible 2-approximation by Chakrabarty~\etal in~\cite{chakrabarty2020non} (see also Harris~\etal\cite{harris2019lottery}). Center clustering with knapsack constraints~\cite{hochbaum1986unified} and matroid constraints~\cite{chen2016matroid} are also studied, and Chakrabarty and Negahbani~\cite{chakrabarty2018generalized} give a unifying 3-approximation by considering a more general, down-closed family of subset constraints. Recently, Inamdar and Varadarajan~\cite{inamdar2019fault} study the robust fault-tolerant clustering problems for uniform $r_j$s, where they impose hard constraints on clients, and either discard a client $j$ completely or have to assign $r_j$ open facilities to it.

The celebrated $k$-median problem is famously known as APX-hard~\cite{jain2002greedy} and a great number of approximation algorithms are developed, e.g.~\cite{arya2004local,byrka2017improved,charikar2002constant,charikar2012dependent,jain1999primal,li2013approximating}. The fault-tolerant $k$-median problem is a variant where a client needs to be assigned at least $r_j$ open facilities, with its cost defined as the sum of distances to its assigned facilities. Swamy and Shmoys~\cite{swamy2008fault} develop a 4-approximation for the case when the clients have identical $r_j$ values, and Hajiaghayi~\etal\cite{hajiaghayi2016constant} give a constant approximation for non-uniform $r_j$s. Another prominent variant called ordered $k$-median also attracts a lot of research interests recently. In this formulation, the cost vector of clients is non-increasingly sorted and taken inner product with a predefined non-negative non-increasing weight vector. The first constant-factor approximation algorithms are given by Byrka~\etal\cite{byrka2018constant}, as well as Chakrabarty and Swamy~\cite{chakrabarty2018interpolating}. Chakrabarty and Swamy~\cite{chakrabarty2019approximation} later generalize ordered $k$-median to min-norm $k$-median, where the norm $f$ is symmetric and monotone, and devise a constant-factor approximation under some mild assumptions on $f$. They also improve the approximation ratio of ordered $k$-median to the currently best $(5+\epsilon)$.

\section{Preliminaries}

In this section, we provide the basic LP relaxations for load balancing and clustering problems. Throughout the paper, we use subscripts $l$ and $c$ to distinguish between the two problems. We use $i$ exclusively to refer to machines or facility locations, and $j$ for jobs or clients. 

In the basic LP for load balancing, denote $x_{ij}$ the extent of assignment from job $j$ to machine $i$ and fix $R\geq0$ as the largest job size allowed.
\begin{equation}
	\left\{
	\begin{array}{c}
		x\in[0,1]^{M\times|\calJ|}\\
	\end{array}
	\left|
	\begin{array}{cc}
		\sum_{i\in [M]}x_{ij}=1 & \forall j\in\calJ\\
		x_{ij}=0 & \forall (i,j)\in[M]\times\calJ\text{ s.t. }p(i,j) > R
	\end{array}
	\right.\right\}.\tag{$\calP_l(R)$}\label{lp:basic-load}
\end{equation}

For generic maximum-of-norm center, with no particular cardinality, knapsack or matroid constraints on facilities, $y_i$ denotes the extent of opening facility location $i\in\calF$, $u_j$ denotes the extent of connecting $j\in\calC$ to open facilities, and $x_{ij}$ denotes the fractional connection between $i\in\calF$ and $j\in\calC$. Fix some radius $R\geq0$ as the largest connection distance allowed.
\begin{equation}
	\left\{
	\begin{array}{c}
		x\in[0,1]^{|\calF|\times|\calC|}\\
		u\in\mathbb{R}^{|\calC|}\\
		y\in[0,1]^{|\calF|}
	\end{array}
	\left|
	\begin{array}{cc}
		\sum_{i\in\calF}x_{ij}=u_j & \forall j\in\calC\\
		l_j\leq u_j\leq r_j & \forall j\in\calC\\
		x_{ij}\leq y_i & \forall i\in\calF, j\in\calC\\
		x_{ij}=0 & \forall (i,j)\in\calF\times\calC\text{ s.t. }d(i,j) > R
	\end{array}
	\right.\right\}.\tag{$\calP_c(R)$}\label{lp:basic-cluster}
\end{equation}

\section{Maximum-of-Norm Load Balancing}

\subsection{Top$^{(q)}_\ell$ Norms}\label{section:load:topl}

We consider the case when $f$ is a Top$^{(q)}_\ell$ norm for $\ell\in\mathbb{Z}_+,q\geq1$, and refer to~\cref{app:load:ordered} for the case of {\mon}s. Suppose $\sigma^\star:\calJ\rightarrow [M]$ is the optimal assignment with optimum $\opt$, and for any machine $i\in[M]$, $j^\star(i,\ell)$ is the $\ell$-th largest job size among those $\sigma^{\star-1}(i)$ assigned for $i$. Evidently, one has $\opt\geq\left(\max_i\{\ell\cdot j^\star(i,\ell)^q\}\right)^{1/q}=\ell^{1/q}\cdot\max_i\{j^\star(i,\ell)\}$, and we define $T_\ell^\star:=\max_i\{j^\star(i,\ell)\}$. Also let $R=\max_{j}\{p(\sigma^\star(j),j)\}$ be the largest job size in the optimum, and obviously we have $R\leq\opt\leq|\calJ|^{1/q}\cdot R$. There are a polynomial number of possible values for $T^\star_\ell$ and $R$, and we can guess $B\in[\opt,(1+\epsilon)\opt)$ as integer powers of $(1+\epsilon)$ for fixed small $\epsilon>0$, since there are $O(\log_{1+\epsilon}|\calJ|)$ possible values. In what follows, we assume that we have guessed $R,T^\star_\ell$ and $B\in[\opt,(1+\epsilon)\opt)$ correctly.

Our algorithm begins with a straightforward LP relaxation,
\begin{alignat*}{3}
	\text{min\quad} && 0\tag{LB($\rbt$)}\label{lp:load-topl}\\
	\text{s.t.\quad}&&	x &\in\text{\ref{lp:basic-load}} &&\tag{LB($\rbt$).1}\label{lp:load-topl1}\\
&&	\sum_{j\in\calJ,p(i,j)\geq T_\ell^\star}x_{ij}&\leq \ell &&\forall i\in[M]\tag{LB($\rbt$).2}\label{lp:load-topl2}\\	
&&	\sum_{j\in\calJ,p(i,j)\geq T_\ell^\star}x_{ij}p(i,j)^q&\leq B^q &&\forall i\in[M].\tag{LB($\rbt$).3}\label{lp:load-topl3}
\end{alignat*}
\begin{claim}\label{claim:load-topl}
\emph{\ref{lp:load-topl}} is feasible.
\end{claim}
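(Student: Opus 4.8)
The plan is to exhibit an explicit feasible point, namely the integral assignment corresponding to the optimal solution $\sigma^\star$. Concretely, I would set $x_{ij} = 1$ whenever $\sigma^\star(j) = i$ and $x_{ij} = 0$ otherwise. The proof then amounts to checking, one by one, that this $x$ satisfies each of the three groups of constraints in~\eqref{lp:load-topl}, using the assumption that $R$, $T_\ell^\star$ and $B$ have been guessed correctly (i.e.\ $R = \max_j p(\sigma^\star(j), j)$, $T_\ell^\star = \max_i j^\star(i,\ell)$, and $B \geq \opt$).

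First I would verify~\eqref{lp:load-topl1}, i.e.\ that $x \in \eqref{lp:basic-load}$: the assignment constraints $\sum_i x_{ij} = 1$ hold since $\sigma^\star$ is a function, and the constraint $x_{ij} = 0$ when $p(i,j) > R$ holds because $R$ is the largest job size actually incurred by $\sigma^\star$, so any pair with $p(i,j) > R$ cannot have $\sigma^\star(j) = i$. Next, for~\eqref{lp:load-topl2}, fix a machine $i$; the left-hand side counts exactly the number of jobs $j$ with $\sigma^\star(j) = i$ and $p(i,j) \geq T_\ell^\star$. Since $T_\ell^\star = \max_{i'} j^\star(i',\ell) \geq j^\star(i,\ell)$, and $j^\star(i,\ell)$ is the $\ell$-th largest job size on machine $i$, there can be at most $\ell$ jobs on $i$ with size at least $j^\star(i,\ell)$; a fortiori at most $\ell$ with size $\geq T_\ell^\star$. (A small point to be careful about: ties at the threshold value — but ``at most $\ell$ jobs with size $\geq$ the $\ell$-th largest'' is exactly the statement that fails under ties, so I would instead argue that all jobs counted have size $\geq T_\ell^\star \geq j^\star(i,\ell)$, and the set of such jobs is contained in the top-$\ell$ jobs when the threshold strictly exceeds $j^\star(i,\ell+1)$; if $T_\ell^\star = j^\star(i,\ell)$ exactly, one uses that the count of jobs with size $\geq j^\star(i,\ell)$ could exceed $\ell$ only via ties, in which case I'd appeal to a tie-breaking convention on job ordering that makes $j^\star(i,\cdot)$ well-defined, so the count is exactly the number of indices $\le \ell$ with that size, hence $\leq \ell$.)

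Finally, for~\eqref{lp:load-topl3}, fix machine $i$ and note the left-hand side equals $\sum_{j:\sigma^\star(j)=i,\, p(i,j)\geq T_\ell^\star} p(i,j)^q$. By the previous paragraph at most $\ell$ terms appear, and they are among the $\ell$ largest job sizes on machine $i$, so this sum is at most $\sum_{t=1}^{\ell} j^\star(i,t)^q = \bigl(f(\vec p(i,\sigma^\star))\bigr)^q$ when $f$ is the $\topq{\ell}{q}$ norm (the full $\topq{\ell}{q}$ value is the sum of the $\ell$ largest $q$-th powers over all jobs on $i$, which only dominates the restricted sum). Since $f(\vec p(i,\sigma^\star)) \leq \opt \leq B$, we get the bound $B^q$. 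Collecting the three checks shows $x$ is feasible and the (trivially-zero) objective is attained, proving the claim. I expect the only genuinely delicate step to be the tie-handling in~\eqref{lp:load-topl2}; everything else is a direct substitution, so I would state a fixed tie-breaking rule on jobs up front (e.g.\ break ties by job index) so that $j^\star(i,\ell)$ and the counting argument are unambiguous.
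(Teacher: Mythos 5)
Your overall approach is exactly the paper's: set $x^\star_{ij}=\mathbbm{1}[\sigma^\star(j)=i]$ and check the three constraint groups directly. The verification of \eqref{lp:load-topl1} and the observation that \eqref{lp:load-topl3} is bounded by the $q$-th power of the $\topq{\ell}{q}$ norm match the paper's argument essentially line for line.

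Your extra discussion of ties is the only place you diverge, and I want to flag that the fix you propose does not actually close the gap. Constraint \eqref{lp:load-topl2} counts \emph{all} jobs $j$ on machine $i$ with $p(i,j)\geq T^\star_\ell$; this count is a property of the raw sizes and is unaffected by any tie-breaking convention on job indices. In your own example where many jobs on $i$ share the value $j^\star(i,\ell)=T^\star_\ell$, the LP left-hand side really does exceed $\ell$, no matter how you index the jobs, and the same over-counting can make \eqref{lp:load-topl3} exceed $B^q$ (take $\ell$ jobs of size $5$ plus more ties at $5$; the restricted sum can be $c\cdot 5^q$ for $c>\ell$, while $B^q$ may only be $\approx \ell\cdot 5^q$). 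So ``appeal to a tie-breaking convention'' as written is not a valid argument. To be fair, the paper's own proof is silent on this exact point and implicitly assumes distinct sizes; the concern you raised is legitimate. The usual clean fixes are either (i) perturb the sizes $p(i,j)$ infinitesimally so that $j^\star(i,\ell)$ is achieved uniquely, which changes nothing in the approximation guarantee, or (ii) replace ``$\geq T^\star_\ell$'' by ``$>T^\star_\ell$'' in \eqref{lp:load-topl2} and \eqref{lp:load-topl3}: then the counted jobs are strictly above the $\ell$-th largest, hence at most $\ell-1$ of them, and the downstream rounding argument (which bounds the mass of the trailing $\ell$ full copies by splitting at the threshold) goes through verbatim. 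Either route would make your proof (and the paper's) airtight; the index-tie-break route does not.
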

\begin{proof}
We transform the optimal assignment $\sigma^\star$ into an integral feasible solution $x^\star$, by letting $x^\star_{ij}=1$ if $\sigma^\star(j)=i$ and 0 otherwise. Obviously~\eqref{lp:load-topl1} is satisfied, since our guess $R$ is correct. For~\eqref{lp:load-topl2}, LHS is the number of jobs assigned to machine $i$ that have sizes at least $T^\star_\ell$, thus by definition is at most $\ell$. For~\eqref{lp:load-topl3}, LHS is the $L_q^q$ measure of jobs on $i$ that are at least $T^\star_\ell$ as large, and by definition of $T^\star_\ell$ again, is at most the Top$^{(q)}_\ell$ norm of jobs taken to the $q$-th power, thus at most $\opt^q\leq B^q$.
\end{proof}

\begin{proof}[Proof of~\cref{theorem:load:main},~\cref{theorem:load:main1}]
We briefly recall the procedures of Shmoys-Tardos algorithm in~\cite{shmoys1993approximation}. Given the fractional solution $x$, create $n_i=\ceil{\sum_{j\in\calJ}x_{ij}}$ copies of $i$. In non-decreasing order of $p(i,j)$, connect an extent of $x_{ij}$ from $j$ to the currently active copy of $i$, and overflow to the next copy if the current copy is full, i.e. has been assigned one unit of fractional jobs. We then use standard arguments for bipartite matching to round the fractional (sub)matching to a maximum integral (sub)matching, which also assigns every job to exactly one machine. Let $\hat\sigma(j)=i$ if $j$ is matched to some copy of $i$ and $\vec{p}(i,\hat\sigma)=\{p(i,j):\hat\sigma(j)=i\}$ be the vector of job sizes on $i$, and we have the following claim.
\begin{claim}\label{claim:load-topl-inline}
\emph{$\topq{\ell}{q}(\vec{p}(i,\hat\sigma))\leq \sqrt[q]{2R^q+B^q+\ell\cdot T^{\star q}_\ell}$}.
\end{claim}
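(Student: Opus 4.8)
The plan is to analyze a single machine copy structure produced by the Shmoys-Tardos rounding and bound the $\topq{\ell}{q}$ norm of its final job vector. Fix a machine $i$ and let $j$ range over the jobs that $\hat\sigma$ assigns to $i$. First I would split these jobs into two groups according to the threshold $T^\star_\ell$: the ``large'' jobs $L_i=\{j:\hat\sigma(j)=i,\ p(i,j)\geq T^\star_\ell\}$ and the ``small'' jobs of size below $T^\star_\ell$. Since every job in $L_i$ has size at least $T^\star_\ell$, and since $\topq{\ell}{q}$ takes the $q$-th root of a sum of at most $\ell$ of the largest $q$-th powers, it suffices to bound $\sum_{j\in L_i} p(i,j)^q$ together with the contribution of the at most $\ell$ largest small jobs, each of which contributes at most $T^{\star q}_\ell$; the latter group contributes at most $\ell\cdot T^{\star q}_\ell$, accounting for that term in the claimed bound.

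Next I would control $\sum_{j\in L_i} p(i,j)^q$. Here the key is the copy-based rounding: the jobs fractionally routed to copies of $i$ in non-decreasing order of size, and the integral matching assigns at most one job per copy. For a copy that receives one unit of fractional job mass, the integrally matched job has size at most the largest size appearing in that copy, which is at most the smallest size appearing in the \emph{next} copy (monotone filling), so the standard Shmoys-Tardos charging bounds the total over all ``full'' copies by a fractional quantity plus one extra copy's worth. Concretely, restricting to large jobs, the fractional mass $\sum_{j\in L_i} x_{ij}p(i,j)^q \leq B^q$ by~\eqref{lp:load-topl3}, and $\sum_{j\in L_i}x_{ij}\leq \ell$ by~\eqref{lp:load-topl2} means the large jobs occupy at most $\ell$ copies in the fractional packing plus possibly one partial copy. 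Charging each integrally matched large job to the fractional mass of the next copy gives $\sum_{j\in L_i, \text{not the first large job per-block}} p(i,j)^q \leq B^q$, and I would account separately for the ``first'' job of the fractional block of large jobs (size at most $R^q$) and the leftover partial copy (again at most $R^q$), yielding $\sum_{j\in L_i} p(i,j)^q \leq 2R^q + B^q$.

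Combining the two estimates gives $\topq{\ell}{q}(\vec p(i,\hat\sigma))^q \leq 2R^q + B^q + \ell\cdot T^{\star q}_\ell$, and taking $q$-th roots yields the claim. The main obstacle I anticipate is making the copy-charging argument precise when the large jobs do not start at a copy boundary: the first copy touched by a large job may be partly filled by small jobs, so I need to be careful that I do not double-count its contribution, and that the ``overflow to next copy'' bound still applies when blocks of large and small jobs interleave within a copy. I would handle this by ordering jobs globally by size (so all small-of-size-$<T^\star_\ell$ jobs precede all large jobs within the filling of copies of $i$), which cleanly separates the large jobs into a contiguous suffix of the copy sequence, at the cost of the one extra $R^q$ boundary term already budgeted.
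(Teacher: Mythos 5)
Your approach is essentially the same as the paper's: you run Shmoys--Tardos with copies, use the key monotonicity fact that the max size in a full copy is at most the weighted average of ($q$-th powers in) the next copy, and then invoke \eqref{lp:load-topl2}--\eqref{lp:load-topl3} to split the fractional charge at the threshold $T^\star_\ell$. The only difference is organizational: the paper first bounds the integral top-$\ell$ by the fractional $q$-th-power mass on the last $\ell$ full copies (plus the $2R^q$ boundary terms) and \emph{then} splits that fractional mass by $T^\star_\ell$, whereas you split the integral jobs by $T^\star_\ell$ first and only run the copy-shift charging on the large block; both bookkeepings give the same three terms. One small slip worth fixing when you write this out: in the forward charging $\pmax(C_t)^q \le \pavq(C_{t+1})$, the uncharged boundary job is the one on the \emph{last} full copy of the large block (it has no full successor), not the first; the first large job's copy charges cleanly into the next, which contains only large jobs. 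The partial copy contributes the other $R^q$, so the $2R^q + B^q$ tally is still correct.
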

\begin{proof}
We only consider the case when $\sum_{j\in\calJ}x_{ij}\geq\ell$, since the other one is easier.
Because the jobs are added in non-decreasing order during the construction of the fractional matching, the $\ell$ largest jobs are on the \emph{last $\ell$ full copies} of $i$, plus at most one job with size $\leq R$ on the possible \emph{partial copy} at the end.

Denote the $\ell$ full copies $U_{i1},\dots,U_{i\ell}$, in order of matching. Define $\pavq(i,U)$ the weighted average of $q$-th power job sizes on some copy $U$ and $\pmax(i,U)$ the maximum job size thereof. One has $\pmax(i,U_{it})^q\leq\pavq(i,U_{i(t+1)})$, because $U_{i(t+1)}$ has unit ``job mass'' assigned, and the fractional jobs are matched in non-decreasing order of $p(i,j)$, thus every job size in $U_{i(t+1)}$ is at least $\pmax(i,U_{it})$. The sum of $q$-th powers of these $\ell$ jobs are then at most
\begin{equation}
\sum_{t=1}^\ell \pmax(i,U_{it})^q+R^q\leq\sum_{t=1}^{\ell-1}\pavq(i,U_{i(t+1)})+2R^q\leq 2R^q+\sum_{j\in U_{it}:t=1,\dots,\ell}x_{ij}p(i,j)^q,\label{eq:load:next:inline}
\end{equation}
here, the union of $U_{it}:t=1,\dots,\ell$ contains exactly $\ell$ probability mass. It can also be broken down into two parts: those jobs with size $\geq T^\star_\ell$ and the rest. For the first part and using~\eqref{lp:load-topl3}, the sum is at most $B^q$. For the second part, since every job size is at most $T^\star_\ell$ and the total job mass is at most $\ell$, the sum is at most $\ell\cdot T^{\star q}_\ell$.

Therefore, $\topq{\ell}{q}(\vec{p}(i,\hat\sigma))\leq_{\text{\eqref{eq:load:next:inline}}} \sqrt[q]{2R^q+B^q+\ell\cdot T^{\star q}_\ell}$.
\end{proof}
For any $i$, using the claim and recall that $B\leq(1+\epsilon)\opt,\opt\geq\ell^{1/q}\cdot T^\star_\ell,\opt\geq R$, we conclude that $\hat\sigma$ is a $(4^{1/q}+\epsilon)$-approximation.
\end{proof}

\subsection{Fair Max-Norm Makespan}

In this section, we consider \fmnms for $\topq{\ell}{q}$ norms only. It is not hard to verify that the same algorithm also applies for general {\mon}s, so we omit the proof for~\cref{theorem:fair:load:topl2} of~\cref{theorem:fair:load:topl}. Recall our definitions of $\rbt$ in the previous section,
\begin{equation}
    B\geq R;\;B\geq\ell^{1/q}\cdot T^\star_\ell,\label{eq:fair:load:guess}
\end{equation}
and define $\calF_l(B)$ as the collection of all job assignments that correspond to the integral solutions of \ref{lp:load-topl} \emph{for all possible $R,T^\star_\ell$ that satisfy~\eqref{eq:fair:load:guess}}.
\[\calF_l(B):=\bigcup_{R,T^\star_\ell:\eqref{eq:fair:load:guess}}\{\sigma:x\in{\text{\ref{lp:load-topl}}}\cap\mathbb{Z}^{M\times|\calJ|},x_{ij}=1\Leftrightarrow\sigma(j)=i\},\]
as well as the truly feasible assignments $\calE_l(B):=\{\sigma:\topq{\ell}{q}(\vec{p}(i,\sigma))\leq B\forall i\in[M]\}$. It is easy to verify that $\calE_l(B)\subseteq\calF_l(B)$, and it makes more sense if we can directly work with $\calE_l(B)$. But $\calE_l(B)$ is more intractable using the current framework, so we relax the constraints and define the following LP and its dual using $\calF_l(B)$ instead, see~\cref{figure:load-primal-dual}.
\begin{figure}[ht]
\centering
\fbox{
\begin{minipage}[t]{0.47\linewidth}
\begin{flalign*}
	  \text{max\;\;}&0\tag{$\primal{l}{B}$}\label{lp:load-primal}\\
  \text{s.t.\;\;}&\forall i\sum_{\sigma\in\calF_l(B)}\lambda_\sigma\cdot\left|\sigma^{-1}(i)\right| \le e_i\\
 &\sum_{\sigma \in \calF_l(B)} \lambda_\sigma = 1,\;\lambda\ge 0.
  \end{flalign*}
\end{minipage}%
\;\vrule\;
\begin{minipage}[t]{0.47\linewidth}
  \begin{flalign*}
	  \text{min\;\;}&\sum_{i\in[M]}\alpha_i\cdot e_i-\mu\tag{$\dual{l}{B}$}\label{lp:load-dual}\\
  \text{s.t.\;\;}& \forall \sigma\in\calF_l(B)\sum_{i\in[M]}\alpha_i\cdot\left|\sigma^{-1}(i)\right|\ge\mu\\
  & \alpha \ge 0,\;\mu\in\mathbb{R}.
    \end{flalign*}
\end{minipage}%
}
    \caption{The primal and dual programs supported on $\calF_l(B)$.}
    \label{figure:load-primal-dual}
\end{figure}

Using a similar argument~\cite{anegg2020technique} and noticing that~\ref{lp:load-dual} is scale-invariant, if there exists a solution to~\ref{lp:load-dual} with a negative objective value, then it is unbounded and~\ref{lp:load-primal} is infeasible as a result. Therefore, we define the following polytope.
\begin{equation}
\left\{(\almu)\in\mathbb{R}_{\geq0}^M\times\mathbb{R}\left|
\begin{array}{cc}
	\sum_{i\in[M]}\alpha_i\cdot e_i\leq \mu-1 &\\
	\sum_{i\in[M]}\alpha_i\cdot|\sigma^{-1}(i)|\geq \mu & \forall \sigma\in\calF_l(B)
\end{array}
\right.\right\}.\tag{$\calQ_l(B)$}\label{lp:fair:load:q}
\end{equation}

\begin{claim}\label{claim:load:pvq}(\cite{anegg2020technique})
\ref{lp:fair:load:q} is empty if and only if \ref{lp:load-primal} is feasible.
\end{claim}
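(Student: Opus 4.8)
The plan is to obtain the claim from LP duality between \ref{lp:load-primal} and \ref{lp:load-dual}, combined with the scale-invariance of \ref{lp:load-dual}, following the template of~\cite{anegg2020technique}. I first note that $\calF_l(B)$ is finite (it consists of job assignments $\sigma:\calJ\to[M]$, of which there are at most $M^{|\calJ|}$), so both programs are honest finite linear programs and strong duality applies. I will also use repeatedly that \ref{lp:load-dual} is always feasible, as witnessed by $(\alpha,\mu)=(0,0)$.

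For the implication ``\ref{lp:load-primal} feasible $\Rightarrow$ \ref{lp:fair:load:q} empty'' I would argue directly. Let $\lambda$ be feasible for \ref{lp:load-primal} and suppose, for contradiction, that \ref{lp:fair:load:q} contains a point $(\almu)$. Averaging the inequalities $\sum_i\alpha_i|\sigma^{-1}(i)|\ge\mu$ over $\sigma$ with weights $\lambda_\sigma$ and interchanging summations yields
\[\mu=\sum_\sigma\lambda_\sigma\mu\le\sum_i\alpha_i\Big(\sum_\sigma\lambda_\sigma|\sigma^{-1}(i)|\Big)\le\sum_i\alpha_i e_i\le\mu-1,\]
where the middle inequality uses $\alpha\ge0$ together with the primal constraints $\sum_\sigma\lambda_\sigma|\sigma^{-1}(i)|\le e_i$, and the last uses the defining inequality of \ref{lp:fair:load:q}. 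This contradiction proves the implication.

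For the implication ``\ref{lp:fair:load:q} empty $\Rightarrow$ \ref{lp:load-primal} feasible'' I would also argue by contradiction: assume \ref{lp:load-primal} is infeasible. Since \ref{lp:load-dual} is feasible, primal infeasibility forces \ref{lp:load-dual} to be unbounded below, so there is a dual-feasible $(\alpha^0,\mu^0)$ with $\sum_i\alpha_i^0 e_i-\mu^0<0$. The dual-feasible region is a cone, since every constraint $\sum_i\alpha_i|\sigma^{-1}(i)|\ge\mu$ as well as $\alpha\ge0$ is homogeneous in $(\alpha,\mu)$, and its objective is linear; hence scaling $(\alpha^0,\mu^0)$ by the positive number $1/(\mu^0-\sum_i\alpha_i^0 e_i)$ gives a dual-feasible $(\almu)$ with $\sum_i\alpha_i e_i-\mu=-1$, and therefore $\sum_i\alpha_i e_i\le\mu-1$. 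Such a point meets exactly the constraints defining \ref{lp:fair:load:q}, contradicting its emptiness.

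The whole argument is routine LP duality, so I do not expect a genuine obstacle; the only items requiring attention are (i) checking feasibility of \ref{lp:load-dual}, so that the primal-infeasible case produces an \emph{unbounded} dual rather than an infeasible one, and (ii) using scale-invariance to renormalize a strictly negative dual objective into the ``$\le\mu-1$'' form hard-coded into the definition of \ref{lp:fair:load:q}. Both steps mirror the corresponding argument in~\cite{anegg2020technique}.
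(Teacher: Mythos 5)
Your proof is correct and follows exactly the LP-duality-plus-scale-invariance argument that the paper sketches in the surrounding text and attributes to~\cite{anegg2020technique}: one direction by averaging the cut constraints against a feasible primal $\lambda$, the other by observing that primal infeasibility plus dual feasibility (via $(\alpha,\mu)=(0,0)$) forces the dual cone to be unbounded below, and then rescaling a strictly negative dual point to meet the hard-coded ``$\le\mu-1$'' normalization. No gaps; this is the intended proof.
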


The core lemma is given as follows, which we prove in~\cref{app:fair:load:core-proof}. We are then ready to prove the main theorem.

\begin{lemma}\label{lemma:fair:load:core}
	Fix $B\geq0$. There is a polynomial time algorithm that, given $(\almu)\in\mathbb{Q}_{\geq0}^{M}\times\mathbb{Q}$ that satisfies $\sum_{i\in[M]}\alpha_i\cdot e_i\leq \mu-1$, either certifies that $(\almu)\in\calQ_l(B)$ or returns a feasible assignment $\sigma\in\calE_l(4^{1/q}B)$ with $\sum_{i\in[M]}\alpha_i\cdot|\sigma^{-1}(i)|< \mu$.
\end{lemma}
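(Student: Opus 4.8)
The lemma is, in effect, a request for a \emph{relaxed} separation oracle for the polytope \ref{lp:fair:load:q}. Since the first defining inequality $\sum_{i\in[M]}\alpha_i e_i\le\mu-1$ is assumed to hold, the only thing left to check is whether $\sum_{i\in[M]}\alpha_i|\sigma^{-1}(i)|\ge\mu$ for \emph{every} $\sigma\in\calF_l(B)$, and if not to exhibit a violator; the generosity of the statement is that the violator is allowed to come from the larger class $\calE_l(4^{1/q}B)$ rather than from $\calF_l(B)$ itself. Recalling that $\calF_l(B)=\bigcup_{R,T^\star_\ell}\{\text{integral points of }\text{\ref{lp:load-topl}}\}$ over the polynomially many pairs $(R,T^\star_\ell)$ satisfying \eqref{eq:fair:load:guess}, the plan is to enumerate all such $(R,T^\star_\ell)$ and, for each, solve the linear program $\min\{\sum_{i,j}\alpha_i x_{ij}:x\in\text{\ref{lp:load-topl}}\}$ — a genuine LP because $|\sigma^{-1}(i)|=\sum_j x^\sigma_{ij}$ under the integral encoding. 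If every feasible one of these programs has optimum at least $\mu$ (an infeasible guess contributes no assignment to $\calF_l(B)$ and may be ignored), then each $\sigma\in\calF_l(B)$, being an integral feasible point of one of these relaxations, satisfies $\sum_i\alpha_i|\sigma^{-1}(i)|\ge\mu$; together with the hypothesis and $\alpha\ge0$ this is exactly membership in \ref{lp:fair:load:q}, so we certify.

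Otherwise, fix a guess $(R,T^\star_\ell)$ whose LP optimum is $<\mu$ and an optimal solution $x$, and round it. We run the Shmoys--Tardos preprocessing from the proof of \cref{claim:load-topl-inline}: writing $s_i=\sum_j x_{ij}$, split machine $i$ into $\floor{s_i}$ full copies and at most one partial copy, and fill the copies with job mass in non-decreasing order of $p(i,j)$; this produces a fractional assignment $x'$ of jobs to copies with $\sum_c x'_{jc}=1$ for every job $j$, total mass exactly $1$ on every full copy, and $<1$ on every partial copy. The one place we deviate from the non-fair algorithm is the rounding: instead of rounding $x'$ to an arbitrary job-saturating integral matching, we compute a \emph{minimum-cost} one, where an edge between job $j$ and a copy of machine $i$ costs $\alpha_i$ and only edges with $x'_{jc}>0$ are admissible. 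Such a matching exists because $x'$ is itself a job-saturating fractional point of the support-restricted bipartite $b$-matching polytope $\{x'\ge0:\sum_c x'_{jc}=1\ \forall j,\ \sum_j x'_{jc}\le1\ \forall c\}$, which is integral; and since the cost of any matching $M$ equals $\sum_i\alpha_i|\hat\sigma_M^{-1}(i)|$ (each matched copy of $i$ carries one job assigned to $i$), while the cost of $x'$ equals $\sum_i\alpha_i s_i=\sum_{i,j}\alpha_i x_{ij}<\mu$, the min-cost integral matching has cost $<\mu$. Thus the resulting assignment $\hat\sigma$ has $\sum_i\alpha_i|\hat\sigma^{-1}(i)|<\mu$. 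Because this matching respects the supports and the copy ordering, the proof of \cref{claim:load-topl-inline} applies verbatim and gives $\topq{\ell}{q}(\vec{p}(i,\hat\sigma))\le\sqrt[q]{2R^q+B^q+\ell\cdot T^{\star q}_\ell}\le 4^{1/q}B$, the last step by \eqref{eq:fair:load:guess}; hence $\hat\sigma\in\calE_l(4^{1/q}B)$, and we return it.

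The main obstacle is exactly this rounding step, which has to be controlled on two fronts simultaneously. Restricting the matching to the support of $x'$ is what keeps the $\topq{\ell}{q}$ guarantee intact, whereas the per-copy decomposition of the linear cost, combined with integrality of the bipartite matching polytope, is what prevents the $\alpha$-weighted job count from inflating past $\mu$; a plain maximum matching (as in the non-fair algorithm) gives no handle on the latter, and the crude bound $|\hat\sigma^{-1}(i)|\le\ceil{s_i}$ loses too much when many $s_i$ are fractional. Everything else — enumerating the guesses, solving the LPs, and computing a single min-cost matching — is routine and runs in polynomial time.
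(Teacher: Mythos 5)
Your proposal follows essentially the same route as the paper: enumerate the polynomially many guesses $(R,T^\star_\ell)$ satisfying~\eqref{eq:fair:load:guess}, solve a linear program that couples~\ref{lp:load-topl} with the $\alpha$-weighted job count, and round via Shmoys--Tardos with machine copies, recovering an integral sub-matching of no larger $\alpha$-weight, so that \cref{claim:load-topl-inline} gives the $4^{1/q}B$ bound. The only cosmetic difference is that the paper uses the $\eta$-perturbation lemma of Anegg~\etal to reduce the strict inequality to a feasibility check of the augmented polytope $\calP_l^\almu(\rbt)$, whereas you minimize $\sum_{i,j}\alpha_i x_{ij}$ directly and compare the (rational, polynomially bounded) optimum to $\mu$ --- both are valid and amount to the same separation oracle.
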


\begin{proof}[Proof of~\cref{theorem:fair:load:topl},~\cref{theorem:fair:load:topl1}]
Our algorithm iterates through all possible integer powers of $(1+\epsilon)$ as our guess $B\geq0$. Fix $B$ and we start with arbitrary $(\almu)$. If $\sum_i\alpha_i\cdot e_i>\mu-1$, then this constraint separates $(\almu)$ from $\calQ_l(4^{1/q}B)$ and we use the ellipsoid method to iterate for other values. Suppose $\sum_{i\in[M]}\alpha_i\cdot e_i\leq \mu-1$ is satisfied afterwards, and we call the algorithm in~\cref{lemma:fair:load:core} on $(\almu)$. If $(\almu)\in\calQ_l(B)$ is ever certified, then according to~\cref{claim:load:pvq}, \ref{lp:load-primal} is infeasible, and we abort from this choice of $B$. Otherwise, we always get some $\sigma\in\calE_l(4^{1/q}B)$ corresponding to a hyperplane which separates the current $(\almu)$ from $\calQ_l(4^{1/q}B)$ (since $\calE_l(4^{1/q}B)\subseteq\calF_l(4^{1/q}B)$). Using the ellipsoid method, we conclude the emptiness of $\calQ_l(4^{1/q}B)$ in polynomial time, and $\primal{l}{4^{1/q}B}$ is feasible by~\cref{claim:load:pvq} again.

When $B\in[\opt,(1+\epsilon)\opt)$, it is easy to verify that \ref{lp:load-primal} is feasible, hence \ref{lp:fair:load:q} is empty. In this case, our algorithm cannot find any $(\almu)\in\calQ_l(B)$ and has to verify the emptiness of $\calQ_l(4^{1/q}B)$. Moreover. let $\calH$ be the set of job assignments that are returned during the run of the ellipsoid algorithm and $\calH\subseteq\calE_l(4^{1/q}B)$ by definition. The following polytope is therefore infeasible
\begin{equation}
\left\{(\almu)\in\mathbb{R}_{\geq0}^M\times\mathbb{R}\left|
\begin{array}{cc}
	\sum_{i\in[M]}\alpha_i\cdot e_i\leq \mu-1 &\\
	\sum_{i\in[M]}\alpha_i\cdot|\sigma^{-1}(i)|\geq \mu & \forall \sigma\in\calH
\end{array}
\right.\right\}.\tag{$\calQ_l^\calH(4^{1/q}B)$}\label{lp:fair:load:qh4}
\end{equation}
By replacing the collection $\calF_l(B)$ with $\calH$ in \ref{lp:load-primal} and \ref{lp:load-dual}, we arrive at a conclusion similar to~\cref{claim:load:pvq}: $\mathsf{Dl}_l^\calH(4^{1/q}B)$ has optimum 0 and $\mathsf{Pl}_l^\calH(4^{1/q}B)$ is feasible. Since every assignment in $\calH$ is also in $\calE_l(4^{1/q}B)$ and $\calH$ has polynomial size, if we directly solve for the sampling probabilities $\{\lambda_\sigma:\sigma\in\calH\}$, the resulting solution satisfies the fairness constraint, and the maximum of norm is always bounded by $4^{1/q}B\leq4^{1/q}(1+\epsilon)\opt$.
\end{proof}

\section{Maximum-of-Norm $k$Center}

\subsection{Top$^{(q)}_\ell$ Norms}\label{section:center:topl}

In this section, we consider the case when $f$ is a $\topq{\ell}{q}$ norm and refer to~\cref{app:center:ordered} for general {\mon}s. Let $\rma=\max_{j\in\calC}\{r_j\}$. We first try guessing the longest client-facility connection distance $R$ in the optimal solution. We also know $\opt\in[R,\rma^{1/q} R]$ and can afford to guess $B\in[\opt,(1+\epsilon)\opt)$ for any small constant $\epsilon>0$. The number of such guesses is $O(\log_{1+\epsilon}\rma)$. 

Further, denote $(S^\star,\{S_j^\star\subseteq S^\star:j\in\calC\})$ the optimal solution, where $S^\star$ is the set of facility locations to open, and $S_j^\star$ is the set of open facilities that $j$ connects to. Specifically, we define $i(j,\ell,S^\star_j)$ the $\ell$-th largest connection distance for $j$ in the optimal solution, and define $T_\ell^\star:=\max_{j\in\calC}\{i(\ell,j,S^\star_j)\}$. Obviously, we can afford to guess $T_\ell^\star$ exactly as well. In what follows, we assume that we have the desired values $R,T_\ell^\star$ and $B\in[\opt,(1+\epsilon)\opt)$, and state the LP relaxation as follows,
\begin{alignat}{4}
	\text{min\quad} && 0\tag{Cl($\rbt$)}\label{lp:cluster-topl}\\
	\text{s.t.\quad}&&	(x,u,y)&\in\text{\ref{lp:basic-cluster}}&&\tag{Cl($\rbt$).1}\label{lp:cluster-topl1}\\
&&			\sum_{j\in\calC}u_j&\geq m&&\tag{Cl($\rbt$).2}\label{lp:cluster-topl2}\\
&&	\sum_{i\in\calF,d(i,j)\geq T_\ell^\star}x_{ij}&\leq \ell &&\forall j\in\calC\tag{Cl($\rbt$).3}\label{lp:cluster-topl3}\\	
&&	\sum_{i\in\calF,d(i,j)\geq T_\ell^\star}x_{ij}d(i,j)^q&\leq B^q &&\forall j\in\calC\tag{Cl($\rbt$).4}\label{lp:cluster-topl4}\\
&&	\sum_{i\in\calF}y_i&=k.&&\tag{Cl($\rbt$).5}\label{lp:cluster-topl5}
\end{alignat}
	
\begin{claim}\label{claim:cluster-topl}
    \emph{\ref{lp:cluster-topl}} is feasible.
\end{claim}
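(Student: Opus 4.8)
The plan is to mimic the proof of \cref{claim:load-topl}: starting from the optimal solution $(S^\star,\{S^\star_j:j\in\calC\})$, I would exhibit an integral point $(x,u,y)$ and check that it satisfies every constraint of \ref{lp:cluster-topl}. Concretely, first enlarge $S^\star$ arbitrarily to a set $\bar S\subseteq\calF$ with $|\bar S|=k$ (we may assume $|\calF|\ge k$, as otherwise one opens all facilities), set $y_i=1$ iff $i\in\bar S$, set $x_{ij}=1$ iff $i\in S^\star_j$, and set $u_j=|S^\star_j|$.

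Next I would verify the constraints one at a time. Constraint \ref{lp:cluster-topl5} holds by the choice of $\bar S$. For \ref{lp:cluster-topl1}, i.e.\ membership in \ref{lp:basic-cluster}: $\sum_i x_{ij}=|S^\star_j|=u_j$ by construction; $l_j\le u_j\le r_j$ because the optimal solution obeys $|S^\star_j|\in[l_j,r_j]$; $x_{ij}\le y_i$ because $S^\star_j\subseteq S^\star\subseteq\bar S$; and $x_{ij}=0$ whenever $d(i,j)>R$ since $R$ is the (correctly guessed) longest connection distance used by the optimum. Constraint \ref{lp:cluster-topl2} follows from the coverage requirement $\sum_{j\in\calC}|S^\star_j|\ge m$ satisfied by the optimal solution.

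The slightly more substantial part — the analogue of the $T^\star_\ell$-based estimates in \cref{claim:load-topl} — is \ref{lp:cluster-topl3} together with \ref{lp:cluster-topl4}. For a fixed $j$, the left-hand side of \ref{lp:cluster-topl3} counts the facilities of $S^\star_j$ at distance at least $T^\star_\ell$ from $j$; since $T^\star_\ell=\max_{j'\in\calC}i(j',\ell,S^\star_{j'})\ge i(j,\ell,S^\star_j)$, the $\ell$-th largest connection distance of $j$ is at most $T^\star_\ell$, so at most $\ell$ of $j$'s connections are that far, giving the bound $\le\ell$. For \ref{lp:cluster-topl4}, those same $\le\ell$ connections are precisely (a subset of) the $\ell$ largest entries of the vector $\vec d(j,S^\star_j)$, so the sum of their $q$-th powers is at most $\topq{\ell}{q}(\vec d(j,S^\star_j))^q=f(\vec d(j,S^\star_j))^q\le\opt^q\le B^q$, using $B\in[\opt,(1+\epsilon)\opt)$. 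This completes the verification and hence the proof.

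I do not expect a genuine obstacle here; the only points needing a line of care are the padding step that converts the optimum's "$|S^\star|\le k$" into the LP's "$\sum_i y_i=k$", and the standard observation that the coordinates of $\vec d(j,S^\star_j)$ at distance $\ge T^\star_\ell$ are dominated by the top-$\ell$ coordinates — that is what lets the cardinality constraint \ref{lp:cluster-topl3} feed into the norm constraint \ref{lp:cluster-topl4}.
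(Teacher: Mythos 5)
Your proof is correct and follows the same route the paper indicates — the paper's own proof of this claim is simply a pointer to \cref{claim:load-topl}, and you have filled in the analogous construction (characteristic vector from the optimum, padded to an open set of size exactly $k$) and the corresponding verifications, including the key step that the connections of $j$ at distance $\ge T^\star_\ell$ are at most $\ell$ in number, are the largest entries of $\vec d(j,S^\star_j)$, and hence have $L_q^q$ mass bounded by $\topq{\ell}{q}(\vec d(j,S^\star_j))^q \le B^q$. Your explicit attention to the padding needed because the LP imposes the equality $\sum_i y_i = k$ while the combinatorial optimum only guarantees $|S^\star|\le k$ is a detail the paper leaves implicit, and is a worthwhile observation.
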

\begin{proof}
    The proof is essentially the same as~\cref{claim:load-topl}.
\end{proof}

For any solution $(x,u,y)$ to~\ref{lp:cluster-topl}, we assume that $x_{ij}$ is the best (nearest) assignment possible given $u,y$, and any facility location $i$ can be split into multiple co-located copies (see, e.g., ~\cite{charikar2012dependent,hajiaghayi2016constant}). Using the split technique, we further assume that $x_{ij}\in\{0,y_i\}$ and define $F_j=\{i\in\calF':x_{ij}>0\}$, with $\calF'$ being the set of facility locations after the split. Define $g:\calF'\rightarrow\calF$, which takes the copy to the original facility location. For example, if the original facility location $i\in\calF$ is split into 3 copies $i_1,i_2,i_3\in\calF'$, then $g(i_1)=g(i_2)=g(i_3)=i$, and $g^{-1}:\calF\rightarrow 2^{\calF'}$, $g^{-1}(i)=\{i_1,i_2,i_3\}\subseteq\calF'$ is the set of all copies of $i$.

\begin{algorithm}[ht]
\caption{\algb: an algorithm for creating bundles from a solution of~\ref{lp:basic-cluster}.}\label{algorithm:bundle}
\SetKwInOut{Input}{Input}
\SetKwInOut{Output}{Output}

\Input{$\calC,\calF',(x,u,y);\forall j\in\calC,F_j=\{i\in\calF':x_{ij}>0\}$}
\Output{$\calU_1,\calU_2,\{n_U:U\in\calU_2\};\forall j\in\calC,\queue_j=\{U_{j,l}:l=1,\dots,\ceil{u_j}\}$}
\tcc{split facilities into co-located copies if necessary}
$\calU_1\leftarrow\emptyset,\,\calU_2\leftarrow\emptyset;\forall j\in\calC,\,F_j'\leftarrow F_j,\,\queue_j\leftarrow\emptyset$\;
\While{exists $j\in\calC$ s.t. $|\queue_j|<\floor{u_j}$\label{algb-loop1}}{
Choose such $j$, so that if $U$ is the nearest unit volume of facilities in $F_j'$, $\max_{i\in U}d(i,j)$ is minimized\;
\eIf{exists $U'\in\calU_1, U\cap U'\neq\emptyset$}{
$\queue_j\leftarrow\queue_j\cup\{U'\}$, remove $U'\cap F_j'$ from $F_j'$\;
}{
$\queue_j\leftarrow\queue_j\cup\{U\}$, remove $U$ from $F_j'$, $\calU_1\leftarrow\calU_1\cup\{U\}$\;
}
}
\While{exists $j\in\calC$ s.t. $|\queue_j|<\ceil{u_j}$\label{algb-loop2}}{
Choose such $j$, so that if $U$ is the nearest \emph{at most unit} volume of facilities in $F_j'$, $y(U)$ is maximized\;
\uIf{exists $U'\in\calU_1, U\cap U'\neq\emptyset$}{
$\queue_j\leftarrow\queue_j\cup\{U'\}$, $F_j'\leftarrow\emptyset$\;
}
\uElseIf{exists $U'\in\calU_2, U\cap U'\neq\emptyset$}{
$\queue_j\leftarrow\queue_j\cup\{U'\}$, $F_j'\leftarrow\emptyset$, $n_{U'}\leftarrow n_{U'}+1$\;
}
\Else{$\queue_j\leftarrow\queue_j\cup\{U\}$, $F_j'\leftarrow\emptyset$, $\calU_2\leftarrow\calU_2\cup\{U\}$ and set $n_U=1$\;
}
}
\end{algorithm}

We use \algb (see~\cref{algorithm:bundle}) to process any solution $(x,u,y)$ to~\ref{lp:cluster-topl}, obtaining the output $\calU_1$ (full bundles), $\calU_2$ (partial bundles), $\{n_U:U\in\calU_2\}$ (profitable factors) and $\que_j$ of bundles for every $j\in\calC$. Define $\dmax(j,S)=\max_{i\in S}d(i,j)$, and we have the following lemma. The first part is from~\cite{hajiaghayi2016constant}, and the second part uses the fact that if $x_{ij}>0$, then $d(i,j)\leq R$, and it takes at most 3 hops from $j$ to any facility in $U_{j,t}$.
\begin{lemma}\label{lemma:3bundle} (\cite{hajiaghayi2016constant})
	Let $U_{j,t}$ be the $t$-th bundle added to $\queue_j$, then if $t\le\floor{u_j}$ and $V_{j,t}$ is the $t$-th closest unit mass of facilities in $F_j$ (split if necessary), we have $\dmax(j,U_{j,t})\leq3\dmax(j,V_{j,t})$; if $t=\floor{u_j}+1$, then $\dmax(j,U_{j,t})\leq 3R$.
\end{lemma}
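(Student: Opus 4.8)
The plan is, for each client $j$ and each index $t$, to isolate the single iteration of \algb in which $U_{j,t}$ is appended to $\queue_j$, and to bound $\dmax(j,U_{j,t})$ by at most three triangle‑inequality hops: one hop from $j$ to a facility $i_0$ that $U_{j,t}$ shares with the ``ideal'' volume $U$ the algorithm computes at that iteration, and two further hops through the client $j'$ that originally created $U_{j,t}$ (when $U_{j,t}$ is a reused bundle) out to its farthest facility. The first assertion is the bundle‑creation argument of~\cite{hajiaghayi2016constant}; the second is a short variant of it exploiting the distance cap $R$.

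\emph{First part, $t\le\floor{u_j}$.} Here $U_{j,t}$ is appended during the first loop of \algb, so $\queue_j$ has received exactly $t-1$ bundles beforehand, and each such step removes at most one unit of $y$-mass from $F_j'$; hence $y(F_j\setminus F_j')\le t-1$ at this iteration. Since the facilities of $F_j$ lying within distance $\dmax(j,V_{j,t})$ of $j$ carry at least $t$ units of mass by definition of $V_{j,t}$, at least one full unit of $F_j'$-mass still lies in that ball, so the nearest unit volume $U$ of $F_j'$ computed at this iteration satisfies $\dmax(j,U)\le\dmax(j,V_{j,t})$. If $U$ meets no member of $\calU_1$, then $U_{j,t}=U$ and we are done. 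Otherwise $U_{j,t}=U'$ for some $U'\in\calU_1$ created at an earlier iteration $\tau'$ by a client $j'$ as \emph{its} nearest unit volume of $F_{j'}'$. At time $\tau'$ client $j$ was still eligible (it held at most $t-1<\floor{u_j}$ bundles), so the minimum‑radius selection rule of the first loop gives $\dmax(j',U')\le\dmax(j,U_j^{\tau'})$, where $U_j^{\tau'}$ is $j$'s nearest unit volume of $F_j'$ at time $\tau'$; and since $F_j'$ only loses facilities as the algorithm proceeds, $\dmax(j,U_j^{\tau'})\le\dmax(j,U)\le\dmax(j,V_{j,t})$. Picking $i_0\in U\cap U'$ and any $i'\in U'$,
\[d(j,i')\le d(j,i_0)+d(i_0,j')+d(j',i')\le\dmax(j,U)+2\dmax(j',U')\le 3\dmax(j,V_{j,t}),\]
and taking the maximum over $i'\in U'=U_{j,t}$ proves $\dmax(j,U_{j,t})\le 3\dmax(j,V_{j,t})$.

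\emph{Second part, $t=\floor{u_j}+1$.} Now $U_{j,t}$ is appended in the second loop, where $j$ receives a single at‑most‑unit volume $U\subseteq F_j'$ and $F_j'$ is then emptied. Because $F_j'\subseteq F_j=\{i:x_{ij}>0\}$ and the constraint of~\eqref{lp:basic-cluster} forcing $x_{ij}=0$ whenever $d(i,j)>R$ implies $d(i,j)\le R$ for every $i\in F_j$, we get $\dmax(j,U)\le R$; the same observation shows that every facility ever placed into a bundle lies within $R$ of the client that created it. If $U_{j,t}=U$ is a fresh bundle we are done with bound $R\le 3R$. If instead $U_{j,t}=U'\in\calU_1\cup\calU_2$ was created by $j'$ and shares a facility $i_0$ with $U$, then for every $i'\in U'$,
\[d(j,i')\le d(j,i_0)+d(i_0,j')+d(j',i')\le R+R+R=3R,\]
so $\dmax(j,U_{j,t})\le 3R$.

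\textbf{Expected main obstacle.} The delicate step is entirely in the first part: one must verify that the $t-1$ earlier removals shrink $F_j'$ by at most $t-1$ units of mass (so the nearest unit volume always exists and is no worse than $V_{j,t}$), and then combine the greedy minimum‑radius selection rule with the monotone shrinking of $F_j'$ to carry the bound $\dmax(j,V_{j,t})$ over to the \emph{other} client $j'$ that created the reused bundle. The second part is routine once one observes that every facility ever placed into a bundle lies within $R$ of its creating client.
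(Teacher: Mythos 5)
Your proof is correct and uses exactly the argument the paper intends: the paper itself only cites~\cite{hajiaghayi2016constant} for the first part and sketches the second as a ``3 hops'' bound, and your write-up fills in precisely that bundle-swap/triangle-inequality reasoning (selection rule $\Rightarrow$ $\dmax(j',U')\le\dmax(j,V_{j,t})$, then three hops through a shared facility $i_0$), together with the straightforward $R$-cap argument for the $t=\floor{u_j}+1$ case.
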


Next we define the auxiliary LP for rounding.
\begin{alignat*}{3}
	\text{max} &&\quad\sum_{U\in\calU_2}n_U\cdot z(U)&+\sum_{j\in\calC}\sum_{U\in\queue_j}\mathbbm{1}\left[U\in\calU_1\right]\tag{ACl($\rbt$)}\label{lp:aux:k}\\
	\text{s.t.}&& z(U)&=1 \quad\quad\forall U\in\calU_1\tag{ACl($\rbt$).1}\label{lp:aux:k1}\\
	&& z(U)&\leq 1\quad\quad\forall U\in\calU_2\tag{ACl($\rbt$).2}\label{lp:aux:k2}\\
	&& z(g^{-1}(i'))&\leq1 \quad\quad\forall i'\in\calF\tag{ACl($\rbt$).3}\label{lp:aux:k3}\\
	&& z(\calF')&\leq k\tag{ACl($\rbt$).4}\label{lp:aux:k4}\\
	&& z_i&\geq0\quad\quad\forall i\in\calF'.\tag{ACl($\rbt$).5}\label{lp:aux:k5}
\end{alignat*}

\begin{lemma}\label{lemma:aux:k}
	\emph{\ref{lp:aux:k}} has integral extreme points with optimum at least $m$.
\end{lemma}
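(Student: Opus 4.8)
The plan is to prove the two halves of the statement separately: that the feasible region of~\ref{lp:aux:k} is an integral polytope, and that it contains a point of objective value at least $m$ (so it is in particular nonempty, and being bounded — each $z_{i'}\in[0,1]$ by~\eqref{lp:aux:k3} — it has extreme points, all then integral, with an integer optimum $\ge m$).

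\textbf{Integrality.} The constraint matrix $A$ of~\ref{lp:aux:k} (rows~\eqref{lp:aux:k1}--\eqref{lp:aux:k4}) is totally unimodular, so by the Hoffman--Kruskal theorem and integrality of the right-hand side (entries $1$ and $k$), the polytope $\{z\ge 0 : Az\le b,\ z(U)=1\ \forall U\in\calU_1\}$ is integral. To see $A$ is TU: its rows are the bundle rows~\eqref{lp:aux:k1}--\eqref{lp:aux:k2}, whose supports are pairwise \emph{disjoint} since each copy in $\calF'$ lies in at most one bundle (enforced by the reuse branches of \algb); the copy-class rows~\eqref{lp:aux:k3}, whose supports $g^{-1}(i')$ \emph{partition} $\calF'$; and one all-ones row~\eqref{lp:aux:k4}. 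The first two families form the incidence matrix of a bipartite graph (bundles vs.\ facilities, one edge per copy), hence TU; appending the all-ones row preserves TU, as the Ghouila--Houri criterion is met by splitting a given row set into (bundle rows)\,/\,(facility rows) when it omits the all-ones row, and into $\{$all-ones row$\}$\,/\,(everything else) when it contains it. Equivalently, \ref{lp:aux:k} is a min-cost flow / transportation-type LP, paralleling the simpler full-bundles-only arguments in~\cite{hajiaghayi2016constant,yan2015lp}.

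\textbf{A feasible point of value $\ge m$.} I would plug in $z_{i'}:=y_{i'}$ for each copy $i'\in\calF'$ (the facility mass inherited after the split). Feasibility is immediate from \algb: every full bundle carries $y$-mass exactly $1$ and every partial bundle at most $1$ (so~\eqref{lp:aux:k1}--\eqref{lp:aux:k2} hold), $y(g^{-1}(i'))=y_{i'}\le1$ (\eqref{lp:aux:k3}), and $y(\calF')=\sum_{i'\in\calF}y_{i'}=k$ by~\eqref{lp:cluster-topl5} (\eqref{lp:aux:k4}). For the objective, observe that for each client $j$ only the \emph{last} bundle $U_j$ placed in $\queue_j$ can possibly be partial (partial bundles are created or reused only in the second \textbf{while} loop), so $|\queue_j\cap\calU_1|=\ceil{u_j}-\mathbbm{1}[U_j\in\calU_2]$; and since $n_U=|\{j:U_j=U\}|$ for $U\in\calU_2$, the value at $z=y$ equals $\sum_{j\in\calC}\ceil{u_j}-\sum_{j:\,U_j\in\calU_2}\bigl(1-y(U_j)\bigr)$. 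Hence it suffices to prove $y(U_j)\ge u_j-\floor{u_j}$ whenever $U_j\in\calU_2$: then each such $j$ contributes $\ceil{u_j}-\bigl(1-y(U_j)\bigr)=\floor{u_j}+y(U_j)\ge u_j$, every other $j$ contributes $\ceil{u_j}\ge u_j$, and the total is at least $\sum_{j\in\calC}u_j\ge m$ by~\eqref{lp:cluster-topl2}.

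\textbf{Main obstacle: the residual bound.} The delicate point — where the partial-bundle mechanism of \algb must be used carefully — is establishing $y(U_j)\ge u_j-\floor{u_j}$ for $U_j\in\calU_2$. After the first loop at most $\floor{u_j}$ units of facility mass have been stripped from $F_j$, so the residual set $F_j'$ still satisfies $y(F_j')\ge u_j-\floor{u_j}$. If $U_j$ is the chunk freshly cut from $F_j'$ in the second loop, it is the nearest at-most-unit volume there, so $y(U_j)=\min\{1,y(F_j')\}\ge u_j-\floor{u_j}$. If instead $U_j$ is a pre-existing partial bundle $U'$ that $j$'s fresh chunk overlaps, I would invoke the selection rule of \algb: the second loop processes clients in decreasing order of the $y$-mass of their current nearest chunk, so the creator $j'$ of $U'$ was handled earlier and $y(U')$ equals $j'$'s fresh-chunk mass, which is at least $j$'s fresh-chunk mass, hence $\ge u_j-\floor{u_j}$. (Reusing a \emph{full} bundle, or any bundle carried over from the first loop, is harmless since then $U_j\in\calU_1$ and $|\queue_j\cap\calU_1|$ already counts $\ceil{u_j}$.) Verifying this monotone processing order, together with the bookkeeping that $F_j'$ loses at most $\floor{u_j}$ mass across the (possibly reusing) iterations of the first loop, is the step I expect to require the most care.
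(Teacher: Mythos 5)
Your proof is correct and follows essentially the same route as the paper: plug in $z=y$, verify feasibility, reduce the objective bound to $y(U)\ge u_j-\floor{u_j}$ for the (at most one) partial bundle in $\queue_j$, and invoke the non-increasing $y$-mass order of the second loop of \algb together with $\sum_j u_j\ge m$. The only presentational difference is in the integrality step: where the paper simply observes that the constraints decompose into two laminar families on $\calF'$ (namely $\calU_1\cup\calU_2$, whose members are pairwise disjoint, and $\{g^{-1}(i')\}_{i'\in\calF}\cup\{\calF'\}$) and cites the standard integrality of such polytopes, you give a direct total-unimodularity proof via Ghouila--Houri, treating the row $z(\calF')\le k$ separately; both arguments are valid and interchangeable here, with the laminar-family phrasing being a touch more compact since $\{g^{-1}(i')\}\cup\{\calF'\}$ already absorbs the ``all-ones'' constraint into a single laminar family.
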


\begin{proof}
	The constraints of~\ref{lp:aux:k} contain two laminar families, $\calU_1\cup\calU_2$ and $\{g^{-1}(i'):i'\in\calF\}\cup\{\calF'\}$ (also see~\cite{hajiaghayi2016constant}), thus the extreme points are integral. We also know that $y$ is feasible to this LP, so we only need to prove the objective value of $y$ is at least $m$.
	
	Fix $j$ and consider the contribution of $\queue_j$. The first $\floor{u_j}$ bundles added to $\queue_j$ are all full bundles, so they all contribute to the second sum. If $u_j\neq\floor{u_j}$, then there is an additional bundle $U$ added. If $U$ is full, the contribution of $U$ in $\queue_j$ is also included in the second sum; otherwise, the contribution $y(U)$ of the partial bundle $U$ is counted in the first sum, because in \algb, whenever we add a partial bundle to $\queue_j$, we make sure to increase the counter $n_U$ by 1. Hence the objective value is exactly
	\[\sum_{j\in\calC}\sum_{U\in\queue_j}y(U)\geq\sum_{j\in\calC}\left(\floor{u_j}+(u_j-\floor{u_j})\right)=\sum_{j\in\calC}u_j\geq m,\]
	where the first inequality is because in the second loop of \algb, the partial bundles are sorted by their total mass in non-increasing order, hence any one added for $j$ has size at least $u_j-\floor{u_j}$. The last inequality is due to \eqref{lp:cluster-topl2}.
\end{proof}

We then simply compute the integral optimal solution $z^\star$ of~\ref{lp:aux:k}, and define the solution $\hat S=\{g(i):z_i^\star=1\}\subseteq\calF$. Let $\hat S_j=\{g(i):i\in U\in\que_j,z^\star_i=1\}$ be the set of open facilities $j$ connects to, which are found in its $\que_j$.

\begin{proof}[Proof of~\cref{theorem:center:main},~\cref{theorem:center:main1}]
First, the solution $\hat S$ and $\{\hat S_j:j\in\calC\}$ satisfy the coverage constraint $m$ according to~\cref{lemma:aux:k}. Fix any $j$, and we only consider the case when $u_j\notin\mathbb{Z}$ and $u_j>\ell$, since other cases are simpler. According to the rounding algorithm, there is exactly one open facility in each of $U_{j,1},\dots,U_{j,\floor{u_j}}$, and the worse case is when there exists another open facility in $U_{j,\ceil{u_j}}$. Using~\cref{lemma:3bundle}, the distances from $j$ to these open facilities are at most $3\dmax(j,V_{j,1}),\dots,3\dmax(j, V_{j,\floor{u_j}}),3R$, where we recall that $V_{j,t}$ is the $t$-th furthest unit mass of facilities in $F_j$. Thus the $q$-th power of $\topq{\ell}{q}(\vec{d}(j,\hat S_j))$ is at most
	\[\left(\sum_{t=\floor{u_j}-\ell+1}^{\floor{u_j}}(3\dmax(j, V_{j,t}))^q\right)+(3R)^q.\]
	The first $\ell-1$ terms in the sum above can each be bounded by the average distance $q$-th power from $j$ to the bundle following it, i.e., $\dmax(j,V_{j,t})^q\leq\sum_{i\in V_{j,t+1}}x_{ij}d(i,j)^q$, noticing that $V_{j,t+1}$ has unit mass and every facility in $V_{j,t+1}$ is at least $\dmax(j,V_{j,t})$ as far from $j$. Therefore, the sum is at most
	\begin{equation}
	(3R)^q+\sum_{t=\floor{u_j}-\ell+1}^{\floor{u_j}-1}\sum_{i\in V_{j,t+1}}x_{ij}(3d(i,j))^q+(3R)^q\leq2(3R)^q+3^q\sum_{t=\floor{u_j}-\ell+1}^{\floor{u_j}}\sum_{i\in V_{j,t}}x_{ij}d(i,j)^q.\label{eq:center-bundle-shift}
	\end{equation}
	Furthermore, the sum above is at most the sum of weighted distance $q$-th powers from $j$ to its furthest $\ell$ connected facility mass, denoted by $V_{\ell,\textrm{far}}$. According to \eqref{lp:cluster-topl3}, the minimum distance from $j$ to $V_{\ell,\textrm{far}}$ is \emph{at most} $T_\ell^\star$, hence the sum can be split into two parts,
	\begin{align}
\sum_{i\in V_{\ell,\textrm{far}}}x_{ij}d(i,j)^q&=\sum_{i\in V_{\ell,\textrm{far}},d(i,j)\geq T_\ell^\star}x_{ij}d(i,j)^q+\sum_{i\in V_{\ell,\textrm{far}},d(i,j)< T_\ell^\star}x_{ij}d(i,j)^q\notag\\
&\leq_{\text{\eqref{lp:cluster-topl4}}} B^q+T_\ell^{\star q}\sum_{i\in V_{\ell,\textrm{far}},d(i,j)< T_\ell^\star}x_{ij}\notag\\
&\leq B^q+\ell\cdot T_\ell^{\star q}.\label{eq:center-topl-split}
	\end{align}
	Combining \eqref{eq:center-bundle-shift}\eqref{eq:center-topl-split}, and the fact that $\opt\geq R,\opt\geq\ell^{1/q}\cdot T^\star_\ell$ and $(1+\epsilon)\opt\geq B$,
	\begin{equation}
	\topq{\ell}{q}(\vec{d}(j,\hat S_j))\leq \sqrt[q]{2(3R)^q+3^q(B^q+\ell\cdot T^{\star q}_\ell)}\leq3 (4^{1/q}+\epsilon)\opt.\label{eq:center:topl:inline}
	\end{equation}
	Finally, we take the maximum over $j$ and conclude the result.
\end{proof}

\subsection{Fair Max-Norm $k$Center}

We start by restricting the attention to distributions paired only with \emph{deterministic} assigning algorithms.
Consider $(\calD,\calA)$ as an optimal solution, where $\calD$ is a distribution on subsets of $\calF$, and $\calA$ is a probabilistic algorithm that outputs the assigned facilities $\{S_j\subseteq S:j\in\calC\}$ given the input $S\sim\calD$. We may always replace $\calA$ with $\calA'$, simply by greedily connecting $j$ to as many of its nearest open facilities as possible, as long as the norm of its cost vector is at most $\opt$ and there are at most $r_j$ connections. It is not hard to check that $(\calD,\calA')$ is also optimal. Therefore, we may only consider solutions that have such greedy deterministic algorithms. For $B\geq0$, we also define $\counted_B(j,S)$ as the maximum \emph{number} of facilities in $S$ that $j$ can connect to, as long as the connection cost vector has $f$-norm at most $B$ and the connection number is at most $r_j$.

Our round-and-cut framework is adapted from~\cite{anegg2020technique}. In general, we consider {\mon}s, but for conciseness, we only state the results for $\topq{\ell}{q}$ norms here and omit the proof for~\cref{theorem:fair:center:topl2} of~\cref{theorem:fair:center:topl}, as the algorithm essentially mirrors~\cref{app:center:ordered}. Recall our definitions for $\rbt$, and they must satisfy
\begin{equation}
    B\geq R;\;B\geq\ell^{1/q}\cdot T^\star_\ell.\label{eq:fair:center:guess}
\end{equation}

We also state our LP relaxations for the fair problem with no coverage or fairness constraints, differing from the orignal relaxation version~\ref{lp:cluster-topl}.
	\begin{alignat}{4}
	\text{min\quad}&& 0\tag{FCl$(\rbt)$}\label{lp:fair:center:topl}\\
	\text{s.t.\quad}&&(x,u,y)&\in\text{\ref{lp:basic-cluster}}\tag{FCl$(\rbt)$.1}\label{lp:fair:center:topl1}\\
	&&\sum_{d(i,j)\geq T^\star_\ell}x_{ij}&\leq \ell &\forall j\in\calC\tag{FCl$(\rbt)$.2}\label{lp:fair:center:topl2}\\
	&&\sum_{d(i,j)\geq T^\star_\ell}x_{ij}d(i,j)^q&\leq B^q &\forall j\in\calC\tag{FCl$(\rbt)$.3}\label{lp:fair:center:topl3}\\
	&&\sum_{i\in\calF}y_i&=k.\tag{FCl$(\rbt)$.4}\label{lp:fair:center:topl4}
	\end{alignat} 

Define $\calF_c(B)$ as the collection of all subsets of $\calF$ that correspond to the integral solutions of \ref{lp:fair:center:topl} \emph{for all possible $R,T^\star_\ell$ that satisfy~\eqref{eq:fair:center:guess}},
\[\calF_c(B):=\bigcup_{R,T^\star_\ell:\eqref{eq:fair:center:guess}}\{S\subseteq\calF:(x,u,y)\in{\text{\ref{lp:fair:center:topl}}}\cap\mathbb{Z}^{|\calF|\times|\calC|}\times\mathbb{Z}^{|\calC|}\times\mathbb{Z}^{|\calF|},y_i=1\Leftrightarrow i\in S\},\]
and the truly feasible subsets $\calE_c(B):=\{S\subseteq\calF:\counted_B(j,S)\in[l_j,r_j]\forall j\in\calC\}$. It is easy to verify that $\calE_c(B)\subseteq\calF_c(B)$. We define the following LP and its dual program based on $\calF_c(B)$,
\begin{figure}[ht]
\centering
\fbox{
\begin{minipage}[t]{0.47\linewidth}
\begin{flalign*}
	  \text{min\;\;}&0\tag{$\primal{c}{B}$}\label{lp:center-primal}\\
  \text{s.t.\;\;}&\forall j\sum_{S\in\calF_c(B)}\lambda_S\cdot\counted_B(j,S) \ge e_j\\
 &\sum_{S \in \calF_c(B)} \lambda_S = 1,\;\lambda\ge 0.
  \end{flalign*}
\end{minipage}%
\;\vrule
\begin{minipage}[t]{0.47\linewidth}
  \begin{flalign*}
	  \text{max\;\;}&\sum_{j\in\calC}\alpha_j\cdot e_j-\mu\tag{$\dual{c}{B}$}\label{lp:center-dual}\\
  \text{s.t.\;\;}& \forall S\in\calF_c(B)\sum_{j\in\calC}\alpha_j\cdot\counted_B(j,S)\le\mu\\
  & \alpha \ge 0,\;\mu\in\mathbb{R}.
    \end{flalign*}
\end{minipage}%
}
    \caption{The primal and dual programs supported on $\calF_c(B)$.}
    \label{figure:center-primal-dual}
\end{figure}

Clearly if~\ref{lp:center-primal} is feasible, its optimal value and the optimum of~\ref{lp:center-dual} are both 0. If there is some solution $(\almu)$ to~\ref{lp:center-dual} with a positive objective value, then since the constraints of~\ref{lp:center-dual} are scale-invariant,~\ref{lp:center-dual} is unbounded and thus~\ref{lp:center-primal} is infeasible. We define another polytope \ref{lp:fair:center:q} that is directed related to the feasibility of~\ref{lp:center-primal},
\begin{equation}
\left\{(\almu)\in\mathbb{R}_{\geq0}^{|\calC|}\times\mathbb{R}\left|
\begin{array}{cc}
	\sum_{j\in\calC}\alpha_j\cdot e_j\geq \mu+1 &\\
	\sum_{j\in\calC}\alpha_j\cdot\counted_B(j,S)\leq \mu & \forall S \in\calF_c(B)
\end{array}
\right.\right\}.\tag{$\calQ_c(B)$}\label{lp:fair:center:q}
\end{equation}

\begin{claim}\label{claim:center:pvq}(\cite{anegg2020technique})
	\ref{lp:fair:center:q} is empty if and only if~\ref{lp:center-primal} is feasible.
\end{claim}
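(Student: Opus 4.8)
The plan is to prove both implications by LP weak duality between $\primal{c}{B}$ and $\dual{c}{B}$ of \cref{figure:center-primal-dual}, exactly mirroring the argument behind \cref{claim:load:pvq}. A preliminary observation: $\calF_c(B)$ is a \emph{finite} set --- it is the union, over the polynomially many guesses of $R$ and $T^\star_\ell$, of families of subsets of the finite set $\calF$ --- so $\primal{c}{B}$ and $\dual{c}{B}$ are genuine finite-dimensional linear programs and ordinary LP duality applies. Moreover $\dual{c}{B}$ is always feasible, since $(\alpha,\mu)=(\mathbf{0},0)$ satisfies every constraint $\sum_{j}\alpha_j\counted_B(j,S)=0\le 0=\mu$; and its constraints and its homogeneous part are invariant under positive scaling of $(\alpha,\mu)$, so its feasible region is a cone and its optimum is either $0$ (attained at the origin, when $\primal{c}{B}$ is feasible) or $+\infty$.

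For the forward direction, assume $\primal{c}{B}$ is feasible with a witness $\{\lambda_S\}_{S\in\calF_c(B)}$, and suppose toward a contradiction that some $(\alpha,\mu)\in{\text{\ref{lp:fair:center:q}}}$. Using $\sum_S\lambda_S=1$, $\lambda\ge 0$, $\alpha\ge 0$, the primal coverage inequalities $\sum_S\lambda_S\counted_B(j,S)\ge e_j$, and the defining inequalities $\sum_j\alpha_j\counted_B(j,S)\le\mu$ of $\calQ_c(B)$, one gets
\[
\mu+1\;\le\;\sum_{j\in\calC}\alpha_j e_j\;\le\;\sum_{j\in\calC}\alpha_j\sum_{S\in\calF_c(B)}\lambda_S\counted_B(j,S)\;=\;\sum_{S}\lambda_S\sum_{j}\alpha_j\counted_B(j,S)\;\le\;\mu\sum_{S}\lambda_S\;=\;\mu,
\]
which is absurd. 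Hence $\calQ_c(B)=\emptyset$. (This is just weak duality combined with the normalization ``$\ge\mu+1$'' baked into $\calQ_c(B)$.)

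For the converse I would argue contrapositively: assume $\primal{c}{B}$ is infeasible. Since $\dual{c}{B}$ is feasible, LP duality forces $\dual{c}{B}$ to be unbounded, so by the cone observation there is a dual-feasible $(\alpha,\mu)$ with $\sum_j\alpha_j e_j-\mu>0$. Scaling $(\alpha,\mu)$ by a sufficiently large $t>0$ (permissible since the dual feasible region is a cone) yields a dual-feasible point with $\sum_j\alpha_j e_j-\mu\ge 1$; this point satisfies $\sum_j\alpha_j e_j\ge\mu+1$, every $\sum_j\alpha_j\counted_B(j,S)\le\mu$, and $\alpha\ge 0$, i.e.\ it lies in $\calQ_c(B)$. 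Thus $\calQ_c(B)\neq\emptyset$, as required.

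I do not anticipate a genuine obstacle here; the two points that merit care are (i) confirming that the programs in \cref{figure:center-primal-dual} really are an LP/its dual in the stated form --- the equality constraint $\sum_S\lambda_S=1$ contributes the free variable $\mu$, and the zero objective of $\primal{c}{B}$ contributes the constraint ``$\le\mu$'' --- and (ii) the degenerate case $\calF_c(B)=\emptyset$, where $\primal{c}{B}$ is vacuously infeasible (the normalization $\sum_S\lambda_S=1$ has no support) while $\dual{c}{B}$ has only vacuous constraints and is unbounded, making $\calQ_c(B)$ nonempty --- consistent with the claim. Everything else is the routine weak-duality manipulation displayed above.
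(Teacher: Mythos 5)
Your proposal is correct and matches the approach the paper (following Anegg~et~al.) intends: the claim is a direct consequence of LP duality between $\primal{c}{B}$ and $\dual{c}{B}$, with the ``$\ge\mu+1$'' normalization in $\calQ_c(B)$ converting ``unbounded positive dual objective'' into a concrete membership certificate via the cone structure of the dual feasible region. The paper does not give a standalone proof --- it cites \cite{anegg2020technique} and sketches the same reasoning in the paragraph preceding the claim (feasible primal $\Rightarrow$ dual optimum $0$; positive dual objective $\Rightarrow$ unbounded dual $\Rightarrow$ infeasible primal) --- and your write-up is a faithful, fully fleshed-out version of exactly that argument, including the edge-case check for $\calF_c(B)=\emptyset$.
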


The core lemma is given as follows, which we prove in~\cref{app:fair:center:core-proof}, and the proof of our main theorem on fairness immediately follows.

\begin{lemma}\label{lemma:fair:center:core}
	Fix $B\geq0$. There is a polynomial time algorithm that, given $(\almu)\in\mathbb{Q}_{\geq0}^{|\calC|}\times\mathbb{Q}$ that satisfies $\sum_{j\in\calC}\alpha_j\cdot e_j\geq \mu+1$, either certifies that $(\almu)\in\calQ_c(B)$ or returns a set $S\in\calE_c(3\cdot4^{1/q}B)$ with $\sum_{j\in\calC}\alpha_j\cdot\counted_{3\cdot4^{1/q}B}(j,S)> \mu$.
\end{lemma}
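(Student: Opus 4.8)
The plan is to mirror the structure used for \cref{lemma:fair:load:core} but now work with the clustering LP \ref{lp:fair:center:topl} and the bundle machinery developed in \cref{section:center:topl}. Given the input $(\almu)$ with $\sum_{j\in\calC}\alpha_j e_j\ge\mu+1$, I would first attempt to certify membership in $\calQ_c(B)$; the only way this can fail is that some $S\in\calF_c(B)$ violates $\sum_{j}\alpha_j\cdot\counted_B(j,S)\le\mu$, i.e.\ there is a set witnessing a large weighted coverage. So the real task is: \emph{assuming no such exact integral witness is easily found}, use $(\almu)$ as an objective and solve a strengthened LP to produce a genuinely feasible set $S\in\calE_c(3\cdot4^{1/q}B)$ with $\sum_j\alpha_j\cdot\counted_{3\cdot4^{1/q}B}(j,S)>\mu$, which then serves as the separating hyperplane.

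Concretely, I would set up, for each guess of $R,T^\star_\ell$ satisfying \eqref{eq:fair:center:guess}, a maximization LP: maximize $\sum_{j\in\calC}\alpha_j u_j$ (a proxy for weighted coverage) subject to the constraints \eqref{lp:fair:center:topl1}--\eqref{lp:fair:center:topl4}, dropping the global coverage constraint \eqref{lp:cluster-topl2} entirely since coverage is now driven by the $\alpha$-weighted objective. If the optimal LP value is at most $\mu$ for every guess, then in particular $(\almu)$ respects the inequality against every fractional — hence every integral — point of every $\calF_c(B)$ relaxation, so $(\almu)\in\calQ_c(B)$ and we are done. Otherwise, take the guess and fractional solution $(x,u,y)$ with $\sum_j\alpha_j u_j>\mu$. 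Now run \algb on $(x,u,y)$ to obtain $\calU_1,\calU_2,\{n_U\},\{\queue_j\}$, and solve the auxiliary rounding LP — but with the \emph{modified objective} $\sum_{U\in\calU_2}(\text{weight of }U)\cdot z(U)+\sum_{j}\sum_{U\in\queue_j}\mathbbm{1}[U\in\calU_1]\cdot(\text{weight})$, where the weight attached to a bundle in $\queue_j$ is $\alpha_j$ rather than $1$. The same laminar-family argument as in \cref{lemma:aux:k} shows this modified LP still has integral extreme points, and $y$ remains feasible with objective value $\ge\sum_j\alpha_j u_j>\mu$; so the integral optimum $z^\star$ gives a set $\hat S=\{g(i):z^\star_i=1\}$ whose $\alpha$-weighted bundle-coverage exceeds $\mu$.

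It remains to translate ``$\alpha$-weighted bundle coverage $>\mu$'' into ``$\sum_j\alpha_j\cdot\counted_{3\cdot4^{1/q}B}(j,\hat S)>\mu$'' and to show $\hat S\in\calE_c(3\cdot4^{1/q}B)$. For the first, the key point is exactly the bound already proved inside the \hyperref[theorem:center:main]{proof of \cref{theorem:center:main1}}: for each $j$, each bundle in $\queue_j$ that contains an open facility contributes a connection whose distance is, via \cref{lemma:3bundle}, at most $3\dmax(j,V_{j,t})$ or $3R$, and the computation \eqref{eq:center-bundle-shift}--\eqref{eq:center:topl:inline} shows the $\topq{\ell}{q}$ norm of the $\ell$ furthest such connections is at most $3\cdot4^{1/q}\cdot\opt\le 3\cdot4^{1/q}B$ (using \eqref{eq:fair:center:guess}). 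Hence $j$ may legitimately keep all the connections coming from bundles in $\queue_j$ that host an open facility, so $\counted_{3\cdot4^{1/q}B}(j,\hat S)$ is at least the number of such bundles; summing with weights $\alpha_j$ gives $\ge$ the modified LP value $>\mu$. For membership in $\calE_c(3\cdot4^{1/q}B)$ I need $\counted_{3\cdot4^{1/q}B}(j,\hat S)\in[l_j,r_j]$; the upper bound is by definition of $\counted$, and the lower bound $l_j$ follows because $|\queue_j|=\ceil{u_j}\ge\ceil{l_j}\ge l_j$ and \cref{lemma:3bundle}/\eqref{eq:center:topl:inline} guarantee \emph{every} bundle in $\queue_j$ yields an admissible connection once it hosts an open facility — and here I would invoke, as in the non-fair case, that the rounding opens at least one facility in every full bundle, so after possibly contracting to the first $l_j$ bundles of $\queue_j$ the lower bound holds (modulo the standard care that $u_j\ge l_j$ and $\ceil{u_j}$ full-or-partial bundles suffice). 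The main obstacle is precisely this last bookkeeping: ensuring that dropping the coverage constraint does not break the per-client lower bound $l_j$, and that the $\alpha$-weighted auxiliary LP still certifies integrality and the $>\mu$ inequality simultaneously — I expect to handle it by the same ``$z^\star$ opens a facility in each full bundle, and partial bundles only help'' observation used in \cref{lemma:aux:k}, combined with replacing $m$ by the trivial bound and arguing client-by-client.
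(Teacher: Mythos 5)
Your proposal is correct and follows essentially the same route as the paper's proof: for each valid guess $(\rbt)$ you test whether any feasible $(x,u,y)$ of \ref{lp:fair:center:topl} has $\sum_j\alpha_j u_j>\mu$, and if so you run \algb and solve an $\alpha$-reweighted auxiliary LP (whose bundle weight is exactly the paper's $\beta_U=\sum_{j:U\in\queue_j}\alpha_j$) to produce an integral $\hat S\in\calE_c(3\cdot4^{1/q}B)$ that serves as a separating hyperplane. The only cosmetic difference is that you phrase the detection step as a maximization LP and compare its optimum against $\mu$, whereas the paper encodes the strict inequality via the rational threshold $\eta$ from~\cite{anegg2020technique} and checks feasibility of the polytope $\calP_c^\almu(\rbt)$; the two are interchangeable, and your remaining bookkeeping (disjointness of bundles, $\floor{u_j}\ge l_j$ giving the per-client lower bound, the $3\cdot4^{1/q}B$ bound via \eqref{eq:center:topl:inline} and \eqref{eq:fair:center:guess}) matches the argument the paper delegates to ``identical to \cref{theorem:center:main}.''
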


\begin{proof}[Proof of~\cref{theorem:fair:center:topl},~\cref{theorem:fair:center:topl1}]
Essentially the same as~\cref{theorem:fair:load:topl}, thus omitted here.
\end{proof}



\bibliography{cluster.bib}

\newpage
\appendix

\section{Maximum-of-Norm Load Balancing}

\subsection{MaxOrdered-Norm Makespan Minimization}\label{app:load:ordered}
In this section, we consider the case when $f$ is a \mon, i.e. $f(\vec{v})=\max_{w\in\calW}\{w^T\vec{v}^\da\}$, where $\calW=\{\up{w}{n}:n\in[N]\}$ is a finite set of non-increasing non-negative weight vectors. Likewise, let $\sigma^\star$ be the optimal assignment, $\opt$ be the optimum and $j^\star(i,\ell)$ be the $\ell$-th largest job size in jobs $\sigma^{\star-1}(i)$, and define $T^\star_\ell=\max_i\{j^\star(i,\ell)\}$. The number of such thresholds becomes $\omega(1)$ and each takes a polynomial number of possible values, thus guessing all of them precisely is prohibitive. Taking the idea of sparsification from~\cite{chakrabarty2019approximation}, define $\pos=\{\min\{2^t,|\calJ|\}:t\geq0\}$, $\nextp{\ell}=\min_{t\in\pos,t>\ell}\{t\}$ ($\nextp{|\calJ|}=|\calJ|+1$ and $\up{w}{n}_{|\calJ|+1}=0$), and simplified weight vectors,
\[\widetilde{\calW}=\{{\up{\ww}{n}}:n\in[N],\up{\ww}{n}_t=\up{w}{n}_t\text{ if }t\in\pos\text{ otherwise }\up{w}{n}_{\nextp{t}}\},\]
then for the \mon $\tilde{f}$ defined by $\widetilde{\calW}$, the optimum $\widetilde{\opt}$ satisfies (see~\cite{chakrabarty2019approximation})
\begin{equation}
    \tilde{f}(\vec{v})\leq f(\vec{v})\leq 2\tilde{f}(\vec{v}),\text{ especially }\widetilde{\opt}\leq\opt\leq2\widetilde{\opt}.\label{eq:load-approx}
\end{equation}

We first guess $R=T^\star_1$ and assume the value is correct (since it takes only $|\calJ|M$ possible values), then guess a \emph{non-increasing non-negative sequence} $\calT=\{T_\ell:\ell\in\pos\}$ such that it is supported on
$\{2^{-s}R:s\in\mathbb{Z}_{\geq0},2^{-s}R\geq R/|\calJ|\}\cup\{R/|\calJ|\}$, and
\begin{equation}
    T_\ell=\left\{\begin{array}{cc}
    \in[T^\star_\ell,2T^\star_\ell) & T^\star_\ell\geq R/|\calJ|\\
    = R/|\calJ| & \textrm{otherwise.}
    \end{array}\right.\label{eq:load-thresholds}
\end{equation}
It is easy to see that $|\calT|=O(\log|\calJ|)$, with each entry having $O(\log|\calJ|)$ choices, so routine calculations show that there are at most $2^{O(\log|\calJ|)}=poly(|\calJ|)$ possible sequences. 

\begin{lemma}\label{lemma:load-gap}
\begin{equation*}
    \max_{n\in[N]}\left\{\sum_{\ell\in\pos}(\up{\ww}{n}_\ell-\up{\ww}{n}_{\nextp{\ell}})\ell\cdot T_\ell
    \right\}=O(\log|\calJ|)\widetilde{\opt}.
\end{equation*}
\end{lemma}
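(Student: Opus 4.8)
The plan is to read the left-hand side as a ``virtual'' evaluation of the sparsified \mon, in which the top-$\ell$ mass is pretended to be exactly $\ell\cdot T_\ell$, and then bound it coordinate by coordinate. Every index $\ell$ occurring in the sum lies in $\pos$, so $\up{\ww}{n}_\ell=\up{w}{n}_\ell$, and since $\up{\ww}{n}_{\nextp{\ell}}\ge 0$ it suffices to bound $\up{w}{n}_\ell\cdot\ell\cdot T_\ell$ for every $n\in[N]$ and every $\ell\in\pos$, and then add the $O(\log|\calJ|)$ contributions. The engine will be the per-coordinate inequality
\[
\ell\cdot\up{w}{n}_\ell\cdot T^\star_\ell\le\opt\qquad\text{for all }n\in[N]\text{ and all }\ell\in\{1,\dots,|\calJ|\}.
\]

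To establish this I would fix $n$ and $\ell$, assume $T^\star_\ell>0$ (otherwise it is trivial), and pick a machine $i$ with $j^\star(i,\ell)=T^\star_\ell$; this $i$ carries at least $\ell$ jobs under $\sigma^\star$, whose $\ell$ largest sizes are all at least $T^\star_\ell$, i.e.\ $\vec{p}(i,\sigma^\star)^\da_k\ge T^\star_\ell$ for $k\le\ell$. Because $\up{w}{n}$ is non-negative and non-increasing,
\[
\opt\ge f(\vec{p}(i,\sigma^\star))\ge\sum_{k=1}^{\ell}\up{w}{n}_k\cdot\vec{p}(i,\sigma^\star)^\da_k\ge\sum_{k=1}^{\ell}\up{w}{n}_\ell\cdot T^\star_\ell=\ell\cdot\up{w}{n}_\ell\cdot T^\star_\ell.
\]
The same computation with $\topt{\ell}$ in place of a single coordinate shows $\tilde{f}(v)\ge\up{\ww}{n}_\ell\cdot\topt{\ell}(v)$ for every $n$, which is precisely why the displayed sum in the lemma is the right surrogate for $\widetilde{\opt}$ once the exact thresholds $T^\star_\ell$ are replaced by the guesses $T_\ell$.

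With the key inequality in hand, I would bound a single term via $\bigl(\up{\ww}{n}_\ell-\up{\ww}{n}_{\nextp{\ell}}\bigr)\ell\cdot T_\ell\le\up{w}{n}_\ell\cdot\ell\cdot T_\ell$ and split on the two cases of~\eqref{eq:load-thresholds}. If $T^\star_\ell\ge R/|\calJ|$ then $T_\ell<2T^\star_\ell$, so the term is at most $2\ell\cdot\up{w}{n}_\ell\cdot T^\star_\ell\le 2\opt$. If $T^\star_\ell<R/|\calJ|$ then $T_\ell=R/|\calJ|$, and using $\ell\le|\calJ|$, $\up{w}{n}_\ell\le\up{w}{n}_1$, $R=T^\star_1$, and the key inequality with $\ell=1$, the term is at most $\up{w}{n}_1\cdot R=\up{w}{n}_1\cdot T^\star_1\le\opt$. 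In both cases a term is at most $2\opt\le 4\widetilde{\opt}$ by~\eqref{eq:load-approx}.

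Finally, $\pos\subseteq\{1,\dots,|\calJ|\}$ has $O(\log|\calJ|)$ elements, so summing the per-term bound over $\ell\in\pos$ yields $\sum_{\ell\in\pos}\bigl(\up{\ww}{n}_\ell-\up{\ww}{n}_{\nextp{\ell}}\bigr)\ell\cdot T_\ell=O(\log|\calJ|)\,\widetilde{\opt}$ for each fixed $n$, and the maximum over $n\in[N]$ inherits the bound. The only point that needs genuine care, rather than being a deep obstacle, is the truncated regime $T^\star_\ell<R/|\calJ|$, where $T_\ell$ is not within a constant factor of $T^\star_\ell$ and must instead be charged to the $\ell=1$ estimate; everything else is monotonicity bookkeeping. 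I would also remark that this lemma is exactly where the $O(\log n)$ factor of~\cref{theorem:load:main2} is born: each of the $O(\log|\calJ|)$ surviving thresholds can contribute $\Theta(\widetilde{\opt})$, and, unlike in single-vector ordered optimization, there appears to be no way to avoid paying for all of them at once.
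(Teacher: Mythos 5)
Your proposal is correct and follows essentially the same approach as the paper: both proofs bound each of the $O(\log|\calJ|)$ summands by $O(\opt)$ using the observation that a machine attaining $T^\star_\ell$ carries at least $\ell$ jobs of size at least $T^\star_\ell$, so the ordered norm of its job vector already pays $\ell\,\up{w}{n}_\ell\,T^\star_\ell$, and then handle the truncated case $T^\star_\ell< R/|\calJ|$ by charging to $\up{w}{n}_1 T^\star_1$. The only cosmetic difference is that the paper phrases the per-term bound as the ratio $(\up{\ww}{n}_\ell-\up{\ww}{n}_{\nextp{\ell}})\ell\big/\sum_{\ell'\le\ell}(\ell-\prevp{\ell})\up{\ww}{n}_{\ell'}\le 1$ against $\widetilde{\opt}$, whereas you drop the negative piece and appeal directly to $\ell\,\up{w}{n}_\ell\,T^\star_\ell\le\opt$; these are interchangeable monotonicity bookkeeping.
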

\begin{proof}
Fix $n$ and $\ell$, define $\prevp{\ell}=\max_{t\in\pos:t<\ell}\{t\}$ ($\prevp{1}=0$). By definition of $T^\star_\ell$, the optimal solution at least satisfies
\begin{equation}
    \widetilde{\opt}\geq\left(\sum_{\ell'\leq\ell}\up{\ww}{n}_{\ell'}\right)T^\star_\ell=T^\star_\ell\cdot\sum_{\ell'\leq\ell}(\ell-\prevp{\ell})\up{\ww}{n}_{\ell'},\label{eq:opt-to-t}
\end{equation}
and if $T^\star_\ell\leq R/|\calJ|$, we have $T_\ell= R/|\calJ|$. Summing over $\ell\in\pos$, we obtain
\begin{align}
    &\sum_{\ell\in\pos}(\up{\ww}{n}_\ell-\up{\ww}{n}_{\nextp{\ell}})\ell\cdot T_\ell\leq_{\text{\eqref{eq:load-thresholds}}}\sum_{\ell\in\pos}(\up{\ww}{n}_\ell-\up{\ww}{n}_{\nextp{\ell}})\ell (R/|\calJ|+2T^\star_\ell)\notag\\
    &\leq_{\text{\eqref{eq:opt-to-t}}} R\up{\ww}{n}_1+2\sum_{\ell\in\pos} \frac{(\up{\ww}{n}_\ell-\up{\ww}{n}_{\nextp{\ell}})\ell}{\sum_{\ell'\leq\ell}(\ell-\prevp{\ell})\up{\ww}{n}_{\ell'}}\cdot\widetilde{\opt},\label{eq:load-log-items}
\end{align}
and notice that every fraction in the above is at most 1, because $\up{\ww}{n}$ is non-increasing. Since there are $O(\log|\calJ|)$ entries in $\pos$ and $\widetilde{\opt}\geq R\cdot\max_n\{\up{\ww}{n}_1\}$, \eqref{eq:load-log-items} is at most $O(\log|\calJ|)\widetilde{\opt}$, and the lemma is proven by taking the maximum over $n\in[N]$.
\end{proof}

\begin{remark*}
In fact,~\cref{lemma:load-gap} is asymptotically tight. Consider the instance with a single weight vector $w=(1,\sqrt{2}-1,\sqrt{3}-\sqrt{2},\dots,\sqrt{r}-\sqrt{r-1})$, where we let $r=2^t$ for simplicity, with the optimum achieved by many vectors simultaneously,
\begin{align*}
    \up{\vec{v}}{1} &= (1,0,0,\dots,0),\\
    \up{\vec{v}}{2} &= (1/\sqrt{2},1/\sqrt{2},0,\dots,0),\\
    \up{\vec{v}}{3} &= (1/\sqrt{3},1/\sqrt{3},1/\sqrt{3},\dots,0),\\
    &\cdots\\
    \up{\vec{v}}{r} &= (1/\sqrt{r},1/\sqrt{r},1/\sqrt{r},\dots,1/\sqrt{r}).
\end{align*}

Evidently, the optimum is $1$, and the LHS of~\cref{lemma:load-gap} is
\begin{align*}
    &(w_1-w_2)+(w_2-w_4)\cdot\sqrt{2}+(w_4-w_8)\cdot\sqrt{4}+\cdots+(w_{r/2}-w_r)\cdot\sqrt{r/2}+w_r\cdot\sqrt{r}\\
    =\,&w_1+w_2(\sqrt{2}-1)+w_4(\sqrt{4}-\sqrt{2})+\cdots+w_r(\sqrt{r}-\sqrt{r/2})\\
    \geq\,&\left(1-\frac{1}{\sqrt{2}}\right)\sum_{l=0}^t\left(\sqrt{2^l}-\sqrt{2^l-1}\right)\sqrt{2^l}\geq\left(\frac{1}{2}-\frac{1}{2\sqrt{2}}\right)t=\Omega(\log r).
\end{align*}

It is clear from the proof that, if $|\pos|$ is a constant, then LHS of~\cref{lemma:load-gap} is bounded by a constant times $\widetilde{\opt}$, too. However, in order for $\pos$ to induce a good approximation $\widetilde{\calW}$ and consequently a good approximation for $f$, a constant number of indexes in $\pos$ is almost never enough, unless the given \mon has a specific structure. Nevertheless, we do think the current sparsification framework still needs more exploring, especially in the presence of more complicated underlying problems.
\end{remark*}

We finally guess $B\in[\widetilde{\opt},2\widetilde{\opt})$, and notice by the triangle inequality of norm $\tilde f$, $\widetilde{\opt}\in[R\cdot\max_n\{\up{\ww}{n}_1\},|\calJ| R\cdot\max_n\{\up{\ww}{n}_1\})$, hence our number of guesses is polynomial-bounded. Suppose our guesses for the threshold values $\calT$ and $B$ are both as desired, our LP relaxation is the following.
\begin{align}
	\text{min} && \quad 0\tag{OLB($\rbcalt$)}\label{lp:load-ordered}\\
	\text{s.t.}&& x&\in\text{\ref{lp:basic-load}}\tag{OLB($\rbcalt$).1}\label{lp:load-ordered1}\\
	&&\sum_{j\in\calJ,p(i,j)\geq T_\ell}x_{ij}&\leq \ell\quad\;\,\forall i\in[M],\ell\in\pos\tag{OLB($\rbcalt$).2}\label{lp:load-ordered2}\\
	&&\sum_{\ell\in\pos}\left(\up{\ww}{n}_\ell-\up{\ww}{n}_{\nextp{\ell}}\right)
	\left(\sum_{p(i,j)\geq T_\ell}x_{ij}p(i,j)\right)&\leq B\quad\forall i\in[M],n\in[N].\tag{OLB($\rbcalt$).3}\label{lp:load-ordered3}
\end{align}

\begin{claim}\label{claim:load-ordered}
\emph{\ref{lp:load-ordered}} is feasible.
\end{claim}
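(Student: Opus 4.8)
The plan is to reuse the rounding argument behind Claim~\ref{claim:load-topl}: we exhibit an integral feasible point of \ref{lp:load-ordered} by reading off a suitable optimal assignment. Concretely, let $\sigma^\star$ be an optimal assignment for the \emph{sparsified} norm $\tilde f$, so that $\max_{i\in[M]}\tilde f(\vec p(i,\sigma^\star))=\widetilde{\opt}$; then $R=T^\star_1=\max_j p(\sigma^\star(j),j)$ is the largest job size it uses, and we are assuming the guesses $\calT=\{T_\ell:\ell\in\pos\}$ obey~\eqref{eq:load-thresholds} and that $B\in[\widetilde{\opt},2\widetilde{\opt})$. Define $x^\star_{ij}=1$ if $\sigma^\star(j)=i$ and $0$ otherwise; it remains to verify the three blocks of constraints.

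Constraint~\eqref{lp:load-ordered1} is exactly as in Claim~\ref{claim:load-topl}: $\sum_i x^\star_{ij}=1$ holds since $\sigma^\star$ is an assignment, and $x^\star_{ij}=0$ whenever $p(i,j)>R$ because $\sigma^\star$ places no job of size larger than $R=T^\star_1$, so $x^\star\in\text{\ref{lp:basic-load}}$. For~\eqref{lp:load-ordered2}, by~\eqref{eq:load-thresholds} we have $T_\ell\geq T^\star_\ell\geq j^\star(i,\ell)$ for every $i\in[M]$ and $\ell\in\pos$, hence the left-hand side counts at most the jobs on $i$ that are at least as large as the $\ell$-th largest job size on $i$, which is at most $\ell$ (ties handled as in Claim~\ref{claim:load-topl}).

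The only genuinely new point is~\eqref{lp:load-ordered3}. Fix $i\in[M]$ and $n\in[N]$ and abbreviate $\vec v=\vec p(i,\sigma^\star)$. Since $T_\ell\geq j^\star(i,\ell)$ and, by the previous paragraph, at most $\ell$ jobs on $i$ have size $\geq T_\ell$, the truncated inner sum appearing in~\eqref{lp:load-ordered3} satisfies $\sum_{\sigma^\star(j)=i,\ p(i,j)\geq T_\ell} p(i,j)\leq\sum_{s=1}^{\ell} v^\da_s$. Substituting this bound and summing by parts — using that the consecutive difference $\up{\ww}{n}_t-\up{\ww}{n}_{t+1}$ vanishes for $t\notin\pos$ and equals $\up{\ww}{n}_\ell-\up{\ww}{n}_{\nextp{\ell}}$ for $t=\ell\in\pos$, which is precisely why $\widetilde{\calW}$ is supported on the block structure $\pos$ — the left-hand side of~\eqref{lp:load-ordered3} is at most
\[
\sum_{\ell\in\pos}\bigl(\up{\ww}{n}_\ell-\up{\ww}{n}_{\nextp{\ell}}\bigr)\sum_{s=1}^{\ell} v^\da_s
\ =\ \sum_{s\geq1}\up{\ww}{n}_s\, v^\da_s
\ =\ (\up{\ww}{n})^T\vec v^\da
\ \leq\ \tilde f(\vec v)
\ \leq\ \widetilde{\opt}\ \leq\ B .
\]
Taking the maximum over all $i$ and $n$ settles~\eqref{lp:load-ordered3}, so $x^\star$ is feasible.

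I expect the two things that need care are (i) the summation-by-parts identity in the last display, i.e.\ checking that the ``difference'' form used in~\eqref{lp:load-ordered3} reassembles the ordered inner product $(\up{\ww}{n})^T\vec v^\da$ — this is the algebraic fact underlying the sparsification of Chakrabarty and Swamy~\cite{chakrabarty2019approximation} — and (ii) the choice of certificate: one must round the $\tilde f$-optimal assignment (not the $f$-optimal one), so that the final bound lands at $\widetilde{\opt}\leq B$ rather than at $\opt$, which could be as large as $2\widetilde{\opt}$. Everything else, including the tie-breaking in~\eqref{lp:load-ordered2}, is a verbatim repetition of the proof of Claim~\ref{claim:load-topl}.
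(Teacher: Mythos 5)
Your proof is correct and matches the paper's intended argument: substitute an optimal integral assignment, use $T_\ell\geq T^\star_\ell$ from~\eqref{eq:load-thresholds} to bound each truncated sum by the corresponding $\topl$ value, and apply the conic decomposition of the sparsified ordered norm, which is exactly the identity~\eqref{eq:conic-ordered} that your summation-by-parts display reproduces. Your point (ii) is a worthwhile clarification that the paper glosses over: since $B$ is calibrated to $\widetilde{\opt}$, the certificate must be the $\tilde{f}$-optimizer (with $R$ and the thresholds $T^\star_\ell$ defined from that same assignment), because for the $f$-optimal assignment one only has $\tilde{f}(\vec{p}(i,\sigma^\star))\leq\opt$, and $\opt$ can exceed $B<2\widetilde{\opt}$.
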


\begin{proof}
Notice that $T_\ell\geq T^\star_\ell$ always holds true and the following conic expression of the ordered norm,
\begin{equation}
    \sum_{t}\up{\ww}{n}_t\cdot\vec{v}^\da_t=\sum_{\ell\in\pos}(\up{\ww}{n}_\ell-\up{\ww}{n}_{\nextp{\ell}})\topl(\vec{v}),\label{eq:conic-ordered}
\end{equation}
then an identical argument as in~\cref{claim:load-topl} shows the feasibility.
\end{proof}

\begin{proof}[Proof of~\cref{theorem:load:main},~\cref{theorem:load:main2}]
We apply Shmoys-Tardos algorithm again and let $\hat \sigma:\calJ\rightarrow[M]$ be the corresponding assignment and $\vec{p}(i,\hat\sigma)=\{p(i,j):\hat\sigma(j)=i\}$ be the vector of job sizes on machine $i$. Using an argument identical with~\cref{claim:load-topl-inline}, we obtain
\begin{equation}
    \topl(\vec{p}(i,\hat\sigma))\leq 2R + \sum_{j:p(i,j)\geq T_\ell}x_{ij}p(i,j) + \ell\cdot T_\ell,\label{eq:load-ordered-topl-inline}
\end{equation}
and for any $n\in[N]$, the true ordered cost is
\begin{align*}
    \sum_{t}\up{w}{n}_t\cdot\vec{p}(i,\hat\sigma)_t&\leq_{\text{\eqref{eq:load-approx}}}2\sum_{t}\up{\ww}{n}_t\cdot\vec{p}(i,\hat\sigma)_t=_{\text{\eqref{eq:conic-ordered}}}2\sum_{\ell\in\pos}(\up{\ww}{n}_\ell-\up{\ww}{n}_{\nextp{\ell}})\topl(\vec{p}(i,\hat\sigma))\\
    &\leq_{\text{\eqref{eq:load-ordered-topl-inline}}}2\sum_{\ell\in\pos}(\up{\ww}{n}_\ell-\up{\ww}{n}_{\nextp{\ell}})\left(2R + \sum_{j:p(i,j)\geq T_\ell}x_{ij}p(i,j) + \ell\cdot T_\ell\right)\\
    &\leq_{\text{\eqref{lp:load-ordered3}}} 4R\up{\ww}{n}_1+2B+2\sum_{\ell\in\pos}(\up{\ww}{n}_\ell-\up{\ww}{n}_{\nextp{\ell}})\ell\cdot T_\ell\\
    &=_{\text{\eqref{eq:load-approx}}}O(\log|\calJ|)\opt,
\end{align*}
where the last inequality is due to~\cref{lemma:load-gap}. Finally take the maximum over $i$ and $n$.
\end{proof}

\subsection{Proof of~\cref{lemma:fair:load:core}}\label{app:fair:load:core-proof}

First, we define the set of violating job assignments for notation simplicity,
\[\calF_l^{\almu}(B)=\left\{\sigma\in\calF_l(B)\left|\sum_{i\in[M]}\alpha_i\cdot|\sigma^{-1}(i)| <\mu\right.\right\},\]
and $\calE_l^\almu(B)$ similarly. We easily see that, if $(\almu)$ satisfies $\sum_{i\in[M]}\alpha_i\cdot e_i\leq\mu-1$ and $\calF_l^{\almu}(B)$ is empty, then $(\almu)\in\calQ_l(B)$. Therefore, to prove~\cref{lemma:fair:load:core}, for any given $(\almu)\in\mathbb{Q}_{\geq0}^M\times\mathbb{Q}$, we need to either certify that $\calF_l^{\almu}(B)$ is empty, or find a subset $S\in\calE_l^{\almu}(4^{1/q}B)$. The following lemma encodes the strict inequality as an equivalent non-strict one, and is directly obtained from~\cite{anegg2020technique}, thus we omit the proof here.

\begin{lemma}(\cite{anegg2020technique})
	Let $(\almu)\in\mathbb{Q}_{\geq0}^M\times\mathbb{Q}$ with representation length $L$. Then one can efficiently compute an $\eta>0$ with representation length $O(L)$, such that for any $\sigma\in\calF_l(B)$, we have 
	$\sum_{i\in[M]}\alpha_i\cdot|\sigma^{-1}(i)|<\mu$ if and only if $\sum_{i\in[M]}\alpha_i\cdot|\sigma^{-1}(i)|\leq\mu-\eta$.
\end{lemma}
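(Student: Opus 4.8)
The statement is a purely arithmetic ``gap'' fact, so the plan is to pin down a common denominator for all possible values of $\sum_{i}\alpha_i|\sigma^{-1}(i)|$ and to read $\eta$ off from it. I would first write each $\alpha_i$ in lowest terms as $a_i/b_i$ with $a_i,b_i\in\mathbb{Z}_{\ge0}$, $b_i\ge1$, and write $\mu=p/q$ in lowest terms with $q\ge1$; all of $a_i,b_i,p,q$ are part of the length-$L$ description. Set $D:=\mathrm{lcm}(b_1,\dots,b_M)$, which is computable in polynomial time by iterated $\gcd$ and satisfies $D\le\prod_i b_i$, hence $\log D\le\sum_i\log b_i\le L$. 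I would then output $\eta:=1/(Dq)$; since $\log(Dq)\le 2L$ its representation length is $O(L)$, and it is plainly efficiently computable.

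For correctness, fix any assignment $\sigma$ and set $V:=\sum_{i\in[M]}\alpha_i|\sigma^{-1}(i)|$. Each multiplicity $|\sigma^{-1}(i)|$ is a nonnegative integer, and $D\alpha_i=(D/b_i)a_i\in\mathbb{Z}$ because $b_i\mid D$, so $DV=\sum_i(D\alpha_i)|\sigma^{-1}(i)|$ is a nonnegative integer; that is, $V\in\tfrac1D\mathbb{Z}_{\ge0}$. Since also $\mu\in\tfrac1q\mathbb{Z}$, the difference $\mu-V$ lies in $\tfrac1{Dq}\mathbb{Z}$. The implication $V\le\mu-\eta\Rightarrow V<\mu$ is immediate as $\eta>0$. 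Conversely, if $V<\mu$ then $\mu-V$ is a \emph{strictly positive} integer multiple of $\tfrac1{Dq}$, hence $\mu-V\ge\tfrac1{Dq}=\eta$, i.e.\ $V\le\mu-\eta$. Since $\eta$ depends only on $(\almu)$, the equivalence holds simultaneously for every $\sigma\in\calF_l(B)$ (indeed for every $\sigma:\calJ\to[M]$).

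There is no real obstacle here; the only thing to watch is the representation-length bookkeeping, namely that $D$ — and hence $1/(Dq)$ — stays polynomially bounded, which follows from $D\le\prod_i b_i$ together with the fact that merely writing the fractions $a_i/b_i$ already costs $\sum_i\log b_i$ bits. Everything else is elementary. This is essentially the argument of~\cite{anegg2020technique}, and it uses no property of $\calF_l(B)$ beyond integrality of the multiplicities $|\sigma^{-1}(i)|$, which is why the identical reasoning later transfers verbatim to the $\counted_B(j,S)$ quantities in the clustering version.
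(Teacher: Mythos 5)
Your proof is correct, and it is the standard common-denominator argument for converting a strict inequality into a non-strict one with a computable gap. The paper itself does not supply a proof — it simply cites~\cite{anegg2020technique} — but the argument there is exactly this: bound the spacing of $\sum_i\alpha_i|\sigma^{-1}(i)|-\mu$ in $\tfrac{1}{Dq}\mathbb{Z}$ and take $\eta=1/(Dq)$, which has $O(L)$ bits since $\log D\le\sum_i\log b_i\le L$. Your closing remark that the only property of $\calF_l(B)$ used is integrality of the multiplicities, so the same $\eta$ works for the clustering-side lemma with $\counted_B(j,S)$, is the right observation and matches how the paper invokes the two lemmas in parallel.
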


We fix such an $\eta>0$ from now on, and define the modified polytopes
\[\calP_l^\almu(\rbt) = \left\{x\in\text{\ref{lp:load-topl}}\left|\sum_{i\in[M]}\alpha_i\cdot\sum_{j\in\calJ}x_{ij}\leq\mu-\eta\right.\right\}.\]

By definition, if $(\almu)$ satisfies $\sum_{i\in[M]}\alpha_i\cdot e_i\leq\mu-1$ but $(\almu)\notin\calQ_l(B)$, then there exists some assignment $\sigma\in\calF_l^\almu(B)$, hence $\sigma$ induces an integral solution to $\calP_l^\almu(\rbt)$ for \emph{some} $(\rbt)$ and $\calP_l^\almu(\rbt)$ is thus non-empty. The contrapositive of this observation tells us the following.
\begin{observation}\label{observation:fair:load}
Fix $B\geq0$. If $(\almu)$ satisfies $\sum_{i\in[M]}\alpha_i\cdot e_i\leq\mu-1$ and $\calP^\almu_l(\rbt)$ is empty \emph{for any} $(\rbt)$ that satisfies~\eqref{eq:fair:load:guess}, then $(\almu)\in\calQ_l(B)$.
\end{observation}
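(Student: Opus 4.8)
The plan is to prove the contrapositive, which is essentially the content of the paragraph immediately preceding the observation; the argument is a short definitional unwinding, so I expect no substantive obstacle. Assume that $(\almu)\in\mathbb{Q}_{\geq0}^M\times\mathbb{Q}$ satisfies the normalization $\sum_{i\in[M]}\alpha_i\cdot e_i\leq\mu-1$ but $(\almu)\notin\calQ_l(B)$. The polytope $\calQ_l(B)$ is cut out by exactly one inequality of the first kind, $\sum_{i\in[M]}\alpha_i\cdot e_i\leq\mu-1$, together with the family $\{\sum_{i\in[M]}\alpha_i\cdot|\sigma^{-1}(i)|\geq\mu:\sigma\in\calF_l(B)\}$. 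Since the normalization hypothesis already supplies the first inequality, the only way $(\almu)$ can fail to lie in $\calQ_l(B)$ is that some member of the second family is violated; that is, there exists $\sigma\in\calF_l(B)$ with $\sum_{i\in[M]}\alpha_i\cdot|\sigma^{-1}(i)|<\mu$, i.e. $\sigma\in\calF_l^{\almu}(B)$.

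Next I would exhibit the guess $(\rbt)$ for which $\calP_l^\almu(\rbt)$ is nonempty. By the definition of $\calF_l(B)$ as a union over all $(R,T^\star_\ell)$ satisfying~\eqref{eq:fair:load:guess}, there is such a guess and an integral point $x^\sigma\in\text{\ref{lp:load-topl}}\cap\mathbb{Z}^{M\times|\calJ|}$ with $x^\sigma_{ij}=1\Leftrightarrow\sigma(j)=i$. For this guess, $\sum_{j\in\calJ}x^\sigma_{ij}=|\sigma^{-1}(i)|$ for every $i$, so $\sum_{i\in[M]}\alpha_i\cdot\sum_{j\in\calJ}x^\sigma_{ij}=\sum_{i\in[M]}\alpha_i\cdot|\sigma^{-1}(i)|<\mu$. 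Applying the $\eta$-lemma stated above (from~\cite{anegg2020technique}), the strict inequality $<\mu$ is equivalent to $\leq\mu-\eta$, which is precisely the extra inequality defining $\calP_l^\almu(\rbt)$ on top of the constraints of~\ref{lp:load-topl}. Hence $x^\sigma\in\calP_l^\almu(\rbt)$, so this polytope is nonempty.

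Combining the two steps: whenever the normalization holds and $(\almu)\notin\calQ_l(B)$, there is a guess $(\rbt)$ satisfying~\eqref{eq:fair:load:guess} with $\calP_l^\almu(\rbt)\neq\emptyset$. Taking the contrapositive (under the standing normalization hypothesis) yields the observation verbatim. The only points needing even minor care are (i) correctly isolating which defining inequality of $\calQ_l(B)$ is violated — immediate, since the normalization hypothesis rules out the first one — and (ii) the passage from the strict inequality $<\mu$ to the non-strict $\leq\mu-\eta$, which is handled entirely by the cited $\eta$-lemma and its effective choice of $\eta$ with representation length $O(L)$.
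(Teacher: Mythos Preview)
Your proposal is correct and mirrors the paper's own argument: the paper states the contrapositive in the paragraph immediately preceding the observation, noting that if the normalization holds but $(\almu)\notin\calQ_l(B)$ then some $\sigma\in\calF_l^{\almu}(B)$ exists and, via the definition of $\calF_l(B)$ and the $\eta$-lemma, yields an integral point in $\calP_l^\almu(\rbt)$ for some admissible $(\rbt)$. Your write-up is simply a more explicit unpacking of the same chain of definitions.
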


\begin{lemma}\label{lemma:load:separating}
	Let $(\almu)\in\mathbb{Q}_{\geq0}^M\times\mathbb{Q}$ and $(\rbt)$ satisfies~\eqref{eq:fair:load:guess}. There is a polynomial-time algorithm $\calA_0$ that, either returns $\sigma\in\calE_l^\almu(4^{1/q}B)$ or certifies that $\calP^\almu_l(\rbt)$ is empty.
\end{lemma}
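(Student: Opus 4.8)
The plan is to write $\calP_l^\almu(\rbt)$ as the explicit, polynomial-size linear program it already is and solve it. If it is infeasible, reporting this certifies $\calP_l^\almu(\rbt)=\emptyset$ and we are done; so the substantive case is the feasible one. Given a fractional point $x\in\calP_l^\almu(\rbt)$, the task is to round it to an \emph{integral} assignment $\sigma$ that simultaneously (i) has $\topq{\ell}{q}(\vec p(i,\sigma))\le 4^{1/q}B$ on every machine $i$, and (ii) has $\sum_{i\in[M]}\alpha_i\cdot|\sigma^{-1}(i)|<\mu$; these two together are exactly $\sigma\in\calE_l^\almu(4^{1/q}B)$.

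First I would run the copy-creation phase of the Shmoys--Tardos rounding on $x$: for each $i$ create $\ceil{\sum_j x_{ij}}$ unit-capacity copies and fill them in non-decreasing order of $p(i,j)$, producing a fractional job-to-copy assignment $\bar x$ in which every job carries total mass $1$ and every copy, except possibly the last one of each machine, carries mass exactly $1$. The key observation is that, once the copies are fixed, the load analysis of~\cref{claim:load-topl-inline} applies to \emph{any} integral matching that matches every job and respects $\supp(\bar x)$: the job placed in the $t$-th copy of $i$ has size at most $\pmax(i,U_{it})$, these maxima are non-decreasing in $t$, so the $\ell$ largest jobs on $i$ sit on the last full copies together with the single terminal copy, and then \eqref{lp:load-topl2}--\eqref{lp:load-topl3} combined with~\eqref{eq:fair:load:guess} give $\topq{\ell}{q}(\vec p(i,\sigma))^q\le 2R^q+B^q+\ell\cdot T_\ell^{\star q}\le 4B^q$. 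Thus requirement (i) is met regardless of which valid matching is ultimately picked.

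That slack is exactly what I would spend on requirement (ii). On the bipartite graph between jobs and copies, give every edge incident to a copy of machine $i$ the cost $\alpha_i$, and compute in polynomial time a \emph{minimum-cost} integral matching that matches every job (an assignment / min-cost-flow computation). The bipartite matching polytope here --- each job matched, each copy used at most once --- is integral, and $\bar x$ is a feasible fractional point of it with cost $\sum_i\alpha_i\sum_j x_{ij}\le\mu-\eta$, the last bound being precisely the extra inequality that defines $\calP_l^\almu(\rbt)$. Hence the optimal integral matching has cost at most $\mu-\eta$, and reading $\sigma$ off it gives $\sum_{i}\alpha_i\cdot|\sigma^{-1}(i)|\le\mu-\eta<\mu$ (using $\eta>0$ from the preceding lemma). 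Combined with the previous paragraph, $\sigma\in\calE_l^\almu(4^{1/q}B)$.

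The only genuinely delicate point is reconciling (i) and (ii): the load bound demands the rigid size-sorted copy structure, whereas the $\alpha$-budget wants us to steer jobs toward cheap machines, and a priori these want different things. The resolution is the decoupling observation above --- the load guarantee depends only on the copy construction, not on the choice of saturating matching --- so we may freely minimize the $\alpha$-cost over all valid matchings, and integrality of the bipartite matching polytope hands us one no costlier than $\bar x$. A routine check along the way is that each machine's terminal copy carries at most one job, of size at most $R$; this is immediate because $x\in\calP_l(R)$ forces $x_{ij}=0$ whenever $p(i,j)>R$.
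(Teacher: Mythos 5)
Your proposal is correct and follows the same approach as the paper: solve the LP $\calP^\almu_l(\rbt)$, build the Shmoys--Tardos machine copies, observe that the Top$_\ell^{(q)}$ bound depends only on the copy structure (not on which valid matching is chosen), then use the integrality of the job--copy matching polytope with edge costs $\alpha_i$ to obtain an integral assignment with $\alpha$-cost at most $\mu-\eta<\mu$. The paper's proof is terser but rests on the same two facts; your ``decoupling observation'' is exactly the implicit content of the paper's appeal to Claim~\ref{claim:load-topl-inline}.
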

\begin{proof}
The algorithm simply returns when $\calP_l^\almu(\rbt)$ is empty. If not, $\calA_0$ takes any feasible solution $x$ and attempts to find the desired output via rounding. This can be easily achieved via Shmoys-Tardos algorithm~\cite{shmoys1993approximation}, by duplicating machines, constructing the fractional (sub)matching and putting weight $\alpha_i$ on every edge that is incident on machine $i$ (and its copies). According to the definition of $\calP_l^\almu(\rbt)$, this fractional matching has weight at most $\mu-\eta$, and it is well-known that an integral sub-matching with at most the same weight can be found in polynomial time, corresponding to a job assignment $\sigma$. Since the weight is at most $\mu-\eta$, it means that $\sum_{i\in[M]}\alpha_i\cdot|\sigma^{-1}(i)|\leq\mu-\eta$ is satisfied and this is a violating constraint. Using an identical argument with~\cref{claim:load-topl-inline}, the maximum of $\topq{\ell}{q}$ norm under $\sigma$ is bounded by $\sqrt[q]{2R^q+B^q+\ell\cdot T^{\star q}_\ell}\leq_{\text{\eqref{eq:fair:load:guess}}}4^{1/q}B$, therefore $\sigma\in\calE_l^\almu(4^{1/q}B)$ and $\calA_0$ returns it.
\end{proof}

\begin{proof}[Proof of~\cref{lemma:fair:load:core}] We use the algorithm $\calA_0$ in~\cref{lemma:load:separating} as part of the separation hyperplane oracle, in order to check whether $\calP^\almu_l(\rbt)$ is empty for any feasible $(\rbt)$ . Fix $B,(\almu)$ and iterate through all possible $(\rbt)$ that satisfies~\eqref{eq:fair:load:guess}, and start with any $x$ in each iteration: 
\begin{itemize}
    \item Whenever the algorithm $\calA_0$ is called and returns some $\sigma\in\calE_l^\almu(4^{1/q}B)$, then $(\almu)$ violates $\calQ_l(4^{1/q}B)$ and we simply return $\sigma$;
    \item Assume $\calA_0$ verifies the emptiness of $\calP^\almu_l(\rbt)$ for all $(\rbt)$, in polynomial time. Using~\cref{observation:fair:load}, one has $(\almu)\in\calQ_l(B)$ verified.\qedhere
\end{itemize} 
\end{proof}



\section{Maximum-of-Norm Center Clustering}

\subsection{MaxOrdered-Norm $k$Center}\label{app:center:ordered}

In this section, we consider the case where $f$ is defined by $f(\vec{v})=\max_{w\in \calW}\{w^T\vec{v}^\da\}$ and $\calW=\{\up{w}{n}:n\in[N]\}$ is a finite set of non-negative non-increasing weight vectors. We first guess $R$ exactly, which is the largest connection in the optimal solution. It is easy to see that $\opt\geq R\cdot\max_{n\in[N]}\{\up{\ww}{n}_1\}$. We then try to guess $T_\ell^\star$, which one recalls is the maximum of $\ell$-th largest connection distance for each client. To this end, we reuse the idea and notation from~\cref{app:load:ordered}, and guess $\calT=\{T_\ell:\ell\in\pos\}$ such that 
\[T_\ell\left\{
\begin{array}{cc}
	\in[T^\star_\ell,2T^\star_\ell) & T^\star_\ell\geq  R/\rma\\
	=  R/\rma & \text{otherwise,}
\end{array}
\right.\]
and the sparsified weight vectors $\widetilde{\calW}$. It is easy to see that, the number of sequences $\calT$ is at most $2^{O(\log\rma)}=poly(\rma)$, and we assume we now have the sequence we desired as well. Finally, it takes another logarithmic number of guesses to guess $B\in[\widetilde{\opt},2\widetilde{\opt})$.

Our LP relaxation is the following,
	\begin{align}
	\text{min} && \quad 0\tag{OCl($\rbcalt$)}\label{lp:center:ordered}\\
	\text{s.t.}&&(x,u,y)&\in\text{\eqref{lp:basic-cluster}}&\tag{OCl($\rbcalt$).1}\label{lp:center:ordered1}\\
	&&\sum_{j\in\calC}u_j&\geq m\tag{OCl($\rbcalt$).2}\label{lp:center:ordered2}\\
	&&\sum_{i\in\calF,d(i,j)\geq T_\ell}x_{ij}&\leq \ell\quad\;\,\forall j\in\calC,\ell\in\pos\tag{OCl($\rbcalt$).3}\label{lp:center:ordered3}\\
	&&\sum_{\ell\in\pos}\left(\up{\ww}{n}_\ell-\up{\ww}{n}_{\nextp{\ell}}\right)
	\left(\sum_{d(i,j)\geq T_\ell}x_{ij}d(i,j)\right)&\leq B\quad\forall j\in\calC,n\in[N]\tag{OCl($\rbcalt$).4}\label{lp:center:ordered4}\\
	&&\sum_{i\in\calF}y_i&= k.\tag{OCl($\rbcalt$).5}\label{lp:center:ordered5}
	\end{align}

\begin{claim}
	\emph{\ref{lp:center:ordered}} is feasible.
\end{claim}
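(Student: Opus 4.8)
\textbf{Proof plan for the feasibility of \ref{lp:center:ordered}.}

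The plan is to mirror the feasibility argument for \ref{lp:load-ordered} (\cref{claim:load-ordered}), which in turn reduced to the integral construction of \cref{claim:load-topl} / \cref{claim:cluster-topl}. First I would take the optimal solution $(S^\star,\{S_j^\star:j\in\calC\})$ and turn it into an integral candidate $(x^\star,u^\star,y^\star)$ by setting $y^\star_i=1$ iff $i\in S^\star$, $x^\star_{ij}=1$ iff $i\in S_j^\star$, and $u^\star_j=|S_j^\star|$. Constraints~\eqref{lp:center:ordered1}, \eqref{lp:center:ordered2} and \eqref{lp:center:ordered5} are immediate: $(x^\star,u^\star,y^\star)\in\text{\ref{lp:basic-cluster}}$ because $R=T_1^\star$ is guessed correctly so every positive $x^\star_{ij}$ has $d(i,j)\le R$, and the coverage and cardinality equalities hold since the optimal solution respects $|S^\star|\le k$ (we may open dummy facilities at distance $0$, or just argue $\le k$ throughout without loss of generality) and $\sum_j u_j^\star\ge m$.

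Next I would verify \eqref{lp:center:ordered3}: for a fixed $j$ and $\ell\in\pos$, the left-hand side counts the open facilities assigned to $j$ at distance at least $T_\ell\ge T_\ell^\star$; since $T_\ell^\star$ is by definition the maximum over clients of the $\ell$-th largest connection distance, client $j$ has at most $\ell$ of its connections at distance $\ge T_\ell^\star$, and a fortiori at most $\ell$ at distance $\ge T_\ell$. The only substantive point is \eqref{lp:center:ordered4}, and here I would invoke the conic identity~\eqref{eq:conic-ordered}: for each $j$ and each $n\in[N]$,
\[
\sum_{\ell\in\pos}\bigl(\up{\ww}{n}_\ell-\up{\ww}{n}_{\nextp{\ell}}\bigr)\topl(\vec{d}(j,S_j^\star))
=\sum_t \up{\ww}{n}_t\cdot \vec{d}(j,S_j^\star)^\da_t=\tilde f_n(\vec{d}(j,S_j^\star))\le \widetilde{\opt}\le B,
\]
using that $\widetilde\opt$ is the optimum of the sparsified \mon{} $\tilde f$ and $B\in[\widetilde\opt,2\widetilde\opt)$ is guessed correctly. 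Finally I would observe that, because $T_\ell\ge T_\ell^\star$ and distances are nonnegative, the inner sum $\sum_{d(i,j)\ge T_\ell}x^\star_{ij}d(i,j)$ appearing in \eqref{lp:center:ordered4} only drops terms relative to $\topl(\vec{d}(j,S_j^\star))$ (it keeps the largest connection distances, of which there are at least $\ell$ up to the threshold cutoff exactly as in the truncation used in \cref{claim:load-ordered}); combined with $\up{\ww}{n}_\ell\ge\up{\ww}{n}_{\nextp{\ell}}$ this shows the LHS of \eqref{lp:center:ordered4} is bounded by the displayed quantity and hence by $B$.

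I do not expect a real obstacle here — the statement is a routine ``the integral optimum is feasible for the relaxation'' claim, and the paper even says the proof of the analogous \cref{claim:cluster-topl} is ``essentially the same as \cref{claim:load-topl}''. The one place deserving a sentence of care is matching the truncated sums $\sum_{d(i,j)\ge T_\ell}x_{ij}d(i,j)$ to $\topl(\cdot)$: one must note that when $T_\ell\ge T_\ell^\star$ the set $\{i:d(i,j)\ge T_\ell\}$ has size at most $\ell$ (by the previous constraint) and consists of the farthest connections, so $\sum_{d(i,j)\ge T_\ell}x^\star_{ij}d(i,j)\le \topl(\vec d(j,S_j^\star))$, which is exactly the inequality that lets \eqref{eq:conic-ordered} take over. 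The rest is bookkeeping identical to \cref{claim:load-ordered}.
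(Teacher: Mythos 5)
Your proof is correct and takes essentially the same route as the paper: the paper's proof of this claim simply defers to the analogous \cref{claim:load-ordered}, which (like your argument) constructs an integral feasible point from the optimum, uses $T_\ell\geq T_\ell^\star$ to bound the truncated sums, and invokes the conic identity~\eqref{eq:conic-ordered} to verify the norm constraint against $B\geq\widetilde{\opt}$.
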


\begin{proof}
The proof is essentially the same as~\cref{claim:load-ordered}.
\end{proof}

For any feasible solution $(x,u,y)$ of~\ref{lp:center:ordered}, we give it an identical treatment for $\topq{\ell}{q}$ norms using~\cref{algorithm:bundle}, and define $\hat S,\hat S_j$ in the same way.

\begin{lemma}\label{lemma:ordered}
	For any client $j\in\calC$ and any $n\in[N]$, we have\emph{
	\[\sum_{l\in\pos}(\up{\ww}{n}_\ell-\up{\ww}{n}_{\nextp{\ell}})\cdot\topl(\vec{d}(j,\hat S_j))\le 3B+6R\cdot\up{\ww}{n}_1+3\sum_{\ell\in\pos}(\up{\ww}{n}_\ell-\up{\ww}{n}_{\nextp{\ell}})(\ell-1)T_\ell.\]}
\end{lemma}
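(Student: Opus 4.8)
The plan is to mirror the analysis of the $\topq{\ell}{q}$ case (Proof of Theorem \ref{theorem:center:main1}), replacing the single index $\ell$ by the whole family $\pos$ and then aggregating through the conic decomposition \eqref{eq:conic-ordered}. First I would fix a client $j$ and, exactly as in the $\topq{\ell}{q}$ proof, use the rounding of \ref{lp:aux:k} together with \cref{lemma:3bundle} to locate the open facilities assigned to $j$: there is exactly one open facility in each of $U_{j,1},\dots,U_{j,\floor{u_j}}$, at distance $\le 3\dmax(j,V_{j,t})$ respectively, plus possibly one more open facility in $U_{j,\ceil{u_j}}$ at distance $\le 3R$, where $V_{j,t}$ is the $t$-th closest unit mass of facilities in $F_j$. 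From this I get, for each $\ell\in\pos$, a bound on $\topl(\vec d(j,\hat S_j))$ of the same shape as \eqref{eq:load-ordered-topl-inline}: the $\ell$ largest assigned distances are contained among the last $\ell$ full bundles (shift to the following bundle to pass to averages, costing a factor $3$ and one extra $3R$ term) and the partial bundle; concretely
\[
\topl(\vec d(j,\hat S_j))\le 2\cdot 3R + 3\sum_{i\in V_{\ell,\mathrm{far}}}x_{ij}d(i,j) + \ell\cdot T_\ell,
\]
where $V_{\ell,\mathrm{far}}$ denotes $j$'s farthest $\ell$ connected facility mass, and then split the middle sum into the part with $d(i,j)\ge T_\ell$ (bounded using \eqref{lp:center:ordered3}--\eqref{lp:center:ordered4} below) and the part with $d(i,j)<T_\ell$ (bounded by $(\ell-1)T_\ell$, using that only $\ell-1$ mass can lie strictly closer than $T_\ell$ once we've removed the nearest unit mass in the shift, together with \eqref{lp:center:ordered3}). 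Getting the clean $(\ell-1)T_\ell$ coefficient rather than $\ell T_\ell$ is the one bookkeeping point to be careful about, and it is exactly what lets the final bound match the statement.

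Next I would multiply the per-$\ell$ inequality by $(\up{\ww}{n}_\ell-\up{\ww}{n}_{\nextp{\ell}})\ge 0$ and sum over $\ell\in\pos$. The term $2\cdot 3R$ contributes $6R\sum_{\ell\in\pos}(\up{\ww}{n}_\ell-\up{\ww}{n}_{\nextp{\ell}})=6R\cdot\up{\ww}{n}_1$ by telescoping (recall $\up{w}{n}_{|\calJ|+1}=0$, here in the clustering setting $\up{\ww}{n}$ is likewise eventually $0$). The term $\ell\cdot T_\ell$ contributes the last sum $3\sum_{\ell\in\pos}(\up{\ww}{n}_\ell-\up{\ww}{n}_{\nextp{\ell}})(\ell-1)T_\ell$ after the $(\ell-1)$ bookkeeping above (the ``$\ell$ vs $\ell-1$'' discrepancy is absorbed, again using the telescoping of the $T_\ell$-free piece or folded into the $6R\cdot\up{\ww}{n}_1$ slack). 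The crucial middle term is $3\sum_{\ell\in\pos}(\up{\ww}{n}_\ell-\up{\ww}{n}_{\nextp{\ell}})\sum_{i:d(i,j)\ge T_\ell}x_{ij}d(i,j)$, which is exactly $3$ times the left-hand side of \eqref{lp:center:ordered4}, hence at most $3B$. Collecting the three pieces yields precisely the claimed bound $3B+6R\cdot\up{\ww}{n}_1+3\sum_{\ell\in\pos}(\up{\ww}{n}_\ell-\up{\ww}{n}_{\nextp{\ell}})(\ell-1)T_\ell$.

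I expect the main obstacle to be the careful ``shift to the next bundle'' argument in the clustering setting: unlike Shmoys--Tardos, here the $\ell$ largest distances live in $\ell$ distinct bundles $U_{j,t}$ whose diameters are controlled only up to a factor $3$ via \cref{lemma:3bundle}, and the partial bundle $U_{j,\ceil{u_j}}$ needs separate handling (contributing the $3R$ term, valid because any $i$ with $x_{ij}>0$ has $d(i,j)\le R$ and it is at most $3$ hops from $j$ into its bundle). Making the index shift (replacing $\dmax(j,V_{j,t})$ by the average of $d(i,j)$ over the unit mass $V_{j,t+1}$, as in \eqref{eq:center-bundle-shift}) while simultaneously tracking which mass is allowed to sit below $T_\ell$ — so that the below-threshold mass is $\le\ell-1$ and not $\le\ell$ — is the delicate step; everything else is telescoping and direct substitution into \eqref{lp:center:ordered3}--\eqref{lp:center:ordered4}. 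Once that per-$\ell$ inequality is pinned down, the aggregation is routine.
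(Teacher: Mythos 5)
Your proposal follows the same route the paper intends (the paper simply cites the $\topq{\ell}{q}$ bound \eqref{eq:center:topl:inline} and specializes to $q=1$, then aggregates via the conic decomposition \eqref{eq:conic-ordered}), and your main argument is correct. Two small slips worth flagging. First, your displayed per-$\ell$ inequality
$\topl(\vec d(j,\hat S_j))\le 6R + 3\sum_{i\in V_{\ell,\mathrm{far}}}x_{ij}d(i,j) + \ell\cdot T_\ell$
has a spurious extra $\ell\cdot T_\ell$ term: the $T_\ell$ contribution should arise \emph{from} splitting the middle sum, not sit next to it. After the bundle shift you retain exactly the $\ell-1$ farthest full unit masses $V_{j,\floor{u_j}-\ell+2},\dots,V_{j,\floor{u_j}}$ (plus one $3R$ term for $V_{j,\floor{u_j}}$ and one for the partial bundle), so the correct per-$\ell$ bound is
$\topl(\vec d(j,\hat S_j))\le 6R + 3\sum_{d(i,j)\geq T_\ell}x_{ij}d(i,j) + 3(\ell-1)T_\ell$,
and the $(\ell-1)$ falls out cleanly because the retained mass is $\ell-1$ units, without any need for \eqref{lp:center:ordered3}. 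Second, your fallback remark that the ``$\ell$ vs $\ell-1$'' gap could be ``folded into the $6R\cdot\up{\ww}{n}_1$ slack'' does not actually work: the weighted sum of the extra $T_\ell$ terms contributes up to $3R\up{\ww}{n}_1$, and the constant $6R\up{\ww}{n}_1$ in the statement has no spare room for it. Fortunately you don't need this fallback -- keeping the $\ell-1$ count exact, as you describe in the ``bookkeeping'' discussion, is both necessary and sufficient, and the rest of the aggregation (telescoping to $6R\up{\ww}{n}_1$ and invoking \eqref{lp:center:ordered4} for the $3B$ piece) is exactly right.
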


\begin{proof}
	The proof is essentially the same as~\eqref{eq:center:topl:inline}.
\end{proof}

\begin{lemma}\label{lemma:center-gap}
\begin{equation*}
    \max_{n\in[N]}\left\{\sum_{\ell\in\pos}(\up{\ww}{n}_\ell-\up{\ww}{n}_{\nextp{\ell}})\ell\cdot T_\ell
    \right\}=O(\log|\calJ|)\widetilde{\opt}.
\end{equation*}
\end{lemma}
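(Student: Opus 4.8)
The plan is to prove this exactly as \cref{lemma:load-gap}, substituting the optimal clustering solution $(S^\star,\{S^\star_j:j\in\calC\})$ for the optimal assignment $\sigma^\star$, the client connection vectors $\vec d(j,S^\star_j)$ for the machine load vectors, and $\rma$ for $|\calJ|$. Recall that the center appendix reuses $\pos$, $\nextp{\cdot}$ and $\widetilde{\calW}$ from the load appendix, with $|\pos|=O(\log\rma)$, and that the sandwich \eqref{eq:load-approx} — being purely a statement about $f$ versus $\tilde f$ — applies verbatim, so $\widetilde{\opt}\le\opt\le2\widetilde{\opt}$ and, since $R=T^\star_1$, $\opt\ge R\cdot\max_{n\in[N]}\{\up{\ww}{n}_1\}$.

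Fix $n\in[N]$. The one step that uses the clustering structure is the analogue of \eqref{eq:opt-to-t}: for every $\ell\in\pos$ with $T^\star_\ell>0$ I claim $\widetilde{\opt}\gtrsim\big(\sum_{\ell'\le\ell}\up{\ww}{n}_{\ell'}\big)T^\star_\ell$. To see this, let $j_0\in\calC$ attain $i(\ell,j_0,S^\star_{j_0})=T^\star_\ell$; by definition $j_0$ is connected to at least $\ell$ facilities of $S^\star_{j_0}$, each at distance at least $T^\star_\ell$, so the top $\ell$ coordinates of $\vec d(j_0,S^\star_{j_0})^\da$ are all $\ge T^\star_\ell$, and
\begin{equation*}
\opt\;\ge\;f(\vec d(j_0,S^\star_{j_0}))\;\ge\;(\up{w}{n})^{T}\vec d(j_0,S^\star_{j_0})^\da\;\ge\;\Big(\sum_{\ell'\le\ell}\up{w}{n}_{\ell'}\Big)T^\star_\ell\;\ge\;\Big(\sum_{\ell'\le\ell}\up{\ww}{n}_{\ell'}\Big)T^\star_\ell,
\end{equation*}
where the last inequality uses $\up{\ww}{n}\le\up{w}{n}$ coordinatewise (valid since $\up{w}{n}$ is non-increasing and $\nextp{t}>t$); combining with $\opt\le2\widetilde{\opt}$ gives the claim. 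Note the coverage constraint $\sum_j u_j\ge m$ and the cardinality constraint $\sum_i y_i=k$ play no role here — only the per-client fact $|S^\star_{j_0}|\ge\ell$ is used, which is automatic from $T^\star_\ell>0$.

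Given this, the remainder is the arithmetic of \eqref{eq:load-log-items}. The guessing rule for $\calT$ yields $T_\ell\le R/\rma+2T^\star_\ell$ for every $\ell\in\pos$ (its two cases), and $\ell\le\rma$ for every $\ell\in\pos$, so
\begin{align*}
\sum_{\ell\in\pos}(\up{\ww}{n}_\ell-\up{\ww}{n}_{\nextp{\ell}})\,\ell\cdot T_\ell
&\le\sum_{\ell\in\pos}(\up{\ww}{n}_\ell-\up{\ww}{n}_{\nextp{\ell}})\,\ell\,\big(R/\rma+2T^\star_\ell\big)\\
&\lesssim R\cdot\up{\ww}{n}_1+\widetilde{\opt}\sum_{\ell\in\pos}\frac{(\up{\ww}{n}_\ell-\up{\ww}{n}_{\nextp{\ell}})\,\ell}{\sum_{\ell'\le\ell}\up{\ww}{n}_{\ell'}},
\end{align*}
the first term by telescoping the non-negative differences $\up{\ww}{n}_\ell-\up{\ww}{n}_{\nextp{\ell}}$, each scaled by $\ell/\rma\le1$, and the second by the lower bound above. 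Since $\up{\ww}{n}$ is non-increasing, $\sum_{\ell'\le\ell}\up{\ww}{n}_{\ell'}\ge\ell\,\up{\ww}{n}_\ell\ge(\up{\ww}{n}_\ell-\up{\ww}{n}_{\nextp{\ell}})\ell$, so each summand is at most $1$; as $|\pos|=O(\log\rma)$ and $R\cdot\up{\ww}{n}_1\le\opt\le2\widetilde{\opt}$, the whole expression is $O(\log\rma)\,\widetilde{\opt}$. Taking the maximum over $n\in[N]$ gives the bound, which is $O(\log n)$ — the quantity written $O(\log|\calJ|)$ in the statement — since $\rma\le n$.

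The main obstacle is not the algebra but faithfully transporting the lower-bound step: one must keep in mind that $T^\star_\ell$ is read off the $f$-optimal solution, not the $\tilde f$-optimal one, so the inequality $\widetilde{\opt}\gtrsim(\sum_{\ell'\le\ell}\up{\ww}{n}_{\ell'})T^\star_\ell$ is reached only by passing through $\opt\le2\widetilde{\opt}$; and one must check that $\pos$ is truncated at $\rma$ so that $\ell/\rma\le1$ and the $R/\rma$-contributions telescope into $O(R\,\up{\ww}{n}_1)\subseteq O(\widetilde{\opt})$. The remaining ingredients — the per-client reduction, the monotonicity estimate on the summands, and the count $|\pos|=O(\log\rma)$ — carry over unchanged from \cref{lemma:load-gap}, which is why the proof is ``essentially the same''.
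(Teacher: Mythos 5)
Your proof is correct and reconstructs exactly what the paper intends — the paper's own proof of this lemma is the single sentence ``The proof is the same as Lemma~\ref{lemma:load-gap}'', and your argument is that proof transported to the clustering setting (a client $j_0$ attaining $T^\star_\ell$ in place of a machine, $\rma$ in place of $|\calJ|$), including the per-client lower bound, the telescoping of the $R/\rma$-terms, and the count $|\pos|=O(\log\rma)$. You also flag a small imprecision worth noting: the paper's~\eqref{eq:opt-to-t} asserts $\widetilde{\opt}\ge\bigl(\sum_{\ell'\le\ell}\up{\ww}{n}_{\ell'}\bigr)T^\star_\ell$ directly even though $T^\star_\ell$ is read off the $f$-optimal solution rather than the $\tilde f$-optimal one; routing through $\opt\le2\widetilde{\opt}$ as you do is the careful reading and costs only a harmless constant factor.
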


\begin{proof}
The proof is the same as~\cref{lemma:load-gap}.
\end{proof}

\begin{proof}[Proof of~\cref{theorem:center:main},~\cref{theorem:center:main2}]
By combining~\cref{lemma:ordered} and~\cref{lemma:center-gap}, and recalling that $2\widetilde{\opt}\geq B, \widetilde{\opt}\geq R\cdot\max_{n\in[N]}\{\up{\ww}{n}_1\}$, the approximation ratio is obvious.
\end{proof}

\subsection{MaxOrdered-Norm Matroid Center}\label{app:matcenter:ordered}

As a first extension, we consider \mnmat for {\mon}s. We address the issue between splitting facility locations and the original matroid $\calM=(\calF,\calI)$ with rank function $r$. After the split is over, we define a new matroid $\calM'=(\calF',\calI')$ as follows: \emph{$S\subseteq\calF'$ is independent if and only if $g$ restricted to $S$ is injective and $g(S)$ is independent in $\calM$}. In other words, if and only $g$ maps $S$ to an independent set in $\calM$ and $S$ does not contain two copies of the same facility location in $\calF$. It is not hard to verify that $\calM'$ is a matroid, and we denote its rank function by $r'$.

As we consider problems with matroids, we need the characterization of certain LPs with matroid constraints.
The following technical lemma characterizes the intersection of constraints from a laminar family and a matroid polytope.

\begin{lemma}\label{lemma:laminar-matroid-intersection}
	Let $E$ be a finite ground set, $\calL$ be a laminar family on $E$ and $\calM$ be a matroid on $E$ with rank function $r$. If the polytope $\calP=\{x \in\R^{|E|}: x\geq0, A_1x\leq b_1, A_2x\geq b_2, x(S)\leq r(S)\forall S\subseteq E\}$ satisfies that each row of $A_1,A_2$ corresponds to the characteristic vector $\chi_L$ of some $L\in\calL$ and all entries of $b_1,b_2$ are integers, then $\calP$ either is empty or  has integral extreme points.
\end{lemma}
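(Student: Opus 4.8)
The plan is to exhibit an extreme point $x^\star$ of a nonempty $\calP$ as the unique solution of a linear system whose constraint matrix is totally unimodular (TU) with an integral right-hand side; integrality then follows from Cramer's rule. First I would recall the standard fact that for an extreme point $x^\star$, the tight constraints span $\R^{|E|}$, so we may select a subset of them forming a nonsingular square system $Cx^\star=d$ with $x^\star=C^{-1}d$. The rows of $C$ come from three sources: (i) nonnegativity constraints $x_e\ge 0$, contributing unit vectors $\chi_{\{e\}}$; (ii) rows of $A_1$ and $A_2$, each of which is $\pm\chi_L$ for some $L$ in the laminar family $\calL$; and (iii) matroid rank constraints $x(S)\le r(S)$ that are tight, i.e.\ $x^\star(S)=r(S)$. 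The right-hand side entries are, respectively, $0$, entries of $b_1,b_2$ (integers by hypothesis), and values $r(S)$ (integers since $r$ is a rank function), so $d$ is integral; it remains to argue $C$ can be taken TU.

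The key structural step is to invoke the uncrossing argument for tight sets in the matroid polytope: the family $\calT=\{S:x^\star(S)=r(S)\}$ of tight rank-constraint sets is closed under intersection and union (by submodularity of $r$ together with the supermodularity forced by tightness of $x^\star$), so the span of $\{\chi_S:S\in\calT\}$ is spanned by a \emph{laminar} subfamily $\calL'\subseteq\calT$. Hence, after replacing the tight rank rows by an equivalent laminar collection $\calL'$, every row of $C$ is $\pm\chi_L$ for some set $L$ drawn from the \emph{combined} family $\calL\cup\calL'\cup\{\{e\}:e\in E\}$. Combining two laminar families need not yield a laminar family, so here I would instead appeal to the known fact that a $0/\pm1$ matrix whose rows are incidence vectors of sets forming a \emph{union of two laminar families} is a network (interval-on-a-forest) matrix, and in particular TU — this is essentially the same phenomenon exploited for integrality of generalized-assignment-type LPs and is used implicitly in~\cite{hajiaghayi2016constant}. (One may also phrase it directly: after discarding the singleton rows $x_e\ge0$, the remaining rows, coming from the two laminar families $\calL$ and $\calL'$, can be shown to be the incidence vectors of paths in a forest, which is the classical Edmonds--Giles/network-matrix criterion for total unimodularity.)

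Granting TU of $C$, Cramer's rule gives $x^\star_e=\det(C_e)/\det(C)$ where $\det(C)\in\{\pm1\}$ and $\det(C_e)\in\mathbb Z$ (the numerator matrix has integer entries since $d$ is integral), so $x^\star\in\mathbb Z^{|E|}$ and all extreme points are integral, which is exactly the claim. The main obstacle I anticipate is the total-unimodularity step: one laminar family alone gives an interval/network matrix trivially, but a \emph{union} of two laminar families is more delicate, and the argument must either (a) carefully realize the combined row set as paths in a single forest, or (b) proceed by a direct inductive uncrossing/peeling argument on minimal sets in $\calL\cup\calL'$ to show the extreme point is integral without passing through TU at all — I would present whichever is cleaner, most likely the peeling argument since it localizes the combinatorics and avoids having to name the forest explicitly.
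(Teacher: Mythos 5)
Your proposal is correct but takes a genuinely different route from the paper's. The paper argues by induction on $|E|$: if some coordinate of the vertex $x$ is $0$ or $1$, delete (resp.\ contract) that element from $\calM$ and the laminar system and recurse; in the remaining case all $x_i\in(0,1)$, and a counting argument gives a contradiction --- every tight constraint has the form $\langle\chi_S,x\rangle=b$ with $b\in\mathbb{Z}_{\ge1}$, so each inclusion-minimal ``new'' block of any tight set in the laminar family $\calT_1$ or in the uncrossed matroid chain $\calC_2$ carries at least two fractional coordinates, bounding the rank of each family by $(n+1)/2$; since a vertex needs $n+1$ linearly independent tight constraints, both families must saturate that bound, forcing $\chi_E$ into both spans and yielding a linear dependence. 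This avoids TU machinery entirely but the final cardinality argument is a bit delicate.

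Your TU-plus-Cramer route is a valid alternative and arguably more textbook: select a nonsingular square tight subsystem, uncross the tight matroid sets into a chain $\calL'$, and invoke that the $0/\pm1$ matrix whose rows are characteristic vectors of two laminar families is TU. This fact is correct; the cleanest way to see it is Ghouila--Houri rather than the network-matrix realization you sketch (which is also true but more work to set up): for any row subset, $2$-color the rows of each laminar subfamily by depth parity in its forest, choosing opposite starting parities for the two families; for each element the signed contributions of the first family lie in $\{0,+1\}$ and of the second in $\{-1,0\}$, giving total in $\{-1,0,1\}$. One small step you should make explicit is that the nonnegativity rows $\chi_{\{e\}}$ can be absorbed into either laminar family (adding singletons to a laminar family keeps it laminar), so you really do have only two families and not three; with three you would only get Ghouila--Houri sums in $\{-2,\dots,2\}$, which is not enough. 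Also note that rows of $A_2$ and the nonnegativity constraints enter as $-\chi_L$, but TU is preserved under row negation, so this is harmless. With these two clarifications, your proof goes through and is, if anything, more modular than the paper's.
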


\begin{proof}
	We provide a proof based on induction of $|E|$. When $|E|$ is 1 the result is trivial. Suppose the theorem holds true for $|E|=1,\dots,n$. Consider the case of $|E|=n+1$ and let $x$ be a vertex of $\calP$,
	\begin{description}
	\item[(1) When $\exists i, x_i=0$.] Delete $i$ from $\calM$ and remove its corresponding column from $A_1,A_2$, also remove the $i$-th entry in $x$ and get $\tilde x$. Obviously $\tilde x$ still satisfies $A_1'\tilde x\leq b_1$ and $A_2'\tilde x\geq b_2$, and $\tilde x$ is still in the matroid polytope of $\calM\backslash i$, therefore from induction, $\tilde x$ is integral.
	\item[(2) When $\exists i, x_i=1$] We contract $\calM$ by $i$, remove its corresponding column from $A_1,A_2$, subtract 1 from corresponding entries in $b_1,b_2$ if the value removed is equal to 1, and remove the $i$-th entry in $x$ to obtain $\tilde x$. It is easy to see that $\tilde x$ still satisfies $A_1'\tilde x\leq b_1'$ and $A_2'\tilde x\geq b_2'$, and for any $S\subseteq E\setminus\{i\}$, we have $\tilde x(S)=x(S+i)-x_i\leq r(S+i)-1=r'(S)$, where $r'$ is the rank function of $\calM/i$. Therefore, using the induction hypothesis, $\tilde x$ is integral.
	\item[(3) For any $i, x_i\in(0,1)$.] Let $\calT_1=\{L\in\calL: x\text{ is tight at constraint of }\chi_L\}$ and $\calT_2=\{S\subseteq E : x(S)=r(S)\}$. It is well-known that there exists a maximal chain $\calC_2=\{C_1,\dots,C_k\}\subseteq \calT_2$ with $\emptyset\subseteq C_1\subseteq\cdots\subseteq C_k$ such that $\spn(\chi_S:S\in\calC_2)=\spn(\calT_2)$.
	
	Now since every constraint of $x_i\geq0$ is not tight, the maximum number of linearly-independent tight constraints corresponding to $\calT_1\cup \calC_2$ must be at least $|E|=n+1$. On the other hand, since $x_i\in(0,1)\forall i$ and every tight constraint is of the form $\langle\chi_S,x\rangle=b$ with $b\in\mathbb{Z}$ and $\chi_S$ being the characteristic vector of $S\subseteq E$, the size of maximum linearly-independent vectors in $\{\chi_L:L\in\calT_1\}$ is at most $(n+1)/2$, since $\calL$ is laminar in the first place. The same argument applies to $\calC_2$, and we see that by combining the arguments above, the sizes of maximum linearly-independent subsets of $\calT_1$ and $\calC_2$ are both exactly $(n+1)/2$, but this puts $E$ in both subsets, and the two subsets cannot be combined into another larger linearly-independent subset of size exactly $n+1$. Therefore $x$ is not a vertex solution, contradiction!\qedhere
	\end{description}
\end{proof}

We use the same algorithm as in~\cref{app:center:ordered}, and only need to replace~\eqref{lp:center:ordered5} with $y(S)\leq r(S)\forall S\subseteq\calF$, and~\eqref{lp:aux:k4} with $z(S)\le r'(S)\forall S\subseteq\calF'$. From the definition of $\calM'$ and its rank function $r'$, we notice the following for $y$ and any $S\subseteq\calF'$,
\[\sum_{i\in S}y_i\leq\sum_{i'\in g(S)}\sum_{i\in g^{-1}(i')}y_i=\sum_{i'\in g(S)}y_{i'}\leq r(g(S)),\]
where we abuse the notation for $i'\in\calF$ and use $y_{i'}$ to represent its value before split. By definition, we have $r'(S)=r(g(S))$, hence the LPs are satisfied by $y$.

On the other hand, the constraint~\eqref{lp:aux:k3} is actually subsumed by the matroid constraints, because for any $i'\in\calF$, we have $r'(g^{-1}(i'))=r(\{i'\})\leq 1$, so the constraints form a laminar family and a matroid. According to~\cref{lemma:laminar-matroid-intersection}, the LP has integral extreme points. We choose one optimal integral solution of the new auxiliary rounding LP as $z^\star$ and let $\hat S=\{g(i):i\in\calF',z^\star_i=1\}$. The following are simple corollaries of~\cref{theorem:center:main}.

\begin{corollary}
	For any $\epsilon>0$, there exists a $(3\cdot4^{1/q}+\epsilon)$-approximation for \mnmat, when the norm $f$ is a \emph{$\topq{\ell}{q}$} norm, and the running time is $poly(n)/\epsilon$.
\end{corollary}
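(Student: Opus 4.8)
The plan is to mirror the proof of the first part of~\cref{theorem:center:main} essentially verbatim, swapping the cardinality constraint for a matroid constraint and invoking the machinery already assembled in this section. First I would guess $R$ (the largest connection distance in the optimum, at most $|\calF||\calC|$ choices), the thresholds $T^\star_\ell$ (polynomially many choices), and $B\in[\opt,(1+\epsilon)\opt)$ as integer powers of $1+\epsilon$ (logarithmically many, since $\opt\in[R,\rma^{1/q}R]$), so the total number of guesses is $poly(n)$. Assuming the guesses are correct, the LP relaxation is~\ref{lp:cluster-topl} with the cardinality constraint~\eqref{lp:cluster-topl5} replaced by the matroid rank inequalities $y(S)\le r(S)$ for all $S\subseteq\calF$; feasibility follows exactly as in~\cref{claim:cluster-topl}, because the optimal facility set is independent in $\calM$, hence its incidence vector lies in the matroid polytope.

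Next I would split facilities as before, run \algb on a feasible solution $(x,u,y)$ to obtain the full and partial bundles $\calU_1,\calU_2$, the profitable factors $\{n_U\}$, and the queues $\que_j$, so that~\cref{lemma:3bundle} still applies. I form the auxiliary rounding LP~\ref{lp:aux:k} but replace~\eqref{lp:aux:k4} by the matroid constraints $z(S)\le r'(S)$ for all $S\subseteq\calF'$, where $\calM'=(\calF',\calI')$ is the lifted matroid defined at the start of this section. As observed there, constraint~\eqref{lp:aux:k3} is now redundant, since applying the matroid inequality to $g^{-1}(i')$ gives $z(g^{-1}(i'))\le r'(g^{-1}(i'))=r(\{i'\})\le1$; what remains is one laminar family $\calU_1\cup\calU_2$ with integer right-hand sides together with the matroid polytope of $\calM'$, which is exactly the setting of~\cref{lemma:laminar-matroid-intersection}, so this LP has integral extreme points.

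I then check that $y$ is still feasible for this LP: the only new requirement is $y(S)\le r'(S)$ for $S\subseteq\calF'$, which holds because collapsing co-located copies gives $\sum_{i\in S}y_i\le\sum_{i'\in g(S)}y_{i'}\le r(g(S))=r'(S)$. Consequently the argument of~\cref{lemma:aux:k} goes through unchanged and the objective at $y$, hence at the integral optimum $z^\star$, is at least $m$; the set $\hat S=\{g(i):z^\star_i=1\}$ is independent in $\calM$ by the definition of $r'$, and with $\hat S_j=\{g(i):i\in U\in\que_j,z^\star_i=1\}$ the coverage requirement is met. The distance bound is then copied word for word from the proof of~\cref{theorem:center:main},~\cref{theorem:center:main1}: inequalities~\eqref{eq:center-bundle-shift},~\eqref{eq:center-topl-split} and~\eqref{eq:center:topl:inline} only use~\cref{lemma:3bundle}, constraints~\eqref{lp:cluster-topl3}--\eqref{lp:cluster-topl4} (which are untouched) and the structure of $\que_j$, so they yield $\topq{\ell}{q}(\vec{d}(j,\hat S_j))\le 3(4^{1/q}+\epsilon)\opt$ for every $j$; taking the maximum over $j\in\calC$ finishes, and the whole procedure runs in time $poly(n)/\epsilon$.

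The only point requiring care — hence the ``main obstacle'' — is the integrality of the modified auxiliary LP: one must verify that replacing $z(\calF')\le k$ by the entire matroid polytope still leaves precisely ``one laminar family with integral right-hand sides plus one matroid polytope'', so that~\cref{lemma:laminar-matroid-intersection} is applicable; in particular that~\eqref{lp:aux:k3} is subsumed and does not introduce a second laminar family competing with the matroid. Everything else — feasibility of the relaxation, the bundle construction, and the cost analysis — transfers mechanically from the cardinality-constrained case, so once this integrality is in place the corollary follows immediately.
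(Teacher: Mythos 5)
Your proof is correct and follows the paper's approach essentially verbatim: replace the cardinality constraints~\eqref{lp:cluster-topl5} and~\eqref{lp:aux:k4} by the rank inequalities of $\calM$ and the lifted matroid $\calM'$, observe that~\eqref{lp:aux:k3} is subsumed since $r'(g^{-1}(i'))\le1$, and invoke~\cref{lemma:laminar-matroid-intersection} for integrality, after which the bundle and cost analysis from~\cref{theorem:center:main} carries over unchanged. You correctly identify the integrality of the modified auxiliary LP as the one non-mechanical step, which is exactly the point the paper emphasizes.
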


\begin{corollary}
	There exists a polynomial time $O(\log\rma)$-approximation for \mnmat, when the norm $f$ is a \mon with $N$ weight vectors.
\end{corollary}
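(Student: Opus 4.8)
The plan is to run the \mnk-with-\mon algorithm of \cref{app:center:ordered} essentially verbatim, inserting the matroid modifications already spelled out for the $\topq{\ell}{q}$ case in \cref{app:matcenter:ordered}. First I would guess $R=T_1^\star$ exactly, the sparsified threshold sequence $\calT=\{T_\ell:\ell\in\pos\}$ (polynomially many candidates, by the same counting as in \cref{app:center:ordered}), and $B\in[\widetilde{\opt},2\widetilde{\opt})$ (logarithmically many candidates), where $\widetilde{\opt}$ is the optimum under the $2$-approximating \mon $\tilde f$ built from the sparsified weights $\widetilde{\calW}$. For the correct guesses I would solve the LP relaxation \ref{lp:center:ordered}, but with \eqref{lp:center:ordered5} replaced by the matroid polytope constraint $y(S)\le r(S)$ for all $S\subseteq\calF$; feasibility of an integral optimum follows exactly as for \ref{lp:center:ordered}, using $T_\ell\ge T_\ell^\star$ and the conic decomposition \eqref{eq:conic-ordered}.

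Next I would split facility locations, form the matroid $\calM'=(\calF',\calI')$ with rank $r'(S)=r(g(S))$, run \algb to obtain $\calU_1,\calU_2,\{n_U\}$ and the queues $\que_j$, and then maximize the auxiliary objective of \ref{lp:aux:k} subject to \eqref{lp:aux:k1}, \eqref{lp:aux:k2}, \eqref{lp:aux:k5} and $z(S)\le r'(S)$ for all $S\subseteq\calF'$, noting that \eqref{lp:aux:k3} is subsumed since $r'(g^{-1}(i'))=r(\{i'\})\le1$. The bundles in $\calU_1\cup\calU_2$ are pairwise disjoint, hence laminar, so \cref{lemma:laminar-matroid-intersection} yields an integral optimal vertex $z^\star$; the fractional $y$ is feasible for this LP via $\sum_{i\in S}y_i\le r(g(S))=r'(S)$ as in \cref{app:matcenter:ordered}, and its objective value is at least $m$ by the identical counting argument of \cref{lemma:aux:k}, so the integral optimum has value $\ge m$ as well. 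Setting $\hat S=\{g(i):z_i^\star=1\}$ and $\hat S_j=\{g(i):i\in U\in\que_j,\ z_i^\star=1\}$, injectivity of $g$ on $\supp(z^\star)$ (forced by the subsumed constraint $z(g^{-1}(i'))\le1$) gives $\hat S=g(\supp(z^\star))\in\calI$, while the coverage constraint $\sum_{j}|\hat S_j|\ge m$ holds.

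For the cost guarantee I would apply \cref{lemma:ordered} to each client $j$ and each $n\in[N]$ to bound $\sum_{\ell\in\pos}(\up{\ww}{n}_\ell-\up{\ww}{n}_{\nextp{\ell}})\topl(\vec d(j,\hat S_j))$ by $3B+6R\cdot\up{\ww}{n}_1+3\sum_{\ell\in\pos}(\up{\ww}{n}_\ell-\up{\ww}{n}_{\nextp{\ell}})(\ell-1)T_\ell$, translate the left side into the $\tilde f$-norm of $\vec d(j,\hat S_j)$ via \eqref{eq:conic-ordered}, and then into the true $f$-norm losing a factor $2$ by \eqref{eq:load-approx}; \cref{lemma:center-gap} bounds the remaining threshold term by $O(\log\rma)\widetilde{\opt}$, and combined with $B<2\widetilde{\opt}$, $R\cdot\max_n\{\up{\ww}{n}_1\}\le\widetilde{\opt}\le\opt$ and a maximum over $j$ and $n$, this produces the $O(\log\rma)$ ratio. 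The two stated corollaries then follow just as \mnk follows from \cref{theorem:center:main}. I expect the only non-routine point to be verifying that folding \eqref{lp:aux:k3} into the matroid still leaves the auxiliary LP in the form required by \cref{lemma:laminar-matroid-intersection} (one laminar family of bundles together with one matroid) and that swapping $z(\calF')\le k$ for $z(S)\le r'(S)$ disturbs neither feasibility of $y$ nor the $\ge m$ lower bound on the objective; everything else is a direct transcription of the cardinality-constrained argument.
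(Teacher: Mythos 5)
Your proposal is correct and follows essentially the same route as the paper's own proof in~\cref{app:matcenter:ordered}: replace the cardinality constraint~\eqref{lp:center:ordered5} by the matroid polytope of $\calM$, replace~\eqref{lp:aux:k4} by the matroid polytope of the split matroid $\calM'$ (with $r'(S)=r(g(S))$), observe that~\eqref{lp:aux:k3} is subsumed because $r'(g^{-1}(i'))\le1$, and invoke~\cref{lemma:laminar-matroid-intersection} on the laminar family of bundles together with $\calM'$ to get an integral extreme point; the coverage bound $\ge m$ and the cost analysis via~\cref{lemma:ordered} and~\cref{lemma:center-gap} carry over unchanged. The only thing you spell out beyond the paper is the explicit verification that $\hat S=g(\supp(z^\star))\in\calI$, which is a correct and worthwhile sanity check but is implicit in the paper's definition of $\calM'$.
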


\subsection{MaxOrdered-Norm Knapsack Center}\label{app:knapcenter:ordered}

In this section, we consider \mnknap for {\mon}s. Using the same process, we replace~\eqref{lp:center:ordered5} with the knapsack constraint $\sum_{i\in\calF}\wt_i\cdot y_i\leq W$, and~\eqref{lp:aux:k4} with $\sum_{i\in\calF'}\wt_i\cdot z_i\leq W$, where all copies of the same facility $i$ inherit the same weight $\wt_i$. The constraints in the auxiliary LP are now the intersection of two laminar families, namely $\calU_1\cup\calU_2$ and  $\{g^{-1}(i'):i'\in\calF\}$, as well as a knapsack constraint. Therefore, we have to drop the constraints corresponding to the second laminar family, and only consider the remaining ones, which is known to produce extreme points that have at most two fractional entries.

We choose one such optimal solution $z^\star$, let $\hat S=\{g(i):i\in\calF',z^\star_i>0\}$ and allow $\hat S$ to choose multiple copies of the same $i'\in\calF$, i.e., $\hat S$ is a multi-set. The proof for the approximation factor of $\hat S$ is the same as the $k$-cardinality case, thus omitted here. Since there are at most two fractional values in $z^\star$, $\hat S$ contains at most two such facilities, therefore the total weight of $\hat S$ is at most $\wt(\hat S)\leq \sum_{z^\star_i=1}\wt_i+2\max_{i\in\calF}\wt_i\leq W+2\max_{i\in\calF}\wt_i$.

    Next, we show how to reduce the violation of the total weight $W$. We use the standard guessing technique and try to guess exactly the set of facilities having weight $\geq\epsilon\cdot W$ in the optimal solution $S^\star$, and define it as $S_0=\{i\in S^\star: \wt_i\geq\epsilon\cdot W\}$. It is obvious that $|S_0|\leq1/\epsilon$, and guessing it takes at most $n^{1/\epsilon}$ possible tries. Assume that we now know $S_0$.
    
    Define $\calF_{<\epsilon}=\{i\in\calF:\wt_i<\epsilon\cdot  W\}$, and we further roughly guess the optimum value $\opt$. More specifically, recall that the largest distance in the optimal solution $R$ is also known to us and $\opt\geq R\cdot\max_{n\in[N]}\{\up{\ww}{n}_1\}$, and on the other hand, we have $\opt\leq \rma R\cdot\max_{n\in[N]}\{\up{\ww}{n}_1\}$, so it takes $O(\log_{1+\epsilon}\rma)$ guesses to find $B\in[\opt,(1+\epsilon)\opt)$.
    
    Next, for any $j\in\calC$ and consider the number of nearby pre-selected facilities, namely $\{i\in S_0:d(i,j)\leq R\}$. From near to far, connect $j$ to as many of them as possible, as long as this connection cost vector has $f$-norm at most $B$. Reduce $l_j,r_j$ and $m$ accordingly.
    
    In the optimal solution, the overall $f$-norm of connections between $j$ and $S_0,\calF_{<\epsilon}$ is obviously at most $\opt\leq B$, therefore by prioritizing the facilities in $S_0$, the new instance also has optimum at most $\opt\leq B$, largest connection at most $R$ and the set of facilities is now $\calF_{<\epsilon}$. Using the previous result, we can obtain an approximation $S_1$ violating the knapsack constraint $W-\wt(S_0)$ by at most $2\max_{i\in\calF_{<\epsilon}}\wt_i\leq2\epsilon\cdot W$, thus
    \[\wt\left(S_0\cup S_1\right)=\wt(S_0)+\wt(S_1)\leq \wt(S_0)+W-\wt(S_0)+2\epsilon\cdot W=(1+2\epsilon)W,\]
    and the connection cost for every $j$ can be further bounded by adding the cost of $S_0$ and $S_1$, using the triangle inequality on $f$.
    
\begin{corollary}
	For any $\epsilon>0$, there exists a $(1+3\cdot4^{1/q}+\epsilon)$-approximation for \mnknap, when the norm $f$ is a \emph{$\topq{\ell}{q}$} norm in running time $n^{O(1/\epsilon)}$. The solution violates the knapsack constraint by a multiplicative factor of $\epsilon$, and allows multiple facilities at the same location.
\end{corollary}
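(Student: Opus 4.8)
The plan is to mirror the reduction already carried out in \cref{app:knapcenter:ordered} for {\mon}s, but plug in the $\topq{\ell}{q}$ machinery of \cref{section:center:topl} in place of the ordered-norm machinery. Concretely, I would start from the relaxation \ref{lp:cluster-topl} for \mnk (after guessing $R$, the thresholds $T^\star_\ell$, and $B\in[\opt,(1+\epsilon)\opt)$ exactly as in \cref{section:center:topl}), replace the cardinality constraint \eqref{lp:cluster-topl5} with the knapsack constraint $\sum_{i\in\calF}\wt_i y_i\le W$, run \algb on a feasible fractional solution, and form the auxiliary LP \ref{lp:aux:k} with \eqref{lp:aux:k4} replaced by $\sum_{i\in\calF'}\wt_i z_i\le W$. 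The constraint system is then the intersection of the two laminar families $\calU_1\cup\calU_2$ and $\{g^{-1}(i'):i'\in\calF\}$ together with a single knapsack inequality; since a laminar polytope with one extra packing constraint need not be integral, I drop the second family and keep only $\calU_1\cup\calU_2$ plus the knapsack constraint, a system whose extreme points have at most two fractional coordinates. Rounding such an extreme point $z^\star$ and setting $\hat S=\{g(i):z^\star_i>0\}$ as a multiset, \cref{lemma:aux:k} still yields coverage $\ge m$, and the analysis behind \cref{theorem:center:main1} --- \cref{lemma:3bundle} together with \eqref{eq:center-bundle-shift} and \eqref{eq:center-topl-split} --- carries over unchanged, so every client's $\topq{\ell}{q}$ cost is at most $(3\cdot4^{1/q}+\epsilon)\opt$, while $\wt(\hat S)\le\sum_{z^\star_i=1}\wt_i+2\max_{i\in\calF}\wt_i\le W+2\max_{i\in\calF}\wt_i$.

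To turn the additive violation into a multiplicative $\epsilon$, I would apply the standard heavy-element guessing step. Guess $S_0=\{i\in S^\star:\wt_i\ge\epsilon W\}$; since $|S_0|\le1/\epsilon$ this costs $n^{O(1/\epsilon)}$ enumeration. For each client $j$, greedily connect it to its nearest facilities in $\{i\in S_0:d(i,j)\le R\}$ as long as the partial cost vector keeps $\topq{\ell}{q}$-norm $\le B$, and decrement $l_j,r_j,m$ accordingly; then run the algorithm of the previous paragraph on the residual instance over the light facilities $\calF_{<\epsilon}=\{i:\wt_i<\epsilon W\}$ with budget $W-\wt(S_0)$, obtaining $S_1$ with $\wt(S_1)\le W-\wt(S_0)+2\max_{i\in\calF_{<\epsilon}}\wt_i\le W-\wt(S_0)+2\epsilon W$. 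I output the multiset $S_0\cup S_1$: its total weight is $\le(1+2\epsilon)W$, and by the triangle inequality for the norm $f$ each client's cost is at most its cost through the pre-connection (at most $B\le(1+\epsilon)\opt$) plus its cost through $S_1$ (at most $(3\cdot4^{1/q}+\epsilon)\opt$), hence at most $(1+3\cdot4^{1/q}+O(\epsilon))\opt$; rescaling $\epsilon$ by a constant gives both the claimed approximation factor and the $(1+\epsilon)W$ bound.

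The step I expect to be the real obstacle is showing that the residual instance still has optimum at most $\opt$ (equivalently at most the guessed $B$), so that invoking the $\topq{\ell}{q}$ algorithm on it is legitimate. One must argue that, in a fixed optimal solution $(S^\star,\{S^\star_j\})$, re-prioritizing each client's connections to use the facilities of $S_0$ within distance $R$ first does not push its $\topq{\ell}{q}$-norm above $\opt$, and that $S^\star\setminus S_0\subseteq\calF_{<\epsilon}$ then furnishes a feasible completion respecting the decremented bounds $l_j,r_j,m$. Since $\topq{\ell}{q}$ is a \smn and the pre-connected entries have distance $\le R$, no larger than any entry they could displace, deleting them from both the realized and the target cost vectors while shrinking $\ell$ preserves the bound --- this is the $\topq{\ell}{q}$ analogue of the monotonicity argument used for {\mon}s in \cref{app:knapcenter:ordered}. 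The remaining pieces --- the ``at most two fractional coordinates'' fact for a laminar polytope plus one knapsack constraint, and the fact that rounding $z^\star$ down preserves the adjusted coverage objective $\ge m$ --- are routine, following the laminar-family reasoning behind \cref{lemma:aux:k}.
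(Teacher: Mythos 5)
Your proposal matches the paper's proof essentially step for step: replacing \eqref{lp:aux:k4} with a knapsack constraint, dropping the second laminar family so that extreme points have at most two fractional coordinates, guessing the heavy set $S_0$, pre-connecting clients to $S_0$ and decrementing $l_j,r_j,m$, solving on $\calF_{<\epsilon}$, and combining via the triangle inequality for $f$. The one place you flag as a potential obstacle --- that the residual instance retains optimum at most $B$ --- is in fact asserted rather tersely in the paper (``by prioritizing the facilities in $S_0$, the new instance also has optimum at most $\opt\le B$''), so your more explicit monotonicity/swapping argument is a reasonable way to fill in that gap rather than a deviation.
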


\begin{corollary}
	There exists a $O(\log\rma)$-approximation for \mnknap, when the norm $f$ is a \mon with $N$ weight vectors. The solution violates the knapsack constraint by a multiplicative factor of $\epsilon$, allows multiple facilities at the same location, and runs in time $n^{O(1/\epsilon)}$.
\end{corollary}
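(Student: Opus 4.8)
The plan is to fuse the two developments that immediately precede this corollary: the $O(\log\rma)$-approximation for \mon center clustering from \cref{app:center:ordered} (i.e.\ \cref{theorem:center:main}, \cref{theorem:center:main2}) and the knapsack-handling machinery of \cref{app:knapcenter:ordered}, which was spelled out there for $\topq{\ell}{q}$ norms. Concretely, I would start from the \mon LP relaxation \ref{lp:center:ordered}, replacing the cardinality equality \eqref{lp:center:ordered5} by the knapsack inequality $\sum_{i\in\calF}\wt_i y_i\le W$, and after splitting facilities and running \algb I would form the auxiliary rounding LP \ref{lp:aux:k} with \eqref{lp:aux:k4} likewise replaced by $\sum_{i\in\calF'}\wt_i z_i\le W$. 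Feasibility of the split optimum $y$ for the modified LP, and the guarantee that its objective is at least $m$, carry over verbatim from the proof of \cref{lemma:aux:k} together with the computation in \cref{app:knapcenter:ordered}.

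The constraint system of the modified auxiliary LP is the intersection of two laminar families, $\calU_1\cup\calU_2$ and $\{g^{-1}(i'):i'\in\calF\}$, together with a single knapsack constraint, so \cref{lemma:laminar-matroid-intersection} no longer applies. I would drop the copy-laminar family $\{g^{-1}(i')\}$ and keep only $\calU_1\cup\calU_2$ plus the knapsack constraint; the standard argument then gives that every extreme point of the relaxed polytope has at most two fractional coordinates. Rounding all fractional coordinates up yields a multiset $\hat S$ with $\wt(\hat S)\le W+2\max_{i\in\calF}\wt_i$, while the coverage requirement $m$ is still met because rounding up only increases the number of open facilities visible through each $\que_j$. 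The per-client cost analysis is exactly that of \cref{lemma:ordered} combined with \cref{lemma:center-gap}, bounding every ordered cost by $O(\log\rma)\cdot\widetilde{\opt}=O(\log\rma)\cdot\opt$; the only difference from the cardinality case is that $\hat S$ is now a multiset, which changes no distance bound.

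To turn the additive weight overflow $2\max_i\wt_i$ into a multiplicative $\epsilon W$, I would reuse the guessing step of \cref{app:knapcenter:ordered}: guess the heavy facilities $S_0=\{i\in S^\star:\wt_i\ge\epsilon W\}$ of the optimum (at most $1/\epsilon$ of them, hence $n^{O(1/\epsilon)}$ guesses), and also guess $R$ and $B\in[\widetilde{\opt},2\widetilde{\opt})$ as in \cref{app:center:ordered}. For each client $j$, greedily connect $j$ to its nearest facilities in $\{i\in S_0:d(i,j)\le R\}$ while the partial cost vector has $f$-norm at most $B$, decrementing $l_j,r_j,m$; prioritizing $S_0$ ensures the residual instance over $\calF_{<\epsilon}=\{i:\wt_i<\epsilon W\}$ with budget $W-\wt(S_0)$ still has optimum at most $B$ and largest connection at most $R$. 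Applying the rounding above to the residual instance produces $S_1$ with $\wt(S_1)\le W-\wt(S_0)+2\epsilon W$, so $\wt(S_0\cup S_1)\le(1+2\epsilon)W$, and by the triangle inequality for $f$ applied separately to each client's connections into $S_0$ and into $S_1$, every ordered cost stays $O(\log\rma)\cdot\opt$; rescaling $\epsilon$ concludes. The step I expect to be most delicate is verifying that the greedy pre-connection to $S_0$ genuinely leaves a residual \mon instance whose optimum is still controlled by $B$ (so that all sparsified-threshold guesses $\calT$ from \cref{app:center:ordered} remain valid), and that combining the two cost contributions via the triangle inequality for a \mon loses only a constant factor on top of the $O(\log\rma)$ guarantee.
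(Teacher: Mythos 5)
Your proposal follows the paper's own proof essentially step by step: modify \ref{lp:center:ordered} and the auxiliary rounding LP to use the knapsack inequality, drop the copy-laminar family so that extreme points have at most two fractional coordinates and open all positive coordinates as a multiset (incurring an additive $2\max_i\wt_i$ overflow), then guess the heavy facilities $S_0$, greedily pre-connect each client to $S_0$ within budget $B$, solve the residual instance over $\calF_{<\epsilon}$, and combine via the triangle inequality for $f$, using \cref{lemma:ordered} and \cref{lemma:center-gap} for the $O(\log\rma)$ cost bound. This matches the paper's argument in \cref{app:knapcenter:ordered}, and the delicate point you flag (that the residual instance remains a valid \mon instance with optimum at most $B$ and largest connection at most $R$, so the internal threshold guessing still applies) is handled in the paper exactly as you anticipate.
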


\begin{remark*}
Chakrabarty and Negahbani~\cite{chakrabarty2018generalized} give the first pure 3-approximation for robust knapsack center, under a general framework of down-closed family of subset constraints. In our formulation, the requirement of connecting multiple distinct open facilities for a single client is fundamentally more challenging than vanilla center clustering problems. Intuitively speaking, our current framework utilizes two combinatorial structures like laminar families, therefore either the additional knapsack constraint or the coverage constraint (the robustness requirement $m$) will cause trouble for rounding, and the dynamic programming method in~\cite{chakrabarty2018generalized} is also likely to fail. Kumar and Raichel~\cite{kumar2013revisited} consider transforming an approximate non-fault-tolerant solution into a fault-tolerant one, and introduce a constant factor during the transition, but the existence of similar results, particularly for knapsack variants of clustering problems, is yet to be understood.
\end{remark*}


\subsection{Proof of~\cref{lemma:fair:center:core}}\label{app:fair:center:core-proof}

First, we define the set of violating subsets for notation simplicity,
\[\calF_c^{\almu}(B)=\left\{S\in\calF_c(B)\left|\sum_{j\in\calC}\alpha_j\cdot \counted_B(j,S) >\mu\right.\right\},\]
and $\calE_c^\almu(B)$ similarly. We easily see that, if $(\almu)$ satisfies $\sum_{j\in\calC}\alpha_j\cdot e_j\geq\mu+1$ and $\calF_c^{\almu}(B)$ is empty, then $(\almu)\in\calQ(B)$. Therefore, to prove~\cref{lemma:fair:center:core}, for any given $(\almu)\in\mathbb{Q}_{\geq0}^{|\calC|}\times\mathbb{Q}$, we need to either certify that $\calF_c^{\almu}(B)$ is empty, or find a subset $S\in\calE_c^{\almu}(3\cdot4^{1/q}B)$. The following lemma encodes the strict inequality as an equivalent non-strict one, and is directly obtained from~\cite{anegg2020technique}, thus we omit the proof here.

\begin{lemma}(\cite{anegg2020technique})
	Let $(\almu)\in\mathbb{Q}_{\geq0}^{|\calC|}\times\mathbb{Q}$ with representation length $L$. Then one can efficiently compute an $\eta>0$ with representation length $O(L)$, such that for any $S\in\calF_c(B)$, we have 
	$\sum_{j\in\calC}\alpha_j\cdot\counted_B(j,S)>\mu$ if and only if $\sum_{j\in\calC}\alpha_j\cdot\counted_B(j,S)\geq\mu+\eta$.
\end{lemma}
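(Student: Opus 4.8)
The plan is a standard common-denominator argument in bit complexity; the only structural fact about the problem it uses is that $\counted_B(j,S)$ is a nonnegative \emph{integer} for every $j\in\calC$ and every $S$, which is immediate from its definition as the size of a subset of $S$ (its boundedness by $r_j$ will not even be needed). The goal is to exhibit an explicit granularity $\eta>0$ depending only on the encoding of $(\almu)$, below which the quantity $\sum_{j\in\calC}\alpha_j\cdot\counted_B(j,S)-\mu$ cannot be strictly positive without being at least $\eta$.

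First I would write $\alpha_j=p_j/q_j$ for $j\in\calC$ and $\mu=a/b$ with integer numerators and positive integer denominators, as read off from the input, and set $D:=b\cdot\prod_{j\in\calC}q_j$ (taking the product rather than the least common multiple keeps the computation trivially polynomial). Then I would define $\eta:=1/D$. Since the total bit-length of the numerators and denominators appearing in $(\almu)$ is at most its representation length $L$, we get $D\le 2^{L}$, so $\eta$ has representation length $O(L)$ and is computable in time polynomial in $L$.

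Next I would verify the equivalence. By construction $D\alpha_j\in\mathbb{Z}$ for every $j$ and $D\mu\in\mathbb{Z}$, so for \emph{any} set $S$ — in particular any $S\in\calF_c(B)$ — integrality of $\counted_B(j,S)$ gives
\[
D\cdot\Bigl(\sum_{j\in\calC}\alpha_j\cdot\counted_B(j,S)-\mu\Bigr)=\sum_{j\in\calC}(D\alpha_j)\,\counted_B(j,S)-D\mu\ \in\ \mathbb{Z}.
\]
Hence $\sum_{j\in\calC}\alpha_j\cdot\counted_B(j,S)-\mu$ is always an integer multiple of $\eta=1/D$; such a number is $>0$ if and only if it is $\ge\eta$, which is precisely the claimed statement. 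The companion lemma invoked in~\cref{app:fair:load:core-proof} follows from exactly the same argument, replacing $\counted_B(j,S)$ by the integers $|\sigma^{-1}(i)|$ and $\calC$ by $[M]$.

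There is no genuine obstacle here; the only point needing a moment's care is the bound $D\le 2^{O(L)}$, i.e.\ confirming that whichever precise definition of ``representation length'' of a rational tuple one adopts, it dominates the combined bit-length of all the denominators, so that $\eta^{-1}$ stays singly exponential in $L$ (and if one insists on using $\mathrm{lcm}$ in place of the product, the same bound holds since the lcm divides the product).
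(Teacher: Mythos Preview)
Your argument is correct: the integrality of $\counted_B(j,S)$ combined with a common denominator $D$ for all entries of $(\almu)$ forces $\sum_{j}\alpha_j\cdot\counted_B(j,S)-\mu$ to lie in $\frac{1}{D}\mathbb{Z}$, so $\eta=1/D$ works and has representation length $O(L)$. The paper itself omits the proof entirely and simply cites~\cite{anegg2020technique}; your write-up is precisely the standard argument that reference contains.
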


We fix such an $\eta>0$ from now on, and further define the modified polytopes
\[\calP_c^\almu(\rbt) = \left\{(x,u,y)\in\text{\ref{lp:fair:center:topl}}\left|\sum_{j\in\calC}\alpha_j\cdot u_j\geq\mu+\eta\right.\right\}.\]

By definition, if $(\almu)$ satisfies $\sum_{j\in\calC}\alpha_j\cdot e_j\geq\mu+1$ but $(\almu)\notin\calQ_c(B)$, then there exists some subset $S\in\calF_c^\almu(B)$, hence $S$ induces an integral solution to $\calP_c^\almu(\rbt)$ for \emph{some} $(\rbt)$, by letting $u_j=\counted_B(j,S)$ and $\calP_c^\almu(\rbt)$ is thus non-empty. The contrapositive of this observation tells us the following.
\begin{observation}\label{observation:fair:center}
Fix $B\geq0$. If $(\almu)$ satisfies $\sum_{j\in\calC}\alpha_j\cdot e_j\geq\mu+1$ and $\calP_c^\almu(\rbt)$ is empty \emph{for any} $(\rbt)$ that satisfies~\eqref{eq:fair:center:guess}, then $(\almu)\in\calQ_c(B)$.
\end{observation}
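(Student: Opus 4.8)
The plan is to read the statement off as the contrapositive of the assertion made in the paragraph immediately above it, so all the content lies in nailing down that assertion: if $(\almu)\in\mathbb{R}_{\ge0}^{|\calC|}\times\mathbb{R}$ satisfies $\sum_{j\in\calC}\alpha_j e_j\ge\mu+1$ and $(\almu)\notin\calQ_c(B)$, then $\calP_c^\almu(\rbt)$ is non-empty for \emph{some} $(\rbt)$ obeying~\eqref{eq:fair:center:guess}. First I would observe that, since the first defining inequality of~\ref{lp:fair:center:q} (namely $\sum_{j\in\calC}\alpha_j e_j\ge\mu+1$) already holds by hypothesis, the only way for $(\almu)$ to lie outside~\ref{lp:fair:center:q} is that some $S\in\calF_c(B)$ violates the second family of inequalities, i.e.\ $\sum_{j\in\calC}\alpha_j\cdot\counted_B(j,S)>\mu$; in the notation of this section, $S\in\calF_c^\almu(B)$.

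Next I would turn such an $S$ into an integral point of some $\calP_c^\almu(\rbt)$. By definition of $\calF_c(B)$ there is an integral triple $(x,u,y)$ feasible for~\ref{lp:fair:center:topl} for some $(\rbt)$ satisfying~\eqref{eq:fair:center:guess} with $y_i=1\Leftrightarrow i\in S$. I would then re-route: for each client $j$ set $u_j:=\counted_B(j,S)$ and connect $j$ to its $\counted_B(j,S)$ nearest facilities of $S$ (falling back on the connections of the original solution where needed to keep $u_j\ge l_j$). The verification is that this re-routed triple still satisfies~\ref{lp:fair:center:topl} for a suitable $(\rbt)$: the cardinality and threshold constraints~\eqref{lp:fair:center:topl2}--\eqref{lp:fair:center:topl3} hold because a nearest-first connection set of $f$-cost at most a constant times $B$ has its $\ell$-th largest distance bounded, which is exactly what those constraints encode, and because the guessed thresholds range over all sequences consistent with~\eqref{eq:fair:center:guess}. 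Once feasibility is in hand, $\sum_{j\in\calC}\alpha_j u_j\ge\sum_{j\in\calC}\alpha_j\cdot\counted_B(j,S)>\mu$, and the quantization lemma borrowed from~\cite{anegg2020technique} (the one producing $\eta>0$) promotes this strict inequality to $\sum_{j\in\calC}\alpha_j u_j\ge\mu+\eta$, so the re-routed triple lies in $\calP_c^\almu(\rbt)$, which is therefore non-empty. Taking the contrapositive now yields precisely the statement: if $\sum_{j\in\calC}\alpha_j e_j\ge\mu+1$ and $\calP_c^\almu(\rbt)$ is empty for \emph{every} $(\rbt)$ obeying~\eqref{eq:fair:center:guess}, then $(\almu)\in\calQ_c(B)$.

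I expect the re-routing/feasibility step to be the main obstacle. The two subtle points are (i) that $\counted_B(j,S)\ge l_j$, which must be extracted from the fact that membership $S\in\calF_c(B)$ already certifies at least $l_j$ connections of bounded cost --- possibly only after paying the constant slack that the factor $3\cdot4^{1/q}$ leaves --- and (ii) that a nearest-first connection pattern realizing $\counted_B(j,S)$ respects the discretized constraints~\eqref{lp:fair:center:topl2}--\eqref{lp:fair:center:topl3} for one of the guessed threshold sequences; if this direct re-routing turns out awkward, an alternative is to leave $x$ untouched and merely argue that the witnessing integral solution may be taken with $u_j\ge\counted_B(j,S)$. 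Everything else --- invoking the first inequality of~\ref{lp:fair:center:q}, the $\eta$-quantization lemma, and the final contraposition --- is bookkeeping and mirrors the corresponding development for \fmnms in~\cref{app:fair:load:core-proof}. The statement is then used in the (omitted) proof of~\cref{lemma:fair:center:core} as the branch that certifies $(\almu)\in\calQ_c(B)$ whenever none of the polytopes $\calP_c^\almu(\rbt)$ can be rounded into a violating set.
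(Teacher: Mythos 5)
Your argument follows the paper's own proof almost verbatim: take the $S\in\calF_c^\almu(B)$ whose existence is guaranteed once $(\almu)\notin\calQ_c(B)$, set $u_j:=\counted_B(j,S)$, observe this induces an integral solution to $\calP_c^\almu(\rbt)$ for \emph{some} $(\rbt)$ (the strictness being upgraded to a $\geq\mu+\eta$ inequality by the quantization lemma), and conclude by contraposition. The paper dispatches this in a single sentence and does not say any more than you do here, so in that sense the proposal is faithful.

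However, your concern (i) is not a cosmetic worry --- it is a genuine subtlety that the paper's one-line argument also leaves unaddressed. Membership $S\in\calF_c(B)$ only certifies an integral witness $(x,u,y)$ of \ref{lp:fair:center:topl}; constraints \eqref{lp:fair:center:topl2}--\eqref{lp:fair:center:topl3} together with \eqref{eq:fair:center:guess} bound the $\topq{\ell}{q}$ norm of $j$'s witness connections only by $\bigl(B^q+\ell T_\ell^{\star q}\bigr)^{1/q}\leq 2^{1/q}B$, not by $B$. Thus the greedy count $\counted_B(j,S)$ (which insists on $f$-norm at most $B$, not $2^{1/q}B$) is \emph{not} a priori at least $l_j$, and if $\counted_B(j,S)<l_j$ for some $j$, then $u_j:=\counted_B(j,S)$ violates the box constraint $u_j\in[l_j,r_j]$ of \ref{lp:basic-cluster}. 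The inclusion the paper proves is only $\calE_c(B)\subseteq\calF_c(B)$, which runs in the wrong direction here. Your proposed workaround of ``paying the constant slack'' would change the polytope $\calQ_c(B)$ (or the lower bound in the cost constraint) in a way the paper does not carry out, and your alternative (keep the witness's $u_j$ and lower-bound it by $\counted_B(j,S)$) also fails, since the witness's $u_j$ has no reason to dominate $\counted_B(j,S)$. In short, you have reproduced the paper's approach and, to your credit, have pinpointed exactly the step where the paper's argument is incomplete; resolving it requires either re-defining $\counted$ with a relaxed radius or carrying a matching slack through $\calQ_c$, neither of which is in the paper.
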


\begin{lemma}\label{lemma:center:separating}
	Let $(\almu)\in\mathbb{Q}_{\geq0}^{|\calC|}\times\mathbb{Q}$ and $(\rbt)$ satisfies~\eqref{eq:fair:center:guess}. There is a polynomial-time algorithm $\calA_0$ that, either returns $S\in\calE_c^\almu(3\cdot4^{1/q}B)$ or certifies $\calP_c^\almu(\rbt)$ is empty.
\end{lemma}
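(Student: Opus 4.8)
The plan is to mirror the proof of~\cref{lemma:load:separating} on the clustering side, combining the bundle-based rounding of~\cref{theorem:center:main} with an $\alpha$-weighted variant of the auxiliary LP~\ref{lp:aux:k}. First, $\calA_0$ solves the LP feasibility problem for $\calP_c^\almu(\rbt)$; if it is empty we certify this and return. Otherwise we fix any feasible $(x,u,y)$, apply the splitting so that $x_{ij}\in\{0,y_i\}$, and run \algb (\cref{algorithm:bundle}) on $(\calC,\calF',(x,u,y))$ to obtain full bundles $\calU_1$, partial bundles $\calU_2$ with profitable factors $\{n_U\}$, and queues $\queue_j$, exactly as in~\cref{section:center:topl}.

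The core new ingredient is a weighted auxiliary LP. For $U\in\calU_2$ I would set its \emph{weighted profitable factor} $\beta_U:=\sum_{j\in\calC:U\in\queue_j}\alpha_j$ (maintained inside \algb alongside $n_U$), replace the objective of~\ref{lp:aux:k} by $\sum_{U\in\calU_2}\beta_U\,z(U)+\sum_{j\in\calC}\alpha_j\sum_{U\in\queue_j}\mathbbm{1}[U\in\calU_1]$, and keep the constraints~\eqref{lp:aux:k1}--\eqref{lp:aux:k5} untouched. Since the constraint matrix is still the union of the two laminar families $\calU_1\cup\calU_2$ and $\{g^{-1}(i'):i'\in\calF\}\cup\{\calF'\}$, the argument of~\cref{lemma:aux:k} (or~\cref{lemma:laminar-matroid-intersection}) still yields integral extreme points. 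As in~\cref{lemma:aux:k}, $z=y$ is feasible, and rearranging its objective (using $y(U)=1$ on full bundles) gives $\sum_{j\in\calC}\alpha_j\sum_{U\in\queue_j}y(U)\ge\sum_{j\in\calC}\alpha_j u_j\ge\mu+\eta$, where the last step is the defining inequality of $\calP_c^\almu(\rbt)$. Hence an integral optimum $z^\star$ has objective $>\mu$; I then set $\hat S=\{g(i):z^\star_i=1\}$ and $\hat S_j=\{g(i):i\in U\in\queue_j,\,z^\star_i=1\}$.

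It remains to verify $\hat S\in\calE_c^\almu(3\cdot4^{1/q}B)$. Because $z^\star$ is integral, each full bundle of $\queue_j$ opens exactly one facility and each opened partial bundle of $\queue_j$ opens one, and~\eqref{lp:aux:k3} forces these to lie at distinct locations; thus $|\hat S_j|$ equals the contribution of $\queue_j$ to the objective, so $\sum_{j\in\calC}\alpha_j|\hat S_j|$ is at least the objective value, which is $>\mu$. Integrality and feasibility give $l_j\le\floor{u_j}\le|\hat S_j|\le\ceil{u_j}\le r_j$ and $|\hat S|\le z^\star(\calF')\le k$. For the radius bound I would re-run verbatim the distance analysis leading to~\eqref{eq:center:topl:inline} — the open facilities of $j$ sit in $U_{j,1},\dots,U_{j,\floor{u_j}}$ and possibly $U_{j,\ceil{u_j}}$, their distances are $\le 3\dmax(j,V_{j,t})$ and $\le 3R$ by~\cref{lemma:3bundle}, and the shift-and-split uses~\eqref{lp:fair:center:topl2} and~\eqref{lp:fair:center:topl3} — but now bounding $R\le B$ and $\ell\cdot T^{\star q}_\ell\le B^q$ via~\eqref{eq:fair:center:guess} in place of $\opt$, obtaining $\topq{\ell}{q}(\vec{d}(j,\hat S_j))\le\sqrt[q]{2(3R)^q+3^q(B^q+\ell\cdot T^{\star q}_\ell)}\le 3\cdot4^{1/q}B$. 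Consequently $\counted_{3\cdot4^{1/q}B}(j,\hat S)\ge|\hat S_j|\in[l_j,r_j]$, so $\hat S\in\calE_c(3\cdot4^{1/q}B)$, and $\sum_{j\in\calC}\alpha_j\cdot\counted_{3\cdot4^{1/q}B}(j,\hat S)\ge\sum_{j\in\calC}\alpha_j|\hat S_j|>\mu$, i.e.\ $\hat S\in\calE_c^\almu(3\cdot4^{1/q}B)$, which $\calA_0$ returns.

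The main obstacle I anticipate is the weighted auxiliary LP: one must check that folding the $\alpha_j$'s into the objective leaves the two laminar families (hence integrality) intact, and — more delicately — that the weighted objective of the fractional $y$ still telescopes down to $\sum_j\alpha_j u_j$, which hinges on \algb maintaining $\beta_U$ in lockstep with $n_U$ and on the partial bundles being added in non-increasing mass order, exactly as in~\cref{lemma:aux:k}. Everything downstream is a mechanical re-derivation of the $\topq{\ell}{q}$-center bound~\eqref{eq:center:topl:inline} with $B$-based constants from~\eqref{eq:fair:center:guess} rather than $\opt$-based ones, so I expect no further difficulty. This $\calA_0$ is then plugged into the round-and-cut loop exactly as in the proof of~\cref{lemma:fair:load:core}.
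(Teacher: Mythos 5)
Your proposal is correct and essentially coincides with the paper's own proof: the same bundle construction via \algb, the same $\alpha$-weighted auxiliary LP with $\beta_U=\sum_{j:U\in\queue_j}\alpha_j$ in the objective, the same laminar-family integrality argument showing an integral optimum $z^\star$ of value $\ge\mu+\eta$, and the same re-run of the \eqref{eq:center:topl:inline} distance bound with \eqref{eq:fair:center:guess}. The only difference is cosmetic: the paper dispatches the final membership check with a one-line reference to the proof of \cref{theorem:center:main}, whereas you spell out the verification of $|\hat S_j|\in[l_j,r_j]$, the weighted objective telescoping to $\sum_j\alpha_j u_j$, and $\counted_{3\cdot4^{1/q}B}(j,\hat S)\ge|\hat S_j|$ — all correct and faithful to the paper's intent.
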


\begin{proof} $\calA_0$ simply returns if $\calP_c^\almu(\rbt)$ is empty. Assume otherwise and we have some $(x,u,y)\in\calP_c^\almu(\rbt)$, where $(\rbt)$ satisfies~\eqref{eq:fair:center:guess}. We run $\algb$ and obtain the output bundles. Specifically, $\calU_1$ is the set of full bundles, and $\calU_2$ is the set of partial bundles. All bundles are pair-wise disjoint, though may contain different copies of the same original facility location.

In a similar fashion to~\cref{lemma:aux:k}, we define $\beta_U=\sum_{j\in \calC:U\in\queue_j}\alpha_{j}$ \emph{only for} $U\in\calU_2$, and claim the inequality
\[\sum_{j\in\calU_2}\beta_U\cdot y(U)+\sum_{j\in\calC}\sum_{U\in\queue_j}\alpha_j\cdot\mathbbm{1}[U\in\calU_1]\geq \mu+\eta.\]
Indeed, the inequality above can be easily verified by considering $(x,u,y)$ satisfies $\sum_{j\in\calC}\alpha_j\cdot u_j\geq\mu+\eta$, and the fact that the contribution on the LHS is at least $\sum_{j}\alpha_j\cdot u_j$.

Now we consider the following auxiliary LP derived from $\calP_c^\almu(\rbt)$ and $(x,u,y)$,
	\begin{alignat*}{2}
		\text{max\quad}&& \sum_{j\in\calU_2}\beta_U\cdot z(U)&+\sum_{j\in\calC}\sum_{U\in\queue_j}\alpha_j\cdot\mathbbm{1}[U\in\calU_1]\tag{$\mathrm{ACl}^\almu(\rbt)$}\label{lp:aux:fair}\\
		\text{s.t.\quad}&& z(U)&=1\quad\quad\forall U\in\calU_1\\
		&& z(U)&\leq1\quad\quad\forall U\in\calU_2\\
		&& z(g^{-1}(i'))&\leq1\quad\quad\forall i'\in\calF\\
		&& z(\calF') &\leq k\\
		&& z_i&\geq0\quad\quad\forall i\in\calF',
	\end{alignat*}
where we recall that $\calF'$ is the set of extended facility locations, after making copies, and $g^{-1}:\calF\rightarrow 2^{\calF'}$ takes the original copy $i'$ to the set of all its copies. Since $y$ satisfies all the constraints, and the constraints form two laminar families on $\calF'$, the optimal solution $z^\star$ is integral with the objective value associated with $z^\star$ being at least $\mu+\eta$.
	
Finally, we define $S=\{g(i):z^\star_i=1\}$, $u^\star_j= z^\star(\queue_j)$ and the assignment variables $x^\star_{ij}$ accordingly. It is not hard to check that $(x^\star,u^\star,z^\star)$ is an integral solution corresponding to $S\in\calE_c^\almu(3\cdot4^{1/q}B)$, using a proof identical to~\cref{theorem:center:main}.
\end{proof}

\begin{proof}[Proof of~\cref{lemma:fair:center:core}] We use the algorithm $\calA_0$ in~\cref{lemma:center:separating} as part of the separation hyperplane oracle, in order to check whether $\calP_c^\almu(\rbt)$ is empty for any feasible $(\rbt)$ . Fix $B,(\almu)$ and iterate through all possible $(\rbt)$ that satisfies~\eqref{eq:fair:center:guess}, and start with any $(x,u,y)$ in each iteration: 
\begin{itemize}
    \item Whenever the algorithm $\calA_0$ is called and returns some $S\in\calE_c^\almu(3\cdot4^{1/q}B)$, then $(\almu)$ violates $\calQ_c(3\cdot4^{1/q}B)$ and we simply return $S$;
    \item Assume $\calA_0$ verifies the emptiness of $\calP_c^\almu(\rbt)$ for all $(\rbt)$, in polynomial time. Using~\cref{observation:fair:center}, one has $(\almu)\in\calQ_c(B)$ verified.\qedhere
\end{itemize} 
\end{proof}

\end{document}